\newcommand{\lyxaddress}[1]{
\par {\raggedright #1
\vspace{1.4em}
\noindent\par}
}
\theoremstyle{plain}
\newtheorem{thm}{\protect\theoremname}
  \theoremstyle{definition}
  \newtheorem{defn}[thm]{\protect\definitionname}
  \theoremstyle{plain}
  \newtheorem{lem}[thm]{\protect\lemmaname}
  \theoremstyle{plain}
  \newtheorem{conjecture}[thm]{\protect\conjecturename}
\definecolor{shadecolor}{gray}{0.9}
  \providecommand{\conjecturename}{Conjecture}
  \providecommand{\definitionname}{Definition}
  \providecommand{\lemmaname}{Lemma}
\providecommand{\theoremname}{Theorem}
\begin{document}

\title{Radiation reaction on a Brownian scalar electron \\
in high-intensity fields}

\author{{\Large{}Keita Seto}\thanks{keita.seto@eli-np.ro}}
\maketitle

\lyxaddress{\begin{center}
Extreme Light Infrastructure \textendash{} Nuclear Physics (ELI-NP)
/ \\
Horia Hulubei National Institute for R\&D in Physics and Nuclear Engineering
(IFIN-HH), \\
30 Reactorului St., Bucharest-Magurele, jud. Ilfov, P.O.B. MG-6, RO-077125,
Romania.
\par\end{center}}
\begin{abstract}
Radiation reaction against a relativistic electron is of critical
importance since the experiment to check this ``quantumness'' becomes
possible soon with an extremely high-intensity laser beam. However,
there is a fundamental mathematical quest to apply any laser profiles
to laser focusing and superposition beyond the Furry picture of its
usual method by a plane wave. To give the apparent meaning of $q(\chi)$
the quantumness factor with respect to a radiation process is absent.
Thus for resolving the above questions, we propose stochastic quantization
of the classical radiation reaction model for any laser field profiles,
via the construction of the relativistic Brownian kinematics with
the dynamics of a scalar electron and the Maxwell equation with a
current by a Brownian quanta. This is the first proposal of the coupling
system between a relativistic Brownian quanta and fields in Nelson's
stochastic quantization. Therefore, we can derive the radiation field
by its Maxwell equation, too. This provides us the fact that $q(\chi)$
produced by QED is regarded as $\mathscr{P}(\varOmega_{\tau}^{\mathrm{ave}})$
of an existence probability such that a scalar electron stay on its
average trajectory.
\end{abstract}
\pagestyle{fancy}
\lhead{}\newpage{}\tableofcontents

\section{Introduction\label{Intro}}

In this paper, we investigate {\bf ``Radiation reaction (RR)''} acting
on a scalar electron by stochastic quantization, namely, quantum dynamics
of a radiating quanta with its Brownian and relativistic kinematics
({\bf Fig.\ref{FigRRonBM}}). Then, we clarify the fact that $q(\chi)$
the quantumness of RR is $\mathscr{P}(\varOmega_{\tau}^{\mathrm{ave}})$
an existence probability of a scalar electron given by this Brownian
kinematics. 

RR is expected to be fully investigated experimentally \cite{RA5,Sarri(2014),Cole(2017)}
by collisions of a high-intensity laser \cite{ELI-NP,ELI-NP(2017),GIST(2017)}
and a high-energy electron soon. This mechanism is regarded as the
higher-order correction or the almost same effect of a non-linear
Compton scattering \cite{Brown-Kibble(1964),Nikishov(1964a)),Nikishov(1964b)}
evaluated by the Furry picture \cite{Furry(1951)} in the recent laser-plasma
physics. Its radiation formula including {\bf Q}uantum {\bf E}lectro{\bf d}ynamics
(QED) or scalar QED effects is derived from this non-linear Compton
scattering \cite{A.Sokolov(1986),I.Sokolov(2010),I.Sokolov(2011a),Berestetskii-Lifshitz-Pitaevskii},
namely in QED, 
\begin{equation}
\frac{dW_{\mathrm{Quantum}}}{dt}=q(\chi)\times\frac{dW_{\mathrm{classical}}}{dt}\label{eq: Rad-formula-sQED}
\end{equation}
assisted by its quantumness $q(\chi)$;
\begin{align}
q(\chi) & =\frac{9\sqrt{3}}{8\pi}\int_{0}^{\chi^{-1}}dr\,r\int_{\frac{r}{1-\chi r}}^{\infty}dr'K_{5/3}(r')+\frac{9\sqrt{3}}{8\pi}\int_{0}^{\chi^{-1}}dr\,\frac{\chi^{2}r^{3}}{1-\chi r}K_{2/3}\left(\frac{r}{1-\chi r}\right),\label{eq: q}
\end{align}
\begin{align}
\chi & \coloneqq\frac{3}{2}\frac{\hbar}{m_{0}^{2}c^{3}}\sqrt{-g_{\mu\nu}(-eF_{\mathrm{ex}}^{\mu\alpha}v_{\alpha})(-eF_{\mathrm{ex}}^{\nu\beta}v_{\beta})}\nonumber \\
 & \propto(\mathrm{electron\,energy})\times\sqrt{\mathrm{laser\,intensity}}\,.
\end{align}
The case $q(\chi)=1$ for $\chi\sim0$ is regarded as no quantum correction.
On the other hand if one uses an extremely high-intensity laser such
as the $10\mathrm{PW}$ laser of ELI-NP \cite{RA5,ELI-NP,ELI-NP(2017)},
the quantum correction $q(\chi)=0.3$ appears for the laser intensity
of $10^{22}\mathrm{W}/\mathrm{cm}^{2}$ and an electron energy of
$600\mathrm{MeV}$ \cite{RA5,Seto(2015)} (see {\bf Fig.\ref{q_chi}}).
Therefore, the investigation of $q(\chi)$ w.r.t.\ RR links to the
interest in high-intensity laser science. However, Eq.(\ref{eq: Rad-formula-sQED})
is derived by the Furry picture to employ a laser profile of a plane
wave \cite{Zakowicz(2005),Boca-Florescu(2010)}. Hence, there are
several proposals for the non-plane wave condition of laser focusing
and superposition \cite{Piazza(2014),Piazza(2017a),Piazza(2017b)}. 

Anyway, the effective regime of RR locates at quantized, relativistic
and high-intensity field interactions marked by ``the star'' in {\bf Fig.\ref{figHIFP}}.
This Furry picture is the way from relativistic quantum dynamics.
The one from classical, relativistic and high-intensity regime may
be another candidate. Let us consider this second candidate for any
laser profiles. 
\begin{figure}
\noindent \centering{} \includegraphics[scale=0.25]{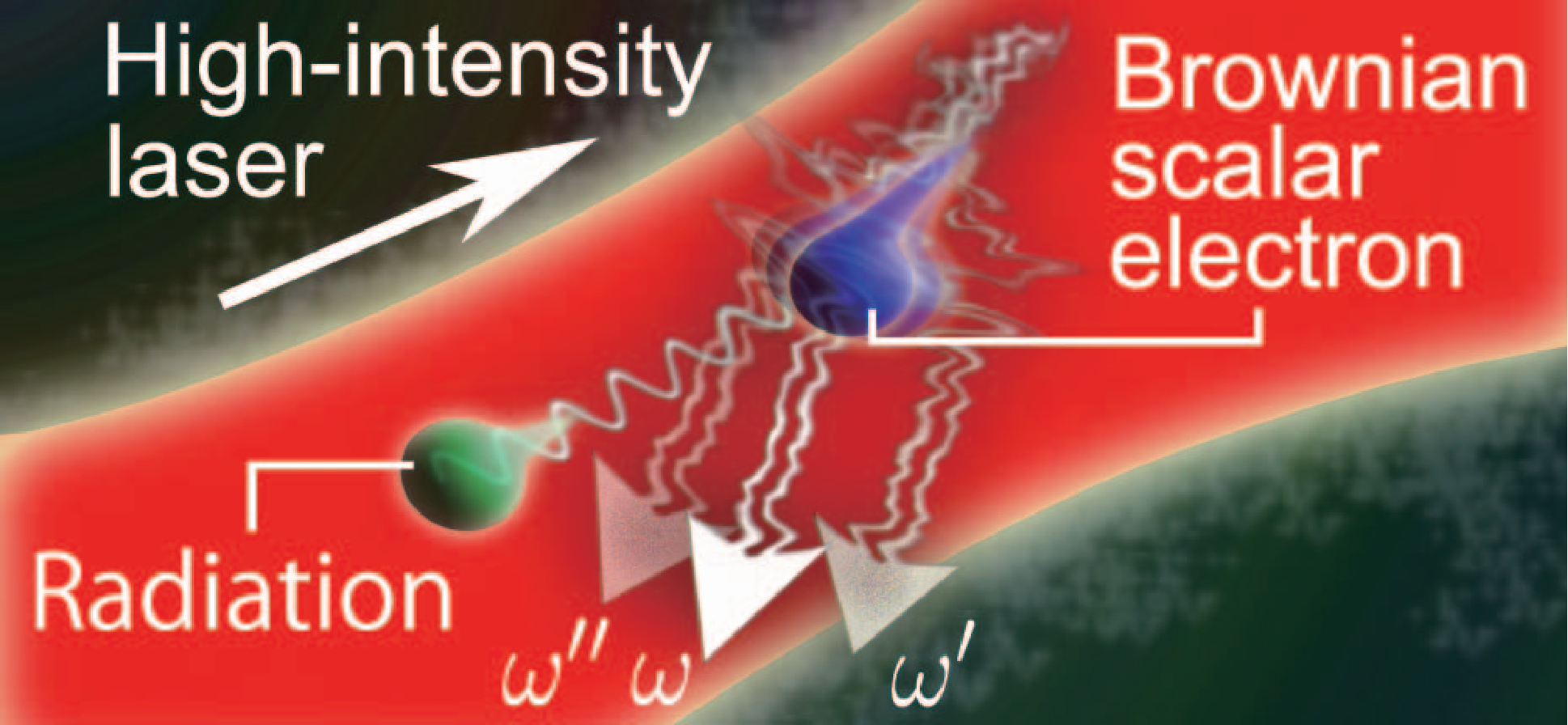}\caption{\label{FigRRonBM}  RR on a Brownian scalar electron. The signatures
of $\omega$, $\omega'$ and $\omega''$ denote sample paths of a
scalar electron due to its randomness. By solving the Maxwell's equation,
RR in quantum dynamics is derived like in classical dynamics. }
 
\end{figure}
\begin{figure}
\noindent \centering{}\includegraphics[scale=1.2]{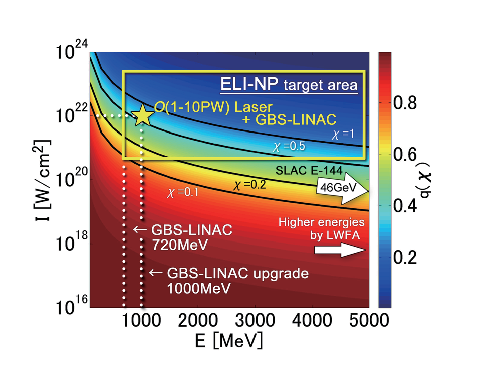}\caption{\label{q_chi}Quantumness of radiation, $q(\chi)$. This is the plot
of Eq.(\ref{eq: q}) w.r.t.\ energies of an electron and laser intensities.
When we choose the combination between a laser intensity of $O(10^{22}\mathrm{W}/\mathrm{cm}^{2})$
and an electron energy of $O(1\mathrm{GeV})$, the factor $q(\chi)$
reaches $0.3$ which is the feasible regime produced by the ELI-NP
facility \cite{RA5}. The SLAC E-144 included the experiments of the
non-linear Compton scatterings by the combination of\textcolor{red}{{}
}$O(10^{18}\mathrm{W}/\mathrm{cm}^{2}+46\mathrm{GeV})$ \cite{SLAC}. }
\end{figure}
\begin{figure}
\centering{}\includegraphics[scale=0.4]{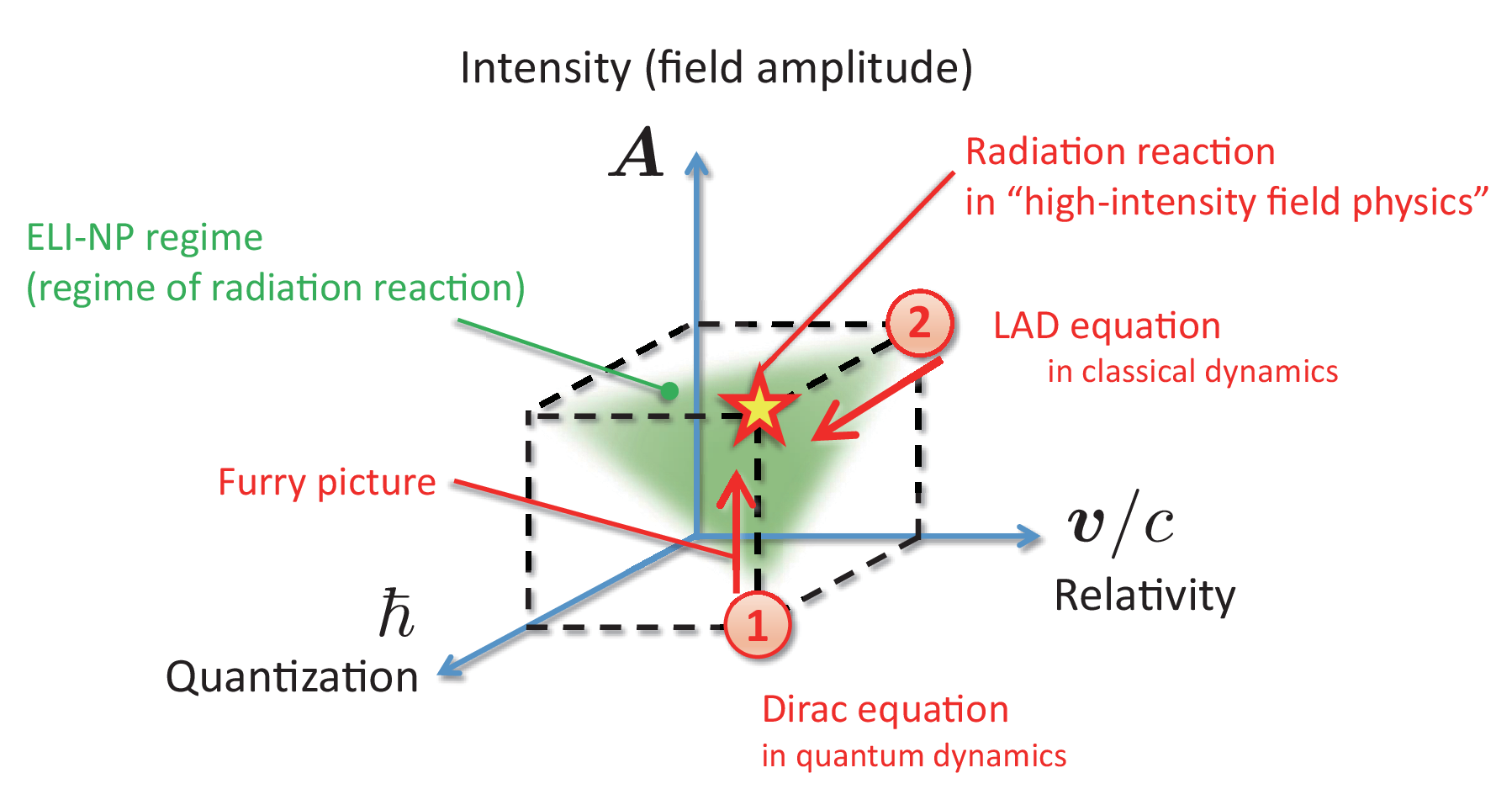}\caption{\label{figHIFP}The physical regime of a scalar electron with RR.
The point at ``the star'' is the regime of high-intensity field physics.
The Furry picture is the first way from relativistic quantum dynamics
to the point of ``the star'' where $q(\chi)$ is not unity. On the
other hand, the quantization after reaching high-intensity ``classical''
dynamics is its second candidate.}
\end{figure}

For the second idea, we should refer classical RR model. RR has been
treated by the Lorentz-Abraham-Dirac (LAD) equation as its standard
model \cite{Dirac(1938)}:
\begin{align}
m_{0}\frac{dv^{\mu}}{d\tau} & =-e(F_{\mathrm{ex}}^{\mu\nu}+F_{\mathrm{LAD}}^{\mu\nu})v_{\nu}\label{eq: LAD eq}
\end{align}
\begin{align}
F_{\mathrm{LAD}}^{\mu\nu}(x) & =-\frac{m_{0}\tau_{0}}{ec^{2}}\left[\frac{d^{3}x^{\mu}}{d\tau^{3}}\cdot\frac{dx^{\nu}}{d\tau}-\frac{d^{3}x^{\nu}}{d\tau^{3}}\cdot\frac{dx^{\mu}}{d\tau}\right]\label{eq: LAD field}
\end{align}
With the metric $g=(+,-,-,-)$ and $\tau_{0}\coloneqq e^{2}/6\pi\varepsilon_{0}m_{0}c^{3}$.
The force $-eF_{\mathrm{LAD}}^{\mu\nu}v_{\nu}$ represents the interaction
of RR acting on a scalar electron. When we can find the quantization
of Eq.(\ref{eq: LAD eq}) with Eq.(\ref{eq: LAD field}), it is not
known how this affects Eq.(\ref{eq: Rad-formula-sQED}), especially,
$q(\chi)$ by any laser field profiles. We are then to study quantum
dynamics of RR by adopting Nelson's stochastic quantization \cite{Nelson(1966a),Nelson(2001_book)}
as the second candidate in {\bf Fig.\ref{figHIFP}}, which can draw
a real trajectory of a quanta by a Brownian motion. In addition, that
achievement regarded as one of the few application of Nelson's stochastic
quantization.

However, its well-defined relativistic version and the Maxwell equation
have been absent. Thus, we resolve the following issues for RR in
this article. For $\{\hat{x}(\tau,\omega)\}_{\tau\in\mathbb{R}}$
a Brownian trajectory, $\langle\mathrm{A}\rangle$ the set of the
relativistic Brownian kinematics and the dynamics for a scalar electron
\begin{align}
d_{\pm}\hat{x}^{\mu}(\tau,\omega) & =\mathcal{V}_{\pm}^{\mu}(\hat{x}(\tau,\omega))d\tau+\lambda\times dW_{\pm}^{\mu}(\tau,\omega)\label{eq: intro-kinematics}
\end{align}
\begin{align}
m_{0}\mathfrak{D_{\tau}}\mathcal{V}^{\mu}(\hat{x}(\tau,\omega)) & =-e\mathcal{\hat{V}}_{\nu}(\hat{x}(\tau,\omega))F^{\mu\nu}(\hat{x}(\tau,\omega))\label{eq: intro-KG}
\end{align}
is regarded as the Klein-{\allowbreak}Gordon ({\bf KG}) equation
with $\mathcal{V}^{\mu}(x)\coloneqq1/m_{0}\times\left[i\hbar\partial{}^{\mu}\ln\phi(x)+eA{}^{\mu}(x)\right]$
given by its wave function $\phi$ and $\{W_{\pm}(\tau,\omega)\}_{\tau\in\mathbb{R}}$
of Wiener processes {[}{\bf Sect.\ref{kinematics_dynamics4D}}{]}.
And also $\langle\mathrm{B}\rangle$ the Maxwell equation with a current
of a Brownian scalar electron {[}{\bf Sect.\ref{Maxwell}}{]}
\begin{align}
\partial_{\mu}[F^{\mu\nu}(x)+\delta f^{\mu\nu}(x)] & =\mu_{0}\times\mathbb{E}\left\llbracket -ec\int_{\mathbb{R}}d\tau\,\mathrm{Re}\left\{ \mathcal{V}^{\nu}(x)\right\} \delta^{4}(x-\hat{x}(\tau,\bullet))\right\rrbracket \label{eq: intro-Maxwell}
\end{align}
is constructed. We propose $\langle\mathrm{C}\rangle$ an action integral
to give the above Eq.(\ref{eq: intro-KG}) and Eq.(\ref{eq: intro-Maxwell})
in this model, too {[}{\bf Sect.\ref{Action}}{]}. In the fact, the
consistent system of Eqs.(\ref{eq: intro-kinematics}-\ref{eq: intro-KG})
and Eq.(\ref{eq: intro-Maxwell}) has been absent after Nelson's first
article \cite{Nelson(1966a)}. Especially, the charge current in Eq.(\ref{eq: intro-Maxwell})
adapting stochastic quantization has not been discovered for a long
time. So, this is the first proposal for the coupling system between
a relativistic Brownian quanta and fields. To describe its interaction
is, therefore, a new result. Namely by solving Eq.(\ref{eq: intro-Maxwell})
in {\bf Sect.\ref{RR}}, we derive $\langle\mathrm{D}\rangle$ RR
in quantum dynamics
\begin{align}
m_{0}\mathfrak{D_{\tau}}\mathcal{V}^{\mu}(\hat{x}(\tau,\omega))) & =-eF_{\mathrm{ex}}^{\mu\nu}(\hat{x}(\tau,\omega)\mathcal{V}_{\nu}(\hat{x}(\tau,\omega))-e\mathfrak{F}^{\mu\nu}(\hat{x}(\tau,\omega))\mathcal{V}_{\nu}(\hat{x}(\tau,\omega))\label{eq: intro-stRR}
\end{align}
\begin{align}
\mathfrak{F}^{\mu\nu}(\hat{x}(\tau,\omega)) & =-\frac{m_{0}\tau_{0}}{ec^{2}}\int_{\varOmega_{(\tau,\omega)}}d\mathscr{P}(\omega')\left[\begin{gathered}\dot{a}^{\mu}(\hat{x}(\tau,\omega'))\cdot\mathrm{Re}\{\mathcal{V}^{\nu}(\hat{x}(\tau,\omega'))\}\\
-\dot{a}^{\nu}(\hat{x}(\tau,\omega'))\cdot\mathrm{Re}\{\mathcal{V}^{\mu}(\hat{x}(\tau,\omega'))\}
\end{gathered}
\right]\label{eq: intro-field}
\end{align}
with $\dot{a}^{\mu}(x)\approx\mathrm{Re}\{\mathfrak{D_{\tau}}^{2}\mathcal{V}^{\mu}(x)\}$,
i.e., the quantization of the LAD equation (\ref{eq: LAD eq}-\ref{eq: LAD field}).
The readers can find the similarity between Eqs.(\ref{eq: LAD eq}-\ref{eq: LAD field})
and Eqs.(\ref{eq: intro-stRR}-\ref{eq: intro-field}) by the comparison.
In the fact, $\lim_{\hbar\rightarrow0}\mathfrak{F}=F_{\mathrm{LAD}}$
is ensured. Thus, Eqs.(\ref{eq: intro-stRR}-\ref{eq: intro-field})
becomes Eqs.(\ref{eq: LAD eq}-\ref{eq: LAD field}) in the classical
limit. This easy comparison is the reason why we study stochastic
quantization for RR. By its Ehrenfest's theorem, $\langle\mathrm{E}\rangle$
a radiation formula
\begin{align}
\frac{dW_{\mathrm{stochastic}}}{dt} & =-m_{0}\tau_{0}\mathscr{P}(\varOmega_{\tau}^{\mathrm{ave}})\frac{d^{2}\langle\hat{x}_{\mu}\rangle_{\tau}}{d\tau^{2}}\cdot\frac{d^{2}\langle\hat{x}^{\mu}\rangle_{\tau}}{d\tau^{2}}\label{eq: intro-Rad-formula}
\end{align}
similar to Eq.(\ref{eq: Rad-formula-sQED}) is imposed. This shows
the fact that $q(\chi)$ is $\mathscr{P}(\varOmega_{\tau}^{\mathrm{ave}})$
a probability which a scalar electron stays at its average position.
Finally, a possibility of its higher-order corrections is discussed. 

Let us note a naive idea of quantum dynamics. The present proposal
does not deny the previous formulations of quantum dynamics. As K.\ Yasue
suggests \cite{Yasue}, quantum dynamics is symbolically illustrated
by
\begin{align*}
(\mathrm{Quantum\,dynamics}) & =(\mathrm{Matrix\,mechanics})\cup(\mathrm{Wave\,mechanics})\cup(\mathrm{Path\,integral})\\
 & \quad\quad\cup(\mathrm{Stochastic\,quantization})\cup(\mathrm{Something\,else}).
\end{align*}
So, each expressions of quantum dynamics are complementary via a wave
function.

Before discussing the relativistic regime, let us summarize Nelson's
model in the non-relativistic regime by a 1D stochastic process (an
(S3)-processes) in the following {\bf Sect.\ref{Nelson-1D}}.

\section{Stochastic kinematics and dynamics in non-relativistic regime by
a 1D stochastic process\label{Nelson-1D}}

\subsection{Kinematics}

\begin{figure}
\begin{centering}
\begin{minipage}[t]{0.5\columnwidth}%
\begin{center}
\includegraphics[scale=0.5]{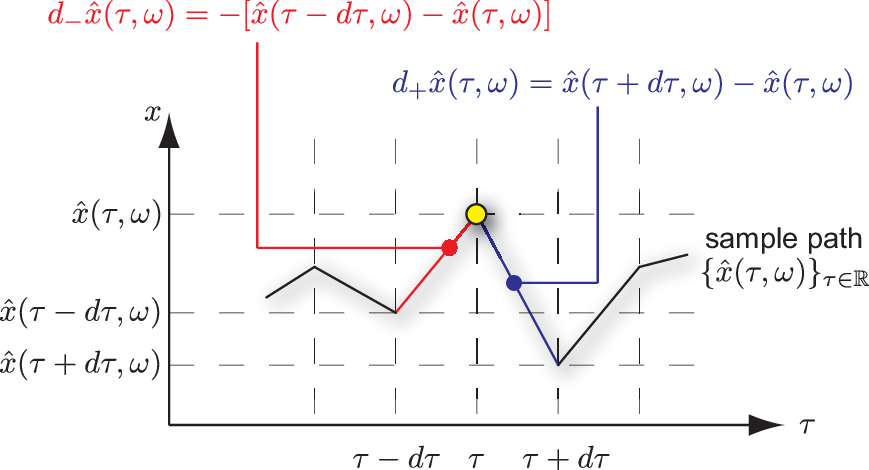}
\par\end{center}
\begin{center}
(A)
\par\end{center}%
\end{minipage}%
\begin{minipage}[t]{0.5\columnwidth}%
\begin{center}
\includegraphics[scale=0.5]{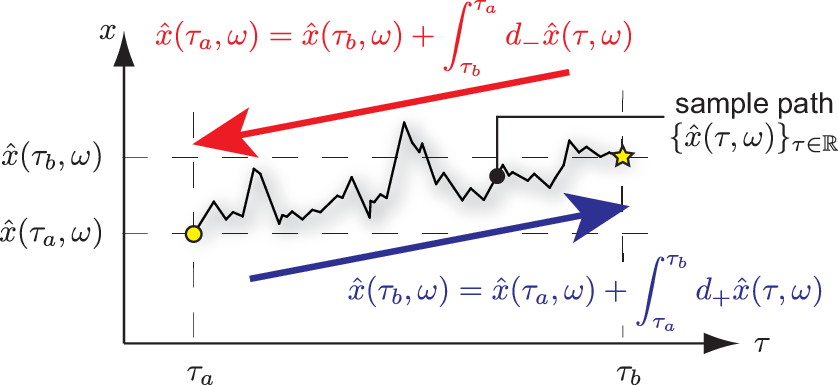}
\par\end{center}
\begin{center}
(B)
\par\end{center}%
\end{minipage}
\par\end{centering}
\caption{\label{Fig_dx} $d_{+}\hat{x}(\tau,\omega)$ and $d_{-}\hat{x}(\tau,\omega)$
of the fractions of a continuous stochastic process $\{\hat{x}(\tau,\omega)\}_{\tau\in\mathbb{R}}$.
Let us draw a sample path $\{\hat{x}(\tau,\omega)\}_{\tau\in\mathbb{R}}$,
then, consider how to generate this by stochastic differential equations.
(A) Due to the non-differentiability of $\{\hat{x}(\tau,\omega)\}_{\tau\in\mathbb{R}}$,
we have to define Eq.(\ref{eq: st-dif}) as the two types of its evolution
at $\hat{x}(\tau,\omega)$; $d_{+}\hat{x}(\tau,\omega)$ for a $\{\mathcal{P}_{\tau}\}$-prog.\ of
a normal diffusion process, and $d_{-}\hat{x}(\tau,\omega)$ for an
$\{\mathcal{F}_{\tau}\}$-prog.\ of an inverse process of a $\{\mathcal{P}_{\tau}\}$-prog.
See also {\bf Fig.\ref{FigWP}} for this. (B) a curve $\{\hat{x}(\tau,\omega)\}_{\tau\in[\tau_{a},\tau_{b}]}$
is imposed by (A). For the initial value $\hat{x}(\tau_{a},\omega)$,
$\hat{x}(\tau_{b},\omega)$ is found by a forward (normal) diffusion
as a $\{\mathcal{P}_{\tau}\}$-prog., i.e., $\hat{x}(\tau_{b},\omega)=\hat{x}(\tau_{a},\omega)+\int_{\tau_{a}}^{\tau_{b}}d_{+}\hat{x}(\tau,\omega)$.
On the other hand, an $\{\mathcal{F}_{\tau}\}$-prog.\ is regarded
as a backward diffusion $\hat{x}(\tau_{a},\omega)=\hat{x}(\tau_{b},\omega)+\int_{\tau_{b}}^{\tau_{a}}d_{-}\hat{x}(\tau,\omega)$
with the terminal value $\hat{x}(\tau_{b},\omega)$. This is the reason
why we have the two expressions of ``$+$'' and ``$-$''. }
\end{figure}
\begin{figure}
\centering{}%
\begin{minipage}[t]{0.5\columnwidth}%
\begin{center}
\includegraphics[scale=0.3]{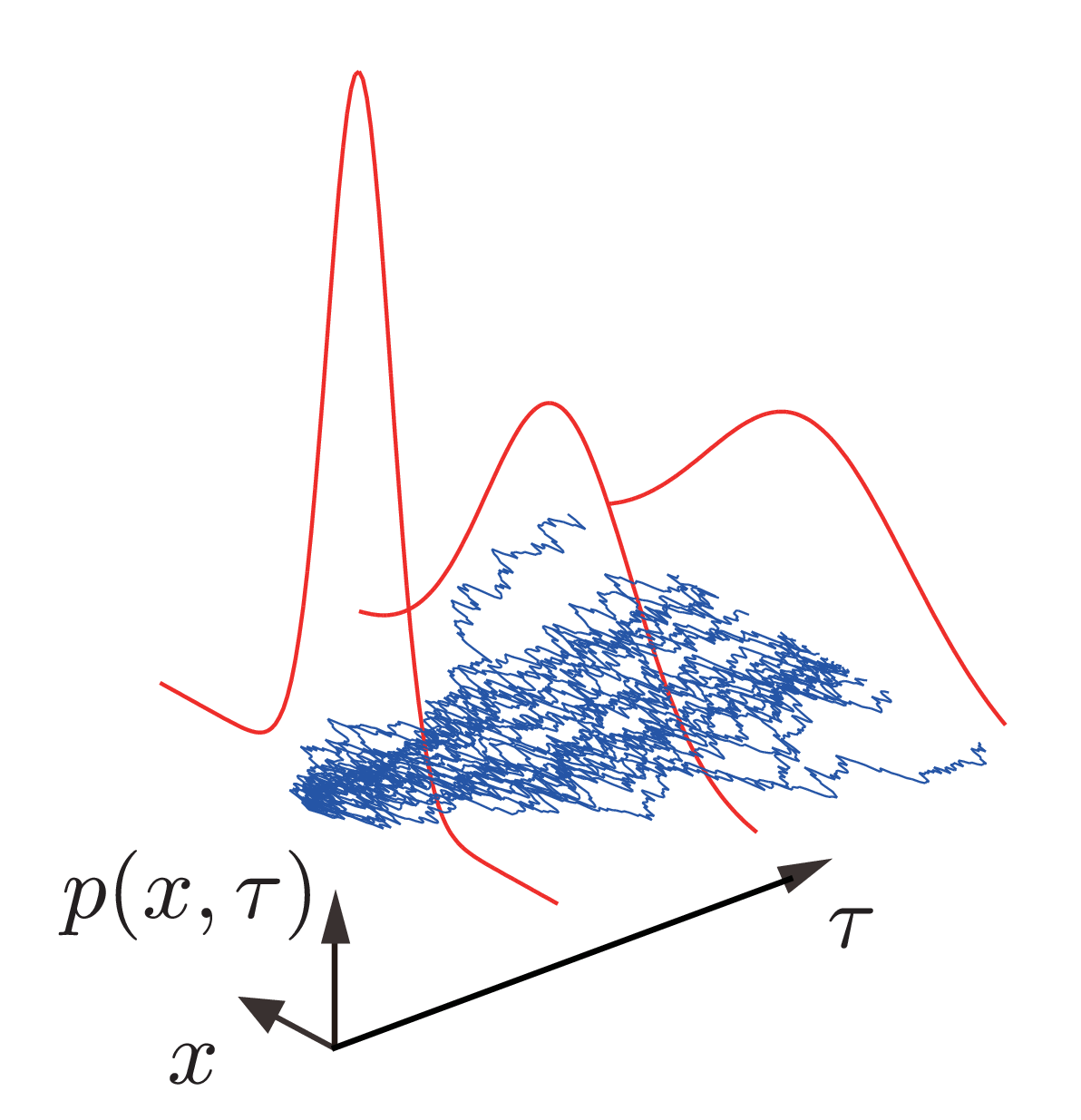}
\par\end{center}
\begin{center}
(a) $\partial_{\tau}p(x,\tau)=1/2\times\partial_{x}^{2}p(x,\tau)$\\
as a $\{\mathcal{P}_{\tau}\}$-WP
\par\end{center}%
\end{minipage}%
\begin{minipage}[t]{0.5\columnwidth}%
\begin{center}
\includegraphics[scale=0.3]{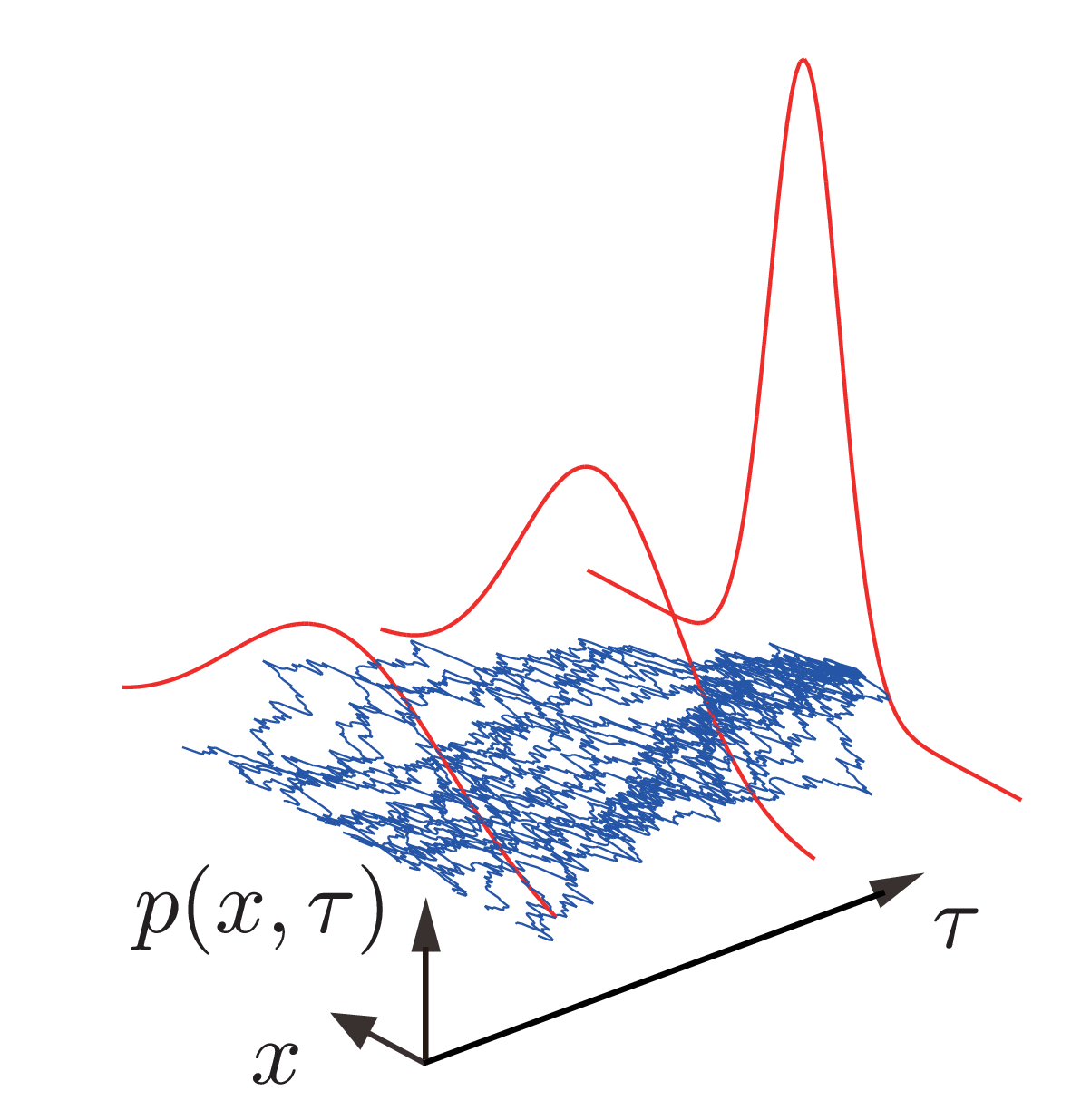}
\par\end{center}
\begin{center}
(b) $\partial_{\tau}p(x,\tau)=-1/2\times\partial_{x}^{2}p(x,\tau)$\\
as an $\{\mathcal{F}_{\tau}\}$-WP
\par\end{center}%
\end{minipage}\caption{\label{FigWP} Trajectories and probabilities of WPs. (a) a $\{\mathcal{P}_{\tau}\}$-WP
and (b) an $\{\mathcal{F}_{\tau}\}$-WP. The blue lines are their
sample paths and their probability densities are drawn by red.}
 
\end{figure}

For $\omega$ the label of sample paths, let $\{\hat{x}(\tau,\omega)\}_{\tau\in\mathbb{R}}$
a trajectory of a quanta with its existence probability of $\mathscr{P}^{1\mathrm{\mathchar`-dim}}(\omega)$
be a 1D Nelson's (S3)-process \cite{Nelson(2001_book)}. 
\begin{align}
\hat{x}(\tau_{b},\omega)-\hat{x}(\tau_{a},\omega) & =\int_{\tau_{a}}^{\tau_{b}}v_{\pm}(\hat{x}(\tau,\omega))d\tau+\lambda\times\int_{\tau_{a}}^{\tau_{b}}dw_{\pm}(\tau,\omega)\label{eq: st-integra}
\end{align}
With $\lambda\coloneqq\sqrt{\hbar/m_{0}}$. Equation (\ref{eq: st-integra})
is also written like 
\begin{align}
\hat{x}(\tau_{b},\omega)-\hat{x}(\tau_{a},\omega) & =\int_{\tau_{a}}^{\tau_{b}}d_{\pm}\hat{x}(\tau,\omega)\label{eq: st-integra2}
\end{align}
by introducing its differential form 
\begin{align}
d_{\pm}\hat{x}(\tau,\omega) & \coloneqq\pm[\hat{x}(\tau\pm d\tau,\omega)-\hat{x}(\tau,\omega)]\nonumber \\
 & =v_{\pm}(\hat{x}(\tau,\omega))d\tau+\lambda\times dw_{\pm}(\tau,\omega).\label{eq: st-dif}
\end{align}
This is a combination of a drift and its randomness governed by $\{w_{\pm}(\tau,\omega)\}_{\tau\in\mathbb{R}}$
of 1D Wiener processes ({\bf WP}; or Brownian motion) such that
\begin{align}
\mathbb{E}\llbracket dw_{\pm}(\tau,\bullet)\rrbracket= & 0,
\end{align}
\begin{align}
\mathbb{E}\llbracket[dw_{\pm}(\tau,\bullet)]^{2}\rrbracket= & d\tau,
\end{align}
\begin{align}
\mathbb{E}\llbracket dw_{+}(\tau,\bullet)\cdot dw_{-}(\tau,\bullet)\rrbracket= & 0,
\end{align}
for $d\tau>0$. Where, $\mathbb{E}\llbracket f(\bullet)\rrbracket\coloneqq\int_{\varOmega}f(\omega)d\mathscr{P}^{1\mathrm{\mathchar`-dim}}(\omega)$
the expectation of $\{f(\omega)\}_{\omega\in\varOmega}$. The two
types of ``$\pm$'' are derived from the randomness of $\{\hat{x}(\tau,\omega)\}_{\tau\in\mathbb{R}}$,
for the time reversibility\footnote{$\{w_{+}(\tau,\omega)\}_{\tau\in\mathbb{R}}$ of a $\{\mathcal{P}_{\tau}\}$-WP
does not have its time reversibility in general. Therefore, we make
a $\{\mathcal{P}_{(-\tau)}\}$-WP and name it an $\{\mathcal{F}_{\tau}\}$-WP,
i.e., $\{w_{-}(\tau,\omega)\}_{\tau\in\mathbb{R}}$.}; let {\bf ``$\boldsymbol{\{\mathcal{P}_{\tau}\}}$-progressive (prog.)''}
denoted by ``$+$'' be a diffusion from $\tau_{a}$ to $\tau_{b}>\tau_{a}$.
And {\bf ``$\boldsymbol{\{\mathcal{F}_{\tau}\}}$-prog.''} by ``$-$''
is an inverse process of ``$\{\mathcal{P}_{\tau}\}$-prog.''\  shown
in {\bf Fig.\ref{Fig_dx}}. Thus, an (S3)-process of Eqs.(\ref{eq: st-integra}-\ref{eq: st-dif})
is $\{\mathcal{P}_{\tau}\}$-prog.\ and $\{\mathcal{F}_{\tau}\}$-prog.
``Progressive'' means that a stochastic process $\{\hat{x}(\tau,\omega)\}_{(\tau,\omega)\in\mathbb{R}\times\varOmega}$
is integrable for $\tau$ and $\omega$, mathematically. For its probability
density $p(x,\tau)\coloneqq d\mathscr{P}^{1\mathrm{\mathchar`-dim}}/dx$,
the difference between a $\{\mathcal{P}_{\tau}\}$-prog.\ and an
$\{\mathcal{F}_{\tau}\}$-prog.\ appears in the forward ($+$) and
backward ($-$) Fokker-Planck ({\bf FP}) equations w.r.t. Eq.(\ref{eq: st-dif})
\cite{Nelson(1966a),Nelson(2001_book),Nelson(1985)}, 
\begin{equation}
\partial_{\tau}p(x,\tau)+\partial_{x}\left[v_{\pm}(x)p(x,\tau)\right]=\pm\frac{\lambda^{2}}{2}\partial_{x}^{2}p(x,\tau).\label{eq: Fokker-Planck-1dim}
\end{equation}
{\bf Figure \ref{FigWP}} shows the difference between $\{\mathcal{P}_{\tau}\}$
and $\{\mathcal{F}_{\tau}\}$-WPs by their trajectories (the blue
lines) and probabilities (the red lines) when $v_{\pm}=0$ in Eq.(\ref{eq: Fokker-Planck-1dim}):
(a) a $\{\mathcal{P}_{\tau}\}$-WP of a forward diffusion and (b)
an $\{\mathcal{F}_{\tau}\}$-WP of a backward diffusion. Equation
(\ref{eq: Fokker-Planck-1dim}) is derived by the following It\^{o}
formula\footnote{The word of ``a.s.'' means ``almost surely,'' namely, the It\^{o}
formula imposed for all $\omega$. } for $\{\hat{x}(\tau,\omega)\}_{\tau\in\mathbb{R}}$ of a $\{\mathcal{P}_{\tau}\}$-prog.\ and
an $\{\mathcal{F}_{\tau}\}$-prog., respectively \cite{Ito(1944),Gradiner(2009)}:
\begin{align}
d_{\pm}f(\hat{x}(\tau,\omega)) & =f'(\hat{x}(\tau,\omega))\cdot d_{\pm}\hat{x}(\tau,\omega)\pm\frac{\lambda^{2}}{2}f''(\hat{x}(\tau,\omega))d\tau\,\,\,\,\mathrm{a.s.}\label{eq: Ito-1D}
\end{align}

\subsection{Dynamics}

What Nelson performed after the above construction of the kinematics
was the derivation of the {Schr\"{o}dinger} equation $[i\hbar\partial_{t}+e\phi(x,t)]\psi(x,t)=-\hbar^{2}/2m_{0}\times\partial_{x}^{2}\psi(x,t)$
by the following with the field $E(x,t)\coloneqq-\partial_{x}\phi(x,t)$:
\begin{align}
m_{0}\left[\begin{gathered}\partial_{t}v(x,t)+v(x,t)\cdot\partial_{x}v(x,t)-u(x,t)\cdot\partial_{x}u(x,t)-\frac{\hbar}{2m_{0}}\partial_{x}^{2}u(x,t)\end{gathered}
\right] & =-eE(x,t)\label{eq: Nelson-schrodinger-1}
\end{align}
\begin{align}
V(x,t) & \coloneqq-i\frac{\hbar}{m_{0}}\partial_{x}\ln\psi(x,t)
\end{align}
\begin{align}
v(x,t) & =\mathrm{Re}\left\{ V(x,t)\right\} \label{eq: v-1D}
\end{align}
\begin{align}
u(x,t) & =-\mathrm{Im}\left\{ V(x,t)\right\} \label{eq: u-1D}
\end{align}
Where, $v\coloneqq(v_{+}+v_{-})/2$ and $u\coloneqq(v_{+}-v_{-})/2$
\cite{Nelson(1966a),Nelson(2001_book)}. By $V\coloneqq v-iu$, its
compact form \cite{Nottale(2011)} appears:
\begin{align}
m_{0}D_{t}V(x,t) & =-eE(x,t)\label{eq: Nelson-schrodinger-2}
\end{align}
\begin{align}
D_{t} & \coloneqq\partial_{t}+V(x,t)\partial_{x}-i\frac{\lambda}{2}\partial_{x}^{2}
\end{align}
Equation (\ref{eq: Nelson-schrodinger-2}) is similar to $m_{0}dv/dt=-eE$
in classical dynamics. Then, $v_{+}$ and $v_{-}$ are reproduced
by $V$ for Eqs.(\ref{eq: st-integra}-\ref{eq: st-dif}). Since this
is coupled with Eq.(\ref{eq: Fokker-Planck-1dim}), he succeeded to
demonstrate the question why $\psi^{*}(x,t)\psi(x,t)$ is regarded
as $p(x,t)$. 

\section{Stochastic kinematics and dynamics of a scalar electron by a 4D stochastic
process\label{kinematics_dynamics4D}}

We give the set of the stochastic kinematics and dynamics of a scalar
electron in this section, i.e, the quantization of $dx^{\mu}=v^{\mu}d\tau$
and $m_{0}dv^{\mu}=-ev_{\nu}F^{\mu\nu}$. It is recommended to read
from {\bf Sect.\ref{Scheme}} to the readers who want to check its
scheme briefly at first.

\subsection{Kinematics}

The following is a natural idea for a scalar electron in \underline{the 4D spacetime}:
We assume an expansion of Eq.(\ref{eq: st-dif}) to the relativistic
kinematics such that 
\begin{equation}
d_{\pm}\hat{x}^{\mu}(\tau,\omega)=\mathcal{V}_{\pm}^{\mu}(\hat{x}(\tau,\omega))d\tau+\lambda\times dW_{\pm}^{\mu}(\tau,\omega)\label{eq: Kinematics_St}
\end{equation}
for $\lambda\coloneqq\sqrt{\hbar/m_{0}}$ and $\mu=0,1,2,3$ with
its existence probability $\mathscr{P}(\omega)$. It is coupled with
$m_{0}\mathfrak{D_{\tau}}\mathcal{V}^{\mu}=-e\mathcal{\hat{V}}_{\nu}F^{\mu\nu}$
equivalent to the KG equation (see it later). As a mimic of the 1D
case, we want to require the relativistic FP equation 
\begin{equation}
\partial_{\tau}p(x,\tau)+\partial_{\mu}[\mathcal{V}_{\pm}^{\mu}(x)p(x,\tau)]=\mp\frac{\lambda^{2}}{2}\partial_{\mu}\partial^{\mu}p(x,\tau)\label{eq: Fokker-Planck}
\end{equation}
w.r.t.\ $p(x,\tau)\coloneqq d\mathscr{P}/d^{4}x$ and the It\^{o}
formula, 
\begin{align}
d_{\pm}f(\hat{x}(\tau,\omega)) & =\partial_{\mu}f(\hat{x}(\tau,\omega))\cdot d_{\pm}\hat{x}^{\mu}(\tau,\omega)\pm\frac{\lambda^{2}}{2}(-g^{\mu\nu})\partial_{\mu}\partial_{\nu}f(\hat{x}(\tau,\omega))d\tau\,\,\,\,\mathrm{a.s.}\label{eq: Ito formula}
\end{align}
We will demonstrate the validity of the above adapting the KG equation
well in the later discussion.

Anyway for $\{\hat{x}(\tau,\omega)\}_{\tau\in\mathbb{R}}$ of a 4D
$\{\mathcal{P}_{\tau}\}$-prog., it imposes the expanded formula of
Eq.(\ref{eq: Ito-1D}); $d_{+}f(\hat{x}(\tau,\omega))=\partial_{\mu}f(\hat{x}(\tau,\omega))\cdot d_{+}\hat{x}^{\mu}(\tau,\omega)+\lambda^{2}/2\times\delta^{\mu\nu}\partial_{\mu}\partial_{\nu}f(\hat{x}(\tau,\omega))d\tau$
a.s. By comparing it with Eq.(\ref{eq: Ito formula}), $\delta^{\mu\nu}\partial_{\mu}\partial_{\nu}f(\hat{x}(\tau,\omega))$
has to be $(-g^{\mu\nu})\partial_{\mu}\partial_{\nu}f(\hat{x}(\tau,\omega))$.
Therefore, let us introduce a $\{\mathscr{P}_{\tau}\}$-prog.\  for
``$+$'' in Eq.(\ref{eq: Kinematics_St})
\[
\underset{\{\mathscr{P}_{\tau}\}\mathchar`-\mathrm{prog.}}{\underbrace{\hat{x}(\tau,\omega)}}\coloneqq(\underset{\{\mathcal{F}_{\tau}\}\mathchar`-\mathrm{prog.}}{\underbrace{\hat{x}^{0}(\tau,\omega)},}\underset{\{\mathcal{P}_{\tau}\}\mathchar`-\mathrm{prog.}}{\underbrace{\hat{x}^{1}(\tau,\omega),\hat{x}^{2}(\tau,\omega),\hat{x}^{3}(\tau,\omega)}})
\]
and an $\{\mathscr{F}_{\tau}\}$-prog.\ by ``$-$'' 
\[
\underset{\{\mathscr{F}_{\tau}\}\mathchar`-\mathrm{prog.}}{\underbrace{\hat{x}(\tau,\omega)}}\coloneqq(\underset{\{\mathcal{P}_{\tau}\}\mathchar`-\mathrm{prog.}}{\underbrace{\hat{x}^{0}(\tau,\omega)},}\underset{\{\mathcal{F}_{\tau}\}\mathchar`-\mathrm{prog.}}{\underbrace{\hat{x}^{1}(\tau,\omega),\hat{x}^{2}(\tau,\omega),\hat{x}^{3}(\tau,\omega)}}).
\]
Where, let $\{W_{+}(\tau,\omega)\}_{\tau\in\mathbb{R}}$ and $\{W_{-}(\tau,\omega)\}_{\tau\in\mathbb{R}}$
of WPs in Eq.(\ref{eq: Kinematics_St}) be $\{\mathscr{P}_{\tau}\}$
and $\{\mathscr{F}_{\tau}\}$-prog., respectively. Hereby, we name
$\{\hat{x}(\tau,\omega)\}_{\tau\in\mathbb{R}}$ of a $\{\mathscr{P}_{\tau}\}$
and $\{\mathscr{F}_{\tau}\}$-prog.\ {\bf``a D-prog.''} \textcolor{red}{}
For $d\tau\geq0$ and $\mathbb{E}\llbracket f(\bullet)\rrbracket\coloneqq\allowbreak\int_{\varOmega}f(\omega)\allowbreak d\mathscr{P}(\omega)$,
the following rules are satisfied as the expansion of the 1D case:
\begin{align}
\mathbb{E}\llbracket dW_{\pm}^{\mu}(\tau,\bullet)\rrbracket= & 0
\end{align}
\begin{align}
\mathbb{E}\llbracket dW_{\pm}^{\mu}(\tau,\bullet)\cdot dW_{\pm}^{\nu}(\tau,\bullet)\rrbracket=\delta^{\mu\nu}\times & d\tau
\end{align}
\begin{align}
\mathbb{E}\llbracket dW_{\pm}^{\mu}(\tau,\bullet)\cdot dW_{\mp}^{\nu}(\tau,\bullet)\rrbracket= & 0
\end{align}
The definition of $d_{-}f(\hat{x}(\tau,\omega))$ by Eq.(\ref{eq: Ito formula})
can be checked by the expansion for an $\{\hat{x}(\tau,\omega)\}_{\tau\in\mathbb{R}}$
of $\{\mathscr{F}_{\tau}\}$-prog.,
\begin{align}
d_{-}f(\hat{x}(\tau,\omega)) & =f(\underset{\{\mathcal{P}_{\tau}\}\mathchar`-\mathrm{\mathrm{prog.}}}{\underbrace{\hat{x}^{0}(\tau+d\tau,\omega)},}\underset{\{\mathcal{F}_{\tau}\}\mathchar`-\mathrm{\mathrm{prog.}}}{\underbrace{\hat{x}^{i=1,2,3}(\tau,\omega)}})-f(\hat{x}^{0}(\tau,\omega),\hat{x}^{i=1,2,3}(\tau-d\tau,\omega))
\end{align}
with the help by Eq.(\ref{eq: Kinematics_St}) (a 4D version of Eq.(\ref{eq: st-dif})
)
\begin{align}
d_{-}\hat{x}(\tau,\omega) & =(\underset{\{\mathcal{P}_{\tau}\}\mathchar`-\mathrm{\mathrm{prog.}}}{\underbrace{\hat{x}^{0}(\tau+d\tau,\omega)},}\underset{\{\mathcal{F}_{\tau}\}\mathchar`-\mathrm{\mathrm{prog.}}}{\underbrace{\hat{x}^{i=1,2,3}(\tau,\omega)}})-(\hat{x}^{0}(\tau,\omega),\hat{x}^{i=1,2,3}(\tau-d\tau,\omega))
\end{align}
to first order of $d\tau$. $d_{+}f(\hat{x}(\tau,\omega))$ is imposed
by $\{\hat{x}(\tau,\omega)\}_{\tau\in\mathbb{R}}$ of a $\{\mathscr{P}_{\tau}\}$-prog.
Let Eq.(\ref{eq: Ito formula}) be the general definition of $d_{\pm}$.
The probability $\mathscr{P}(\omega)$ is calculated by the FP equation
(\ref{eq: Fokker-Planck}) since $p(x,\tau)\coloneqq d\mathscr{P}/d^{4}x$.
Nelson introduced the drift velocities via the so-called mean derivatives
\cite{Nelson(1966a)}. In the present case, it is evaluated by 
\begin{equation}
\mathcal{V}_{\pm}^{\mu}(\hat{x}(\tau,\omega))\coloneqq\mathbb{E}\left\llbracket \left.\frac{d_{\pm}\hat{x}^{\mu}}{d\tau}(\tau,\bullet)\right|\hat{x}(\tau,\omega)\right\rrbracket (\omega).\label{eq: mean-deriv}
\end{equation}
Where, $\mathbb{E}\llbracket f(\hat{x}(\tau,\bullet))|\mathscr{C}\rrbracket(\omega)=\int f(\hat{x}(\tau,\omega))d\mathscr{P}_{\mathscr{C}}(\omega)$
the conditional expectation for $\mathscr{P}_{\mathscr{C}}(X)$ a
conditional probability of $X$ given $\mathscr{C}$. 

Then, we define the complex differential $\hat{d}\coloneqq(d_{+}+d_{-})/2-i(d_{+}-d_{-})/2$
symbolically such that
\begin{align}
\hat{d}f(\hat{x}(\tau,\omega)) & =\partial_{\mu}f(\hat{x}(\tau,\omega))\cdot\hat{d}\hat{x}^{\mu}(\tau,\omega)-\frac{i\lambda^{2}}{2}\partial_{\mu}\partial^{\mu}f(\hat{x}(\tau,\omega))d\tau\,\,\,\,\mathrm{a.s.}\label{eq: C-Ito formula}
\end{align}
and the complex velocity 
\begin{align}
\mathcal{V}^{\mu}(\hat{x}(\tau,\omega)) & \coloneqq\mathbb{E}\left\llbracket \left.\frac{\hat{d}\hat{x}^{\mu}}{d\tau}(\tau,\bullet)\right|\hat{x}(\tau,\omega)\right\rrbracket (\omega)\\
 & =\frac{\mathcal{V}_{+}^{\mu}(\hat{x}(\tau,\omega))+\mathcal{V}_{-}^{\mu}(\hat{x}(\tau,\omega))}{2}-i\frac{\mathcal{V}_{+}^{\mu}(\hat{x}(\tau,\omega))-\mathcal{V}_{-}^{\mu}(\hat{x}(\tau,\omega))}{2}\nonumber 
\end{align}
with its other assumption \cite{Nottale(2011)}: 
\begin{equation}
\mathcal{V}^{\mu}(x)\coloneqq\frac{1}{m_{0}}\times\left[i\hbar\partial{}^{\mu}\ln\phi(x)+eA{}^{\mu}(x)\right]\label{eq: Comp-V}
\end{equation}

Let $||A||_{(g,\mathbb{C}^{4})}^{2}\coloneqq A_{\mu}^{*}A^{\mu}$
($A^{*}$ is the complex conjugate of a vector $A$) on the Minkowski
spacetime, 
\begin{equation}
d\tau\coloneqq\frac{1}{c}\times\sqrt{\mathbb{E}\left\llbracket ||\hat{d}\hat{x}(\tau,\bullet)-\lambda\times\hat{d}W(\tau,\bullet)||_{(g,\mathbb{C}^{4})}^{2}\right\rrbracket }
\end{equation}
of the proper time links to the Lorentz invariant 
\begin{equation}
\mathbb{E}\left\llbracket \mathcal{V}_{\mu}^{*}(\hat{x}(\tau,\bullet))\mathcal{V}^{\mu}(\hat{x}(\tau,\bullet))\right\rrbracket =c^{2}\,.\label{eq: Inv.}
\end{equation}
For satisfying Eq.(\ref{eq: Inv.}), $\phi$ in Eq.(\ref{eq: Comp-V})
has to be a wave function of the KG equation with $i\hbar\mathfrak{D}_{\alpha}\coloneqq i\hbar\partial_{\alpha}+eA_{\alpha}(x)$,
\begin{equation}
(i\hbar\mathfrak{D}_{\alpha})\cdot(i\hbar\mathfrak{D}^{\alpha})\phi(x)-m_{0}^{2}c^{2}\phi(x)=0\label{eq: KG eq}
\end{equation}
as {\bf Ref.}\cite{Zastawniak(1990)} suggests. Namely by
\begin{align}
\mathcal{V}_{\mu}^{*}(x)\mathcal{V}^{\mu}(x) & =\frac{1}{m_{0}^{2}}\times\frac{\mathrm{Re}\{\phi^{*}(x)(i\hbar\mathfrak{D}_{\mu})\cdot(i\hbar\mathfrak{D}^{\mu})\phi(x)\}}{\phi^{*}(x)\phi(x)}+\frac{\hbar^{2}}{2m_{0}^{2}}\times\frac{\partial_{\mu}\partial^{\mu}[\phi(x)\cdot\phi^{*}(x)]}{\phi^{*}(x)\phi(x)},\label{eq: VV-cal}
\end{align}
the first term in RHS is $c^{2}$ and the second term becomes zero
by its expectation after the substitution $x=\hat{x}(\tau,\omega)$.
Let us demonstrate this in the end of the next small section for the
probability density.

\subsection{Equations of probability density}

By using this complex velocity, the FP equation of Eq.(\ref{eq: Fokker-Planck})
derives the equation of continuity for $p$, w.r.t.\ $x\in\hat{x}(\tau,\varOmega)\coloneqq\{\hat{x}(\tau,\omega)\}_{\omega\in\varOmega}$.
\begin{equation}
\partial_{\tau}p(x,\tau)+\partial_{\mu}\left[\mathrm{Re}\{\mathcal{V}^{\mu}(x)\}p(x,\tau)\right]=0\label{eq: eq of continuity}
\end{equation}
Then for a naturally boundary condition $p(x,\pm\infty)=0$, 
\begin{equation}
\partial_{\mu}\left[\mathrm{Re}\{\mathcal{V}^{\mu}(x)\}\int_{\mathbb{R}}d\tau\,p(x,\tau)\right]=0
\end{equation}
is found. Or by using 
\begin{equation}
p(x,\tau)\coloneqq\mathbb{E}\llbracket\delta^{4}(x-\hat{x}(\tau,\bullet))\rrbracket\label{eq: prob-density}
\end{equation}
due to the definition of $\mathbb{E}\llbracket f(\hat{x}(\tau,\bullet))\rrbracket$,
i.e.,
\begin{align}
\mathbb{E}\llbracket f(\hat{x}(\tau,\bullet))\rrbracket & \coloneqq\int_{\varOmega}f(\hat{x}(\tau,\omega')))d\mathscr{P}(\omega')\\
 & =\int_{\mathbb{R}^{4}}f(x')p(x',\tau)d^{4}x',
\end{align}
the following is imposed:
\begin{equation}
\partial_{\mu}\mathbb{E}\left\llbracket -ec\int_{\mathbb{R}}d\tau\,\mathrm{Re}\{\mathcal{V}^{\mu}(x)\}\delta^{4}(x-\hat{x}(\tau,\bullet))\right\rrbracket =0\label{eq: cons-flow}
\end{equation}
This is the conservation law of the current density 
\begin{equation}
j_{\mathrm{stochastic}}^{\mu}(x)\coloneqq\mathbb{E}\left\llbracket -ec\int_{\mathbb{R}}d\tau\,\mathrm{Re}\left\{ \mathcal{V}^{\mu}(x)\right\} \delta^{4}(x-\hat{x}(\tau,\bullet))\right\rrbracket \label{eq:j-stochastic}
\end{equation}
in the 4D spacetime (see this role in {\bf Sect.\ref{Maxwell}}).
The following relation is also found by Eq.(\ref{eq: Fokker-Planck}).
\begin{align}
\mathrm{Im}\{\mathcal{V}^{\mu}(x)\} & =\begin{cases}
\begin{gathered}\frac{\lambda^{2}}{2}\times\partial^{\mu}\ln p(x,\tau)\end{gathered}
, & \begin{gathered}x\in\hat{x}(\tau,\varOmega)\end{gathered}
\\
\begin{gathered}\frac{\lambda^{2}}{2}\times\partial^{\mu}\ln\int_{\mathbb{R}}d\tau\,p(x,\tau)\end{gathered}
, & x\in\bigcup_{\tau\in\mathbb{R}}\hat{x}(\tau,\varOmega)
\end{cases}\label{eq: osmotic pressure1}
\end{align}
Where, $\bigcup_{\tau\in\mathbb{R}}\hat{x}(\tau,\varOmega)=\mathrm{supp}(\int_{\mathbb{R}}d\tau\,p(\circ,\tau))$.
Equation (\ref{eq: osmotic pressure1}) is a mimic of the osmotic
pressure formula \cite{Nelson(1966a),Nelson(2001_book)}. 

Let us come back to the discussion of Eq.(\ref{eq: Inv.}) and Eq.(\ref{eq: VV-cal})
(see also \cite{Zastawniak(1990)}). By $\phi(x)\coloneqq\exp[R(x)/\hbar+iS(x)/\hbar]$
w.r.t.\ $R$ and $S$ of real-valued functions, $\phi^{*}(x)\phi(x)=\exp[2R(x)/\hbar]$.
Where, $\partial^{\mu}R(x)=\mathrm{Im}\{m_{0}\mathcal{V}^{\mu}(x)\}=\hbar/2\times\partial^{\mu}\ln p(x,\tau)$
by Eq.(\ref{eq: Comp-V}) and Eq.(\ref{eq: osmotic pressure1}) for
$x\in\hat{x}(\tau,\varOmega)$. Thus,
\begin{align}
\frac{\hbar^{2}}{2m_{0}^{2}}\times\frac{\partial_{\mu}\partial^{\mu}[\phi(x)\cdot\phi^{*}(x)]}{\phi^{*}(x)\phi(x)}= & \frac{\hbar^{2}}{2m_{0}^{2}}\times\frac{\partial_{\mu}\partial^{\mu}p(x,\tau)}{p(x,\tau)}.
\end{align}
After assuming $\phi$ as the solution of the KG equation (\ref{eq: KG eq}),
\begin{align}
\mathbb{E}\llbracket\mathcal{V}_{\mu}^{*}(\hat{x}(\tau,\bullet))\mathcal{V}^{\mu}(\hat{x}(\tau,\bullet))\rrbracket & =c^{2}+\frac{\hbar^{2}}{2m_{0}^{2}}\times\mathbb{E}\left\llbracket \frac{\partial_{\mu}\partial^{\mu}p(\hat{x}(\tau,\bullet),\tau)}{p(\hat{x}(\tau,\bullet),\tau)}\right\rrbracket \nonumber \\
 & =c^{2}+\frac{\hbar^{2}}{2m_{0}^{2}}\times\int_{\mathbb{R}^{4}}d^{4}x\,\partial_{\mu}\partial^{\mu}p(x,\tau),
\end{align}
Eq.(\ref{eq: Inv.}) is found by $\partial^{\mu}p(x,\tau)|_{x\in\partial\mathbb{R}^{4}}=0$
the boundary condition on $\partial\mathbb{R}^{4}$ the boundary of
$\mathbb{R}^{4}$. 

\subsection{Dynamics}

Let us give the equation of Nottale's style \cite{Nottale(2011)}
for the dynamics of a Brownian scalar electron: 
\begin{equation}
m_{0}\mathfrak{D_{\tau}}\mathcal{V}^{\mu}(\hat{x}(\tau,\omega))=-e\mathcal{\hat{V}}_{\nu}(\hat{x}(\tau,\omega))F^{\mu\nu}(\hat{x}(\tau,\omega))\label{eq: EOM_St}
\end{equation}
\begin{align}
\hat{\mathcal{V}}^{\mu}(x) & \coloneqq\mathcal{V}^{\mu}(x)+i\frac{\lambda^{2}}{2}\times\partial^{\mu}\label{eq: Operator_CompV}
\end{align}
\begin{align}
\mathfrak{D_{\tau}} & \coloneqq\hat{\mathcal{V}}^{\mu}(x)\cdot\partial_{\mu}\label{eq: Operator_CompDtau}
\end{align}
Equations (\ref{eq: EOM_St}-\ref{eq: Operator_CompDtau}) corresponds
to $m_{0}dv^{\mu}/dt=-ev_{\nu}F^{\mu\nu}$ in classical dynamics and
it implies the KG equation \cite{Nottale(2011)} since
\begin{align}
\mathfrak{D_{\tau}}\mathcal{V}^{\mu}+\frac{e}{m_{0}}\mathcal{\hat{V}}_{\nu}F^{\mu\nu} & =\begin{gathered}\frac{1}{2}\partial^{\mu}\left[\frac{(i\hbar\partial_{\nu}+eA_{\nu})(i\hbar\partial^{\nu}+eA{}^{\nu})\phi}{m_{0}^{2}\phi}\right]\end{gathered}
\nonumber \\
 & =0.
\end{align}
We can easily clarify that Eq.(\ref{eq: EOM_St}) satisfies the $U(1)$-gauge
symmetry. Let us study the relation between Eq.(\ref{eq: Inv.}) and
Eq.(\ref{eq: EOM_St}). This relation corresponds to the one between
$v_{\mu}v^{\mu}=c^{2}$ and $d/d\tau(v_{\mu}v^{\mu})=0$ in classical
dynamics. The readers may find another definition of $\mathfrak{D_{\tau}}$:
\begin{equation}
\mathfrak{D}_{\tau}^{\pm}\coloneqq\mathbb{E}\left\llbracket \left.\frac{d_{\pm}}{d\tau}\right|\hat{x}(\tau,\omega)\right\rrbracket 
\end{equation}
\begin{equation}
\mathfrak{D}_{\tau}\coloneqq\mathbb{E}\left\llbracket \left.\frac{\hat{d}}{d\tau}\right|\hat{x}(\tau,\omega)\right\rrbracket =\frac{1-i}{2}\mathfrak{D}_{\tau}^{+}+\frac{1+i}{2}\mathfrak{D}_{\tau}^{-}
\end{equation}
Nelson introduced the partial integral formula for his (S3)-process.
In our case for $\{\hat{x}(\tau,\omega)\}$ of a D-prog., the differential
form of that formula for $\{\alpha^{\mu}\}_{\mu=0,1,2,3}$ and $\{\beta^{\mu}\}_{\mu=0,1,2,3}$
of complex functions\footnote{For the demonstration of Eq.(\ref{eq: Nelson-partial integral0}),
it is enough to be confirm the following formula with Eq.(\ref{eq: Fokker-Planck}):
\begin{multline*}
\mathbb{E}\left\llbracket \mathfrak{D}_{\tau}^{+}\alpha_{\mu}(\hat{x}(\tau,\bullet))\cdot\beta^{\mu}(\hat{x}(\tau,\bullet))+\alpha_{\mu}(\hat{x}(\tau,\bullet))\cdot\mathfrak{D}_{\tau}^{-}\beta^{\mu}(\hat{x}(\tau,\bullet))\right\rrbracket \\
=\mathbb{E}\left\llbracket \mathfrak{D}_{\tau}^{-}\alpha_{\mu}(\hat{x}(\tau,\bullet))\cdot\beta^{\mu}(\hat{x}(\tau,\bullet))+\alpha_{\mu}(\hat{x}(\tau,\bullet))\cdot\mathfrak{D}_{\tau}^{+}\beta^{\mu}(\hat{x}(\tau,\bullet))\right\rrbracket 
\end{multline*}
} is 
\begin{align}
\frac{d}{d\tau}\mathbb{E}\llbracket\alpha_{\mu}(\hat{x}(\tau,\bullet))\beta^{\mu}(\hat{x}(\tau,\bullet))\rrbracket & =\mathbb{E}\left\llbracket \mathfrak{D}_{\tau}^{\pm}\alpha_{\mu}(\hat{x}(\tau,\bullet))\cdot\beta^{\mu}(\hat{x}(\tau,\bullet))+\alpha_{\mu}(\hat{x}(\tau,\bullet))\cdot\mathfrak{D}_{\tau}^{\mp}\beta^{\mu}(\hat{x}(\tau,\bullet))\right\rrbracket .\label{eq: Nelson-partial integral0}
\end{align}
By linear combining the above ``$+$'' and ``$-$'' formulas, 
\begin{align}
\frac{d}{d\tau}\mathbb{E}\llbracket\alpha_{\mu}(\hat{x}(\tau,\bullet))\beta^{\mu}(\hat{x}(\tau,\bullet))\rrbracket & =\mathbb{E}\left\llbracket \mathfrak{D}_{\tau}\alpha_{\mu}(\hat{x}(\tau,\bullet))\cdot\beta^{\mu}(\hat{x}(\tau,\bullet))+\alpha_{\mu}(\hat{x}(\tau,\bullet))\cdot\mathfrak{D}_{\tau}^{*}\beta^{\mu}(\hat{x}(\tau,\bullet))\right\rrbracket \label{eq: Nelson-partial integral1}\\
 & =\mathbb{E}\left\llbracket \mathfrak{D}_{\tau}^{*}\alpha_{\mu}(\hat{x}(\tau,\bullet))\cdot\beta^{\mu}(\hat{x}(\tau,\bullet))+\alpha_{\mu}(\hat{x}(\tau,\bullet))\cdot\mathfrak{D}_{\tau}\beta^{\mu}(\hat{x}(\tau,\bullet))\right\rrbracket .\label{eq: Nelson-partial integral2}
\end{align}
When $\alpha_{\mu}=\mathcal{V}_{\mu}^{*}$ and $\beta^{\mu}=\mathcal{V}^{\mu}$
in Eq.(\ref{eq: Nelson-partial integral2}), 
\begin{align}
\frac{d}{d\tau}\mathbb{E}\left\llbracket \mathcal{V}_{\mu}^{*}(\hat{x}(\tau,\bullet))\mathcal{V}^{\mu}(\hat{x}(\tau,\bullet))\right\rrbracket  & =\mathbb{E}\left\llbracket \mathcal{V}_{\mu}^{*}(\hat{x}(\tau,\bullet))\cdot\mathfrak{D_{\tau}}\mathcal{V}^{\mu}(\hat{x}(\tau,\bullet))+\mathfrak{D_{\tau}}^{*}\mathcal{V}_{\mu}^{*}(\hat{x}(\tau,\bullet))\cdot\mathcal{V}^{\mu}(\hat{x}(\tau,\bullet))\right\rrbracket \nonumber \\
 & =\frac{\lambda^{4}e}{2m_{0}}\times\int_{\mathbb{R}^{4}}d^{4}x\,p(x,\tau)\cdot\partial_{\mu}\partial_{\nu}F^{\mu\nu}(x)\nonumber \\
 & =0
\end{align}
is found. Thus, Eq.(\ref{eq: Inv.}) is supported by Eq.(\ref{eq: EOM_St})
the dynamics of a Brownian quanta, too.

Consider the expectation of Eq.(\ref{eq: EOM_St}), Ehrenfest's theorem
is imposed naturally for $\langle\hat{x}\rangle_{\tau}\coloneqq\mathbb{E}\llbracket\hat{x}(\tau,\bullet)\rrbracket$
and $\delta\hat{x}(\tau,\omega)\coloneqq\hat{x}(\tau,\omega)-\langle\hat{x}\rangle_{\tau}$:
\begin{align}
m_{0}\frac{d^{2}\langle\hat{x}^{\mu}\rangle_{\tau}}{d\tau^{2}} & =\mathbb{E}\left\llbracket -eF^{\mu\nu}(\hat{x}(\tau,\bullet))\mathrm{Re}\{\mathcal{V}_{\nu}(\hat{x}(\tau,\bullet))\}\right\rrbracket \label{eq: Ehrenfest}\\
 & =-eF^{\mu\nu}(\langle\hat{x}\rangle_{\tau})\frac{d\langle\hat{x}_{\nu}\rangle_{\tau'}}{d\tau}+O(\langle\otimes^{2}\delta\hat{x}\rangle_{\tau})\label{eq: Ehrenfest-expansion}
\end{align}
$d\langle\hat{x}\rangle_{\tau}/d\tau-\mathrm{Re}\{\mathcal{V}(\langle\hat{x}\rangle_{\tau})\}=O(\langle\otimes^{2}\delta\hat{x}\rangle_{\tau})$
is employed since $d\langle\hat{x}\rangle_{\tau}/d\tau=\mathbb{E}\llbracket\mathrm{Re}\{\mathcal{V}(\hat{x}(\tau,\bullet)\}\rrbracket$
by Eqs.(\ref{eq: Nelson-partial integral1},\ref{eq: Nelson-partial integral2})
for $\alpha^{\mu}(x)=x^{\mu}$ and $\beta^{\mu}(x)=1$. Hereby, the
requirement of the issue-$\langle\mathrm{A}\rangle$, the system of
relativistic kinematics and dynamics of a scalar electron is completed.

\subsection{Schematic method of stochastic quantization\label{Scheme}}

For obtaining Eq.(\ref{eq: EOM_St}) the equations of quantum dynamics
of a relativistic quanta from classical dynamics, let us conclude
the above discussion schematically. At first, we regard classical
dynamics as the combination of the kinematics
\begin{align}
dx^{\mu}(\tau) & =v^{\mu}(\tau)d\tau
\end{align}
and its dynamics
\begin{align}
m_{0}\frac{dv^{\mu}}{d\tau} & =-ev_{\nu}(\tau)F^{\mu\nu}(x(\tau)).
\end{align}
By considering that this kinematics is deduced from the expectation
of Eq.(\ref{eq: Kinematics_St}), namely, $\mathbb{E}\llbracket d_{\pm}\hat{x}^{\mu}(\tau,\bullet)\rrbracket=v^{\mu}(\tau)d\tau$,
the above classical dynamics is considered as the one by Ehrenfest's
theorem. Hence, the classical kinematics and dynamics are replaced
by
\begin{align}
\mathbb{E}\llbracket d_{\pm}\hat{x}^{\mu}(\tau,\bullet)\rrbracket & =\frac{d\langle\hat{x}\rangle_{\tau}}{d\tau}d\tau,
\end{align}
\begin{equation}
m_{0}\frac{d^{2}\langle\hat{x}^{\mu}\rangle_{\tau}}{d\tau^{2}}=\mathbb{E}\left\llbracket -e\mathrm{Re}\{\mathcal{V}_{\nu}(\hat{x}(\tau,\bullet))\}F^{\mu\nu}(\hat{x}(\tau,\bullet))\right\rrbracket .
\end{equation}
Since $d^{2}\langle\hat{x}^{\mu}\rangle_{\tau}/d\tau^{2}=\mathbb{E}\llbracket\mathrm{Re}\{\mathfrak{D_{\tau}}\mathcal{V}(\hat{x}(\tau,\bullet)\}\rrbracket$,
Eq.(\ref{eq: EOM_St}) the dynamics of a Brownian scalar electron
is found with its complex conjugate. This is the schematic method
of stochastic quantization.

\section{Maxwell equation\label{Maxwell}}

The Maxwell equation (the issue-$\langle\mathrm{B}\rangle$) is also
given by
\begin{align}
\partial_{\mu}[F^{\mu\nu}(x)+\delta f^{\mu\nu}(x)] & =\mu_{0}\times\mathbb{E}\left\llbracket -ec\int_{\mathbb{R}}d\tau'\mathrm{Re}\left\{ \mathcal{V}^{\nu}(x)\right\} \delta^{4}(x-\hat{x}(\tau',\bullet))\right\rrbracket \label{eq: Maxwell_St}
\end{align}
corresponding to
\begin{align}
\partial_{\mu}[F^{\mu\nu}(x)+\delta f^{\mu\nu}(x)] & =\mu_{0}\times\left[-ec\int_{\mathbb{R}}d\tau'\frac{dx^{\nu}}{d\tau}(\tau')\times\delta^{4}(x-x(\tau'))\right]\label{eq: classical-Maxwell}
\end{align}
in classical physics. Where, we will use $\delta f$ as a singular
field attached to a quanta like the Coulomb field. $j_{\mathrm{stochastic}}$
of the current density given by Eq.(\ref{eq:j-stochastic}) (and in
Eq.(\ref{eq: Maxwell_St})) is equal to one of a KG particle, i.e.,
$j_{\mathrm{stochastic}}=j_{\mathrm{\mathrm{K\mathchar`-G}}}$,
\begin{equation}
j_{\mathrm{\mathrm{K\mathchar`-G}}}^{\mu}(x)=-\frac{iec\lambda^{2}}{2}\left[\phi^{*}(x)\mathfrak{D}^{\mu}\phi(x)-\phi(x)(\mathfrak{D}^{*})^{\mu}\phi^{*}(x)\right]\label{eq:j-KG}
\end{equation}
by the assumption
\begin{equation}
\phi^{*}(x)\phi(x)=\int_{\mathbb{R}}d\tau'p(x,\tau').\label{eq: normalization}
\end{equation}
Since $\partial_{\mu}j_{\mathrm{stochastic}}^{\mu}(x)=0$, $\partial_{\mu}j_{\mathrm{\mathrm{K\mathchar`-G}}}^{\mu}(x)=0$
and 
\begin{align}
j_{\mathrm{stochastic}}^{\mu}(x)= & \frac{\int_{\mathbb{R}}d\tau'p(x,\tau')}{\phi^{*}(x)\phi(x)}\times j_{\mathrm{\mathrm{K\mathchar`-G}}}^{\mu}(x),
\end{align}
therefore, $\partial_{\mu}[\int_{\mathbb{R}}d\tau'p(x,\tau')/\phi^{*}(x)\phi(x)]=0$
has to be satisfied. We emphasize that Eq.(\ref{eq: normalization})
is considered as the normalization of $\phi$ a wave function of the
KG equation corresponding to one of the {Schr\"{o}dinger} equation.
The quantum effect of RR appearing in Eq.(\ref{eq: Rad-formula-sQED})
is derived by the definition of $j_{\mathrm{stochastic}}^{\mu}$ obviously.
When we write it such as
\begin{align}
j_{\mathrm{stochastic}}^{\mu}(x) & =-ec\int_{\mathbb{R}}d\tau'\int_{\omega'\in\varOmega}d\mathscr{P}(\omega')\mathrm{Re}\left\{ \mathcal{V}^{\mu}(x)\right\} \delta^{4}(x-\hat{x}(\tau',\omega')),\label{eq: current-st}
\end{align}
its classical limit ($\hbar\rightarrow0$) converge to a path $\omega_{0}$
a smooth trajectory of $\{x(\tau)\}_{\tau\in\mathbb{R}}\coloneqq\{\hat{x}(\tau,\omega_{0})\}_{\tau\in\mathbb{R}}$.
Then, Eq.(\ref{eq: classical-Maxwell}) is deduced from Eq.(\ref{eq: Maxwell_St}),
namely,
\begin{align}
\lim_{\hbar\rightarrow0}j_{\mathrm{stochastic}}^{\mu}(x) & =-ec\int_{\mathbb{R}}d\tau'\frac{dx^{\mu}}{d\tau}(\tau')\delta^{4}(x-x(\tau')).
\end{align}
See {\bf Sect.\ref{classical limit}} for more detail of the classical
limit. In {\bf Ref.}\cite{Seto(2015)}\footnote{We emphasize Eq.(\ref{eq: current-chi}) the modification of the classical
current $-ec\int_{\mathbb{R}}d\tau'dx/d\tau(\tau')\times\delta^{4}(x-x(\tau'))$
by $-e\mapsto-e\times q(\chi)$ the replacement of the charge imposes
Eq.(\ref{eq: Rad-formula-sQED}) of the radiation formula \cite{Seto(2015)}.
$q(\chi)$ is regarded as the representative value of the distribution
$\mathscr{P}$.}, 
\begin{align}
j_{\mathrm{Ref.}\cite{Seto(2015)}}^{\mu}(x) & =-ec\int_{\mathbb{R}}d\tau'q(\chi(\tau'))\frac{dx^{\mu}}{d\tau}(\tau')\delta^{4}(x-x(\tau'))\label{eq: current-chi}
\end{align}
is introduced for the QED correction. We can assume that $q(\chi)$
is associated by the probability measure of $\mathscr{P}$ by the
comparison between Eq.(\ref{eq: current-st}) and Eq.(\ref{eq: current-chi}).

\section{Action integral\label{Action}}

Let us give the systematic way to define Eq.(\ref{eq: EOM_St}) and
Eq.(\ref{eq: Maxwell_St}) by the following action integral (issue-$\langle\mathrm{C}\rangle$):
\begin{align}
\mathfrak{S}_{0}[\hat{x},A] & =\int_{\mathbb{R}}d\tau\mathbb{E}\left\llbracket \frac{m_{0}}{2}\mathcal{V}_{\alpha}^{*}(\hat{x}(\tau,\bullet))\mathcal{V}^{\alpha}(\hat{x}(\tau,\bullet))\right\rrbracket \nonumber \\
 & \quad\quad+\int_{\mathbb{R}}d\tau\mathbb{E}\left\llbracket -eA_{\alpha}(\hat{x}(\tau,\bullet))\mathrm{Re}\{\mathcal{V}^{\alpha}(\hat{x}(\tau,\bullet))\}\right\rrbracket \nonumber \\
 & \quad\quad\quad+\int_{\mathbb{R}^{4}}d^{4}x\frac{1}{4\mu_{0}c}[F(x)+\delta f(x)]^{2}\label{eq: action integral-0}
\end{align}
This is an analogy of the classical action integral:
\begin{align}
S_{\mathrm{classical}}[x,A] & =\int_{\mathbb{R}}d\tau\frac{m_{0}}{2}v_{\alpha}(\tau)v^{\alpha}(\tau)\nonumber \\
 & \quad\quad+\int_{\mathbb{R}}d\tau[-eA_{\alpha}(x(\tau))v^{\alpha}(\tau)]\nonumber \\
 & \quad\quad\quad+\int_{\mathbb{R}^{4}}d^{4}x\frac{1}{4\mu_{0}c}[F(x)+\delta f(x)]^{2}
\end{align}
However to implement the sub-equations (\ref{eq: Comp-V},\ref{eq: Inv.})
for a scalar electron,
\begin{align}
\mathfrak{S}[\hat{x},A,\lambda] & =\mathfrak{S}_{0}[\hat{x},A]+\int_{\mathbb{R}}d\tau\frac{d}{d\tau}\mathbb{E}\left\llbracket W(\hat{x}(\tau,\bullet))\right\rrbracket \nonumber \\
 & \quad\quad+\int_{\mathbb{R}}d\tau\lambda(\tau)\mathbb{E}\left\llbracket \mathcal{V}_{\alpha}^{*}(\hat{x}(\tau,\bullet))\mathcal{V}^{\alpha}(\hat{x}(\tau,\bullet))-c^{2}\right\rrbracket \label{eq: action integral}
\end{align}
is hereby proposed in the present model. Where, $d\mathbb{E}\left\llbracket W(\hat{x}(\tau,\bullet))\right\rrbracket /d\tau$
with $W(x)=\mathrm{Re}\{i\hbar\ln\phi(x)\}$ is an uncertainty of
its Lagrangian, and the final term in RHS represents its holonomic
constraint. The following Euler-Lagrange-Yasue equation is derived
by the variation of Eq.(\ref{eq: action integral}) w.r.t.\ $\hat{x}$
(see also {\bf Ref.}\cite{Yasue}):
\begin{equation}
\frac{\partial L_{{\rm particle}}}{\partial\hat{x}^{\mu}}-\mathfrak{D}_{\tau}^{*}\frac{\partial L_{{\rm particle}}}{\partial\mathcal{V}^{\mu}}-\mathfrak{D}_{\tau}\frac{\partial L_{{\rm particle}}}{\partial\mathcal{V}^{*\mu}}=0\label{eq: Euler-Lagrange}
\end{equation}
\begin{align}
\frac{\partial L_{{\rm particle}}}{\partial\mathcal{V}^{\mu}}+\frac{\partial L_{{\rm particle}}}{\partial\mathcal{V}^{*\mu}} & =0\label{eq: for sub-eq}
\end{align}
\begin{align}
L_{{\rm particle}}(\hat{x},\mathcal{V},\mathcal{V^{\mathrm{*}}}) & \coloneqq\frac{m_{0}}{2}\mathcal{V}_{\alpha}^{*}\mathcal{V}^{\alpha}-eA_{\alpha}(\hat{x})\mathrm{Re}\{\mathcal{V}^{\alpha}\}\nonumber \\
 & \quad\quad+\mathrm{Re}\{\mathcal{V}^{\alpha}\}\cdot\partial_{\alpha}W(\hat{x})+\lambda(\tau)\times(\mathcal{V}_{\alpha}^{*}\mathcal{V}^{\alpha}-c^{2})
\end{align}
In addition, let us also consider the variation of Eq.(\ref{eq: action integral})
w.r.t\ $\lambda$ for its holonomic constraint. Then, the following
equations are the results w.r.t.\ a scalar electron:
\begin{align*}
[m_{0}+2\lambda(\tau)]\times\mathrm{Re}\{\mathfrak{D_{\tau}}\mathcal{V}^{\mu}(\hat{x}(\tau,\omega))\} & =\mathrm{Re}\{-e\mathcal{\hat{V}}_{\nu}(\hat{x}(\tau,\omega))F^{\mu\nu}(\hat{x}(\tau,\omega))\}
\end{align*}
\begin{align*}
[m_{0}+2\lambda(\tau)]\times\mathrm{Re}\{\mathcal{V}^{\mu}(\hat{x}(\tau,\omega))\} & =\mathrm{Re}\{i\hbar\partial^{\mu}\ln\phi(\hat{x}(\tau,\bullet))\}+eA^{\mu}(\hat{x}(\tau,\bullet))
\end{align*}
\begin{align}
\mathbb{E}\left\llbracket \mathcal{V}_{\alpha}^{*}(\hat{x}(\tau,\bullet))\mathcal{V}^{\alpha}(\hat{x}(\tau,\bullet))\right\rrbracket -c^{2} & =0
\end{align}
For satisfying the above three, $\lambda=0$ is required. Thus, we
can find not only Eq.(\ref{eq: EOM_St}), but Eq.(\ref{eq: Comp-V})
the definition of $\mathcal{V}^{\mu}$ and Eq.(\ref{eq: Inv.}) from
the action integral.

The Maxwell equation (\ref{eq: Maxwell_St}) is derived by the variation
of Eq.(\ref{eq: action integral}) for $A$; $\partial_{\mu}[\partial\mathfrak{L}_{\mathrm{field}}/\partial(\partial_{\mu}A_{\nu})]-\partial\mathfrak{L}_{\mathrm{field}}/\partial A_{\nu}=0$
with the Lagrangian density
\begin{align}
\mathfrak{L}_{\mathrm{field}} & =\frac{1}{4\mu_{0}c}[F(x)+\delta f(x)]^{2}+\mathbb{E}\left\llbracket -\int_{\mathbb{R}}d\tau\,eA_{\alpha}(x)\mathrm{Re}\left\{ \mathcal{V}^{\alpha}(x)\right\} \delta^{4}(x-\hat{x}(\tau,\bullet))\right\rrbracket .
\end{align}
Thus, a radiating Brownian quanta is illustrated by Eq.(\ref{eq: action integral})
with Eq.(\ref{eq: Kinematics_St}) the kinematics of a scalar electron
for defining its probability $\mathscr{P}$ for the action integral
$\mathbb{E}\llbracket\int_{\mathbb{R}}d\tau L_{{\rm particle}}(\hat{x},\mathcal{V},\mathcal{V^{\mathrm{*}}})\rrbracket$.

\subsection{Remark: derivation of Eq.(\ref{eq: Euler-Lagrange}) and Eq.(\ref{eq: for sub-eq}):}

For $\mathfrak{D}_{\tau}^{\pm}\hat{x}^{\mu}(\tau,\omega)=\mathcal{V}_{\pm}^{\mu}(\hat{x}(\tau,\omega))$,
$L_{{\rm particle}}(\hat{x},\mathcal{V},\mathcal{V^{\mathrm{*}}})=L_{0}(\hat{x},\mathcal{V}_{+},\mathcal{V}_{-})$,
and the following relations such that
\begin{equation}
\frac{\partial L_{0}}{\partial\hat{x}^{\mu}}=\frac{\partial L_{{\rm particle}}}{\partial\hat{x}^{\mu}},
\end{equation}
\begin{equation}
\frac{\partial L_{0}}{\partial\mathcal{V}_{+}^{\mu}}=\frac{1-i}{2}\frac{\partial L_{{\rm particle}}}{\partial\mathcal{V}^{\mu}}+\frac{1+i}{2}\frac{\partial L_{{\rm particle}}}{\partial\mathcal{V}^{*\mu}},
\end{equation}
\begin{equation}
\frac{\partial L_{0}}{\partial\mathcal{V}_{-}^{\mu}}=\frac{1+i}{2}\frac{\partial L_{{\rm particle}}}{\partial\mathcal{V}^{\mu}}+\frac{1-i}{2}\frac{\partial L_{{\rm particle}}}{\partial\mathcal{V}^{*\mu}},
\end{equation}
Eqs.(\ref{eq: Euler-Lagrange}-\ref{eq: for sub-eq}) are derived
via 
\begin{align}
\delta\int_{\mathbb{R}}d\tau\mathbb{E}\left\llbracket L_{0}(\hat{x},\mathcal{V}_{+},\mathcal{V}_{-})\right\rrbracket  & =\int_{\mathbb{R}}d\tau\mathbb{E}\left\llbracket \frac{\partial L_{0}}{\partial\hat{x}^{\mu}}\delta\hat{x}^{\mu}+\frac{\partial L_{0}}{\partial\mathcal{V}_{+}^{\mu}}\delta\mathcal{V}_{+}^{\mu}+\frac{\partial L_{0}}{\partial\mathcal{V}_{-}^{\mu}}\delta\mathcal{V}_{-}^{\mu}\right\rrbracket \nonumber \\
 & =\int_{\mathbb{R}}d\tau\mathbb{E}\left\llbracket \left(\delta\hat{x}^{\mu}\cdot\frac{\partial}{\partial\hat{x}^{\mu}}+\mathfrak{D}_{\tau}\delta\hat{x}^{\mu}\cdot\frac{\partial}{\partial\mathcal{V}^{\mu}}+\mathfrak{D}_{\tau}^{*}\delta\hat{x}^{\mu}\cdot\frac{\partial}{\partial\mathcal{V}^{*\mu}}\right)L_{{\rm particle}}\right\rrbracket 
\end{align}
with Eq.(\ref{eq: Nelson-partial integral2}) of Nelson's partial
integral formula, namely,
\begin{align}
\delta\int_{\mathbb{R}}d\tau\mathbb{E}\left\llbracket L_{{\rm particle}}(\hat{x},\mathcal{V},\mathcal{V^{\mathrm{*}}})\right\rrbracket  & =\int_{\mathbb{R}}d\tau\mathbb{E}\left\llbracket \delta\hat{x}^{\mu}\left(\frac{\partial}{\partial\hat{x}^{\mu}}-\mathfrak{D}_{\tau}^{*}\frac{\partial}{\partial\mathcal{V}^{\mu}}-\mathfrak{D}_{\tau}\frac{\partial}{\partial\mathcal{V}^{*\mu}}\right)L_{{\rm particle}}\right\rrbracket \nonumber \\
 & \quad\quad+\int_{\mathbb{R}}d\tau\frac{d}{d\tau}\mathbb{E}\left\llbracket \delta\hat{x}^{\mu}\left(\frac{\partial}{\partial\mathcal{V}^{\mu}}+\frac{\partial}{\partial\mathcal{V}^{*\mu}}\right)L_{{\rm particle}}\right\rrbracket .
\end{align}
With respect to any $\{\delta\hat{x}(\tau,\omega)\}_{(\tau,\omega)\in\mathbb{R}\times\varOmega}$,
Eq.(\ref{eq: Euler-Lagrange}) and Eq.(\ref{eq: for sub-eq}) are
fulfilled.

\section{Radiation reaction\label{RR}}

Since the Maxwell equation is given by Eq.(\ref{eq: Maxwell_St}),
let us describe RR on a Brownian scalar electron as the quantization
of the LAD equation (\ref{eq: LAD eq}-\ref{eq: LAD field}) (the
issue-$\langle\mathrm{D}\rangle$). We also discuss its Ehrenfest's
theorem, its classical limit, Landau-Lifshitz's approximation, and
the radiation formula corresponding to the well-known classical model
in this section. 

\subsection{Quantization of the LAD equation}

By recalling Eqs.(\ref{eq: LAD eq}-\ref{eq: LAD field}) in the classical
regime, $F_{\mathrm{LAD}}$ by Eq.(\ref{eq: LAD field}) is a homogeneous
solution of $\partial_{\mu}F^{\mu\nu}=-ec\mu_{0}\int_{\mathbb{R}}d\tau\,v^{\nu}(\tau)\times\delta^{4}(x-x(\tau))$.
The readers can find the derivation of $F_{\mathrm{LAD}}$ in {\bf Ref.}\cite{A.Sokolov(1986),Dirac(1938),Hartemann(2002)}.
We explore $\mathfrak{F}$ the RR field given by Eq.(\ref{eq: Maxwell_St})
corresponding to $F_{\mathrm{LAD}}$. Let us solve Eq.(\ref{eq: Maxwell_St})
as the mimic of the LAD model at $x=\hat{x}(\tau,\omega)$ under the
Lorenz gauge. Where, $\varOmega$ is a set of all  sample paths in
our physics. Instead of Eq.(\ref{eq: Maxwell_St}), consider
\begin{align}
\partial_{\mu}\mathcal{F}_{(\mathring{\pm})}^{\mu\nu}(x)= & \mu_{0}j_{\mathrm{stochastic}}^{\nu}(x)
\end{align}
with $\partial_{\mu}\mathcal{A}_{(\mathring{\pm})}^{\mu}(x)=0$. Where,
$\mathcal{F}_{(\mathring{\pm})}^{\mu\nu}\coloneqq\partial^{\mu}\mathcal{A}_{(\mathring{\pm})}^{\nu}-\partial^{\nu}\mathcal{A}_{(\mathring{\pm})}^{\mu}$
are the retarded ($\mathring{+}$) / advanced ($\mathring{-}$) fields,
namely,
\begin{align}
\partial_{\mu}\partial^{\mu}\mathcal{A}_{(\mathring{\pm})}^{\nu}(x)= & \mu_{0}j_{\mathrm{stochastic}}^{\nu}(x).
\end{align}
 Thus, their potentials are given by
\begin{align}
\mathcal{A}_{(\mathring{\pm})}^{\nu}(x) & =-ec\mu_{0}\int_{\mathbb{R}}d\tau'\int_{\varOmega}d\mathscr{P}(\omega')\mathrm{Re}\left\{ \mathcal{V}^{\nu}(\hat{x}(\tau',\omega'))\right\} G_{\mathrm{(\mathring{\pm})}}(x,\hat{x}(\tau',\omega'))\label{eq: A_full}
\end{align}
or by $\int_{\varOmega}f(\hat{x}(\tau',\omega'))d\mathscr{P}(\omega')=\int_{x'}f(x')p(x',\tau')dx'^{4}$,
\begin{align}
\mathcal{A}_{(\mathring{\pm})}^{\nu}(x) & =-ec\mu_{0}\int_{\mathbb{R}}d\tau'\int_{x'}dx'^{4}\mathrm{Re}\left\{ \mathcal{V}^{\nu}(x')\right\} G_{\mathrm{(\mathring{\pm})}}(x,x')p(x',\tau').\label{eq: A_full-2}
\end{align}
Where, $G_{\mathrm{(\mathring{\pm})}}$ are the retarded/advanced
Green functions defined by $\partial_{\alpha}\partial^{\alpha}G_{\mathrm{(\mathring{\pm})}}(x,x')=\delta^{4}(x-x')$.
Therefore, 
\begin{align}
\mathcal{F}_{(\mathring{\pm})}^{\mu\nu}(x) & =-ec\mu_{0}\int_{\mathbb{R}}d\tau'\int_{\varOmega}d\mathscr{P}(\omega')\nonumber \\
 & \quad\quad\left[\mathrm{Re}\{\mathcal{V}^{\nu}(\hat{x}(\tau',\omega'))\}\cdot\partial^{\mu}-(\mu\leftrightarrow\nu)\right]G_{(\mathring{\pm})}(x,\hat{x}(\tau',\omega')).\label{eq: F-full}
\end{align}
For the calculation of $\mathcal{F}_{(\mathring{\pm})}$, consider
\[
V_{(\tau,\omega)}\coloneqq\mathrm{supp}(\int_{\mathbb{R}}p(\circ,\tau')d\tau')\cap\{\left.x'\right|||x'-\hat{x}(\tau,\omega)||^{2}=0\}
\]
a neighborhood of a quanta on its light cone. For each $\omega'$,
there is the largest $T^{(\omega')}\coloneqq[\tau-\tau_{1}^{(\omega')},\tau+\tau_{2}^{(\omega')}]$
such that $\hat{x}(\tau-\tau_{1}^{(\omega')},\omega')$ and $\hat{x}(\tau+\tau_{2}^{(\omega')},\omega')$
stay in $V_{(\tau,\omega)}$. Thus, $\varOmega_{(\tau,\omega)}\coloneqq\{\omega'|V_{(\tau,\omega)}\cap\{\hat{x}(\tau',\omega')\}_{\tau'\in T^{(\omega')}}\ne\emptyset\}$
is the largest set of feasible paths for Eq.(\ref{eq: F-full}) when
$x=\hat{x}(\tau,\omega)$, 
\begin{align}
\mathcal{F}_{(\mathring{\pm})}^{\mu\nu}(\hat{x}(\tau,\omega)) & =-ec\mu_{0}\int_{\varOmega_{(\tau,\omega)}}d\mathscr{P}(\omega')\int_{T^{(\omega')}}d\tau'\nonumber \\
 & \quad\quad\left.\left[\mathrm{Re}\{\mathcal{V}^{\nu}(\hat{x}(\tau',\omega'))\}\cdot\partial^{\mu}-(\mu\leftrightarrow\nu)\right]G_{(\mathring{\pm})}(x,\hat{x}(\tau',\omega'))\right|_{x=\hat{x}(\tau,\omega)}.\label{eq: F_X_restrict}
\end{align}
Where, Eq.(\ref{eq: F_X_restrict}) restricts the domain of the integral
for $\tau'$ on $T^{(\omega')}$. Let us assume the each duration
in $\{T^{(\omega')}\}_{\omega'\in\varOmega_{(\tau,\omega)}}$ is finite
and enough short since $G_{\mathrm{(\mathring{\pm})}}$ decays as
$(\mathrm{\mathrm{distance}\,of\,two\,points})^{-1}$. Thus, the stochastic-Taylor
expansion of a function $\mathrm{Re}\{f(\hat{x}(\tau',\omega'))\}$
at $\tau$ is employed: 
\begin{align}
\mathrm{Re}\left\{ f(\hat{x}(\tau',\omega'))\right\}  & =\sum_{m=0}^{\infty}\frac{(\tau'-\tau)^{m}}{m!}\mathrm{Re}\{\mathfrak{D}_{\tau}^{m}f(\hat{x}(\tau,\omega'))\}+R(f)\label{eq: Taylor}
\end{align}
Where, $\mathfrak{D}_{\tau}=[\mathcal{V}^{\mu}(x)+i\lambda^{2}/2\times\partial^{\mu}]\cdot\partial_{\mu}$,
$\mathfrak{D}_{\tau}^{0}=\mathrm{identity}$, and $R(f)$ is its reminder,
\begin{align}
R(f) & =\lambda\times\sum_{m=0}^{n}\int_{\tau}^{\tau'}d\tau_{1}\int_{\tau}^{\tau_{1}}d\tau_{2}\cdots\int_{\tau}^{\tau_{m-2}}d\tau_{m-1}\nonumber \\
 & \quad\quad\mathrm{Re}\left\{ \int_{\tau}^{\tau_{m-1}}\hat{d}W^{\alpha}(\tau_{m},\omega')\cdot\partial_{\alpha}\mathfrak{D}_{\tau}^{m}f(\hat{x}(\tau_{m},\omega'))\right\} .
\end{align}
This is produced by the iteration of the formula $f(\hat{x}(\tau',\omega'))=f(\hat{x}(\tau,\omega'))+\int_{\tau}^{\tau'}\hat{d}f(\hat{x}(\sigma,\omega'))$,
the It\^{o} integral of Eq.(\ref{eq: C-Ito formula})\footnote{Since  $f(\hat{x}(\tau_{b},\omega))-f(\hat{x}(\tau_{a},\omega))=\int_{\tau_{a}}^{\tau_{b}}d_{\pm}f(\hat{x}(\tau,\omega))$,
\begin{align*}
f(\hat{x}(\tau_{b},\omega))-f(\hat{x}(\tau_{a},\omega)) & =\frac{1}{2}\left[\int_{\tau_{a}}^{\tau_{b}}d_{+}f(\hat{x}(\tau,\omega))+\int_{\tau_{a}}^{\tau_{b}}d_{-}f(\hat{x}(\tau,\omega))\right]\\
 & \quad\quad-\frac{i}{2}\left[\int_{\tau_{a}}^{\tau_{b}}d_{+}f(\hat{x}(\tau,\omega))-\int_{\tau_{a}}^{\tau_{b}}d_{-}f(\hat{x}(\tau,\omega))\right].
\end{align*}
}, i.e.,
\begin{align}
f(\hat{x}(\tau',\omega')) & =f(\hat{x}(\tau,\omega'))+\int_{\tau}^{\tau'}\mathfrak{D}_{\tau}f(\hat{x}(\tau'',\omega'))d\tau''\nonumber \\
 & \quad\quad+\lambda\times\int_{\tau}^{\tau'}\partial_{\mu}f(\hat{x}(\tau'',\omega'))\cdot\hat{d}W^{\mu}(\tau'',\omega').
\end{align}
$\mathrm{Re}\{\mathfrak{D}_{\tau}^{m}\hat{x}(\tau,\omega')\}$ corresponds
to $d^{m}x(\tau)/d\tau^{m}$ in Eqs.(\ref{eq: LAD eq}-\ref{eq: LAD field})
for $m=0,1,2,\cdots$. Let us introduce
\begin{align}
\varDelta\hat{x}(\tau,\tau',\omega') & \coloneqq-\sum_{m=1}^{\infty}\frac{(\tau'-\tau)^{m}}{m!}\mathrm{Re}\left\{ \mathfrak{D}_{\tau}^{m}\hat{x}(\tau,\omega')\right\} 
\end{align}
such that $\varDelta\hat{x}(\tau,\tau',\omega')-[\hat{x}(\tau,\omega')-\hat{x}(\tau',\omega')]=R(x)$.
Then, we express the Green function by
\begin{align}
\left.G_{(\mathring{\pm})}(x,\hat{x}(\tau',\omega'))\right|_{x=\hat{x}(\tau,\omega')\,\mathrm{for}\,\omega'\in\varOmega_{(\tau,\omega)}} & =\frac{\theta(\mathring{\pm}\varDelta\hat{x}^{0}(\tau,\tau',\omega'))\times\delta(\tau-\tau')}{4\pi\left|\begin{gathered}\varDelta\hat{x}_{\alpha}(\tau,\tau',\omega')\cdot\frac{d\varDelta\hat{x}^{\alpha}(\tau,\tau',\omega')}{d\tau'}\end{gathered}
\right|}+O(R(x))
\end{align}
and its derivative by
\begin{align}
\partial^{\mu}\left.G_{(\mathring{\pm})}(x,\hat{x}(\tau',\omega'))\right|_{x=\hat{x}(\tau,\omega')\,\mathrm{for}\,\omega'\in\varOmega_{(\tau,\omega)}} & =-\frac{\varDelta\hat{x}^{\mu}(\tau,\tau',\omega')}{\begin{gathered}\varDelta\hat{x}_{\alpha}(\tau,\tau',\omega')\cdot\frac{d\varDelta\hat{x}^{\alpha}(\tau,\tau',\omega')}{d\tau'}\end{gathered}
}\nonumber \\
 & \quad\quad\times\frac{d}{d\tau'}G_{(\mathring{\pm})}(\hat{x}(\tau,\omega'),\hat{x}(\tau',\omega'))+O(R(x)).
\end{align}
The calculation of $\mathcal{F}_{(\mathring{\pm})}(\hat{x}(\tau,\omega))$
requires us to formulate $\partial^{\mu}G_{(\mathring{\pm})}\left.(x,\hat{x}(\tau',\omega'))\right|_{x=\hat{x}(\tau,\omega)}$.
We evaluate it by
\begin{align}
\partial^{\mu}\left.G_{(\mathring{\pm})}(x,\hat{x}(\tau',\omega'))\right|_{x=\hat{x}(\tau,\omega)} & =\partial^{\mu}G_{(\mathring{\pm})}\left.(x,\hat{x}(\tau',\omega'))\right|_{x=\hat{x}(\tau,\omega')\,\mathrm{for}\,\omega'\in\varOmega_{(\tau,\omega)}}\nonumber \\
 & \quad\quad+O(\hat{x}(\tau,\omega)-\hat{x}(\tau,\omega'))
\end{align}
with
\begin{align}
\varDelta\hat{x}(\tau,\tau',\omega')-[\hat{x}(\tau,\omega)-\hat{x}(\tau',\omega')] & =O(R(x),\hat{x}(\tau,\omega)-\hat{x}(\tau,\omega')).
\end{align}
$O(R(x),\hat{x}(\tau,\omega)-\hat{x}(\tau,\omega'))$ means the declaration
that we only investigate the paths in $\varOmega_{(\tau,\omega)}\cap\{\omega'|\hat{x}(\tau,\omega')=\hat{x}(\tau,\omega)\}$
at $\tau$. The complex velocity $\mathrm{Re}\{\mathcal{V}^{\nu}(\hat{x}(\tau',\omega'))\}$
is stochastic-Taylor expanded by Eq.(\ref{eq: Taylor}), too. Now,
the RR field $\mathfrak{F}$ is evaluated such that
\begin{align}
\mathfrak{F}(\hat{x}(\tau,\omega))-\frac{\mathcal{F}_{(\mathring{+})}(\hat{x}(\tau,\omega))-\mathcal{F}_{(\mathring{-})}(\hat{x}(\tau,\omega))}{2} & =O(R(x),R(\mathcal{V}),\hat{x}(\tau,\omega)-\hat{x}(\tau,\omega'))
\end{align}
by following the similar calculation given by {\bf Ref.}\cite{A.Sokolov(1986),Dirac(1938),Hartemann(2002)}.
For $F_{\mathrm{ex}}$ the external field(s) such that $\partial_{\mu}F_{\mathrm{ex}}^{\mu\nu}=0$
as a laser field, quantum dynamics corresponding to Eqs.(\ref{eq: LAD eq}-\ref{eq: LAD field})
is hereby imposed w.r.t.\ $F=F_{\mathrm{ex}}+\mathfrak{F}$ in Eqs.(\ref{eq: EOM_St},\ref{eq: Maxwell_St})
by rounding $O(R(x),R(\mathcal{V}),\hat{x}(\tau,\omega)-\hat{x}(\tau,\omega'))$
into $\delta f$ the singularity of the radiation field:
\begin{align}
m_{0}\mathfrak{D_{\tau}}\mathcal{V}^{\mu}(\hat{x}(\tau,\omega))= & -eF_{\mathrm{ex}}^{\mu\nu}(\hat{x}(\tau,\omega))\mathcal{V}_{\nu}(\hat{x}(\tau,\omega))-e\mathfrak{F}^{\mu\nu}(\hat{x}(\tau,\omega))\mathcal{V}_{\nu}(\hat{x}(\tau,\omega))\label{eq: EOM-RR}
\end{align}
\begin{align}
\mathfrak{F}^{\mu\nu}(\hat{x}(\tau,\omega)) & =-\frac{m_{0}\tau_{0}}{ec^{2}}\int_{\varOmega_{(\tau,\omega)}}d\mathscr{P}(\omega')\left[\begin{gathered}\dot{a}^{\mu}(\hat{x}(\tau,\omega'))\cdot\mathrm{Re}\{\mathcal{V}^{\nu}(\hat{x}(\tau,\omega'))\}\\
-\dot{a}^{\nu}(\hat{x}(\tau,\omega'))\cdot\mathrm{Re}\{\mathcal{V}^{\mu}(\hat{x}(\tau,\omega'))\}
\end{gathered}
\right]\label{eq: field-RR}
\end{align}
\begin{align}
\dot{a}^{\mu}(x) & \coloneqq\frac{c^{4}}{[\mathrm{Re}\{\mathcal{V}_{\alpha}(x)\}\cdot\mathrm{Re}\{\mathcal{V}^{\alpha}(x)\}]^{2}}\mathrm{Re}\{\mathfrak{D}_{\tau}^{2}\mathcal{V}(x)\}\nonumber \\
 & \quad\quad\quad\quad\quad\quad\quad\quad-\frac{27}{8}\frac{c^{4}\mathrm{Re}\{\mathcal{V}_{\alpha}(x)\}\cdot\mathrm{Re}\{\mathfrak{D}_{\tau}\mathcal{V}^{\alpha}(x)\}}{[\mathrm{Re}\{\mathcal{V}_{\alpha}(x)\}\cdot\mathrm{Re}\{\mathcal{V}^{\alpha}(x)\}]^{3}}\mathrm{Re}\{\mathfrak{D}_{\tau}\mathcal{V}^{\mu}(x)\}\label{eq: a-dot}
\end{align}
Where, $\mathrm{Re}\left\{ \mathcal{V}(\hat{x}(\tau,\omega))\right\} $
doesn't satisfies $v_{\alpha}v^{\alpha}=c^{2}$ of a common rule in
classical dynamics. The formulation of RR with $v_{\alpha}v^{\alpha}\neq c^{2}$
is found in {\bf Ref.}\cite{Barut-Unal}\footnote{Barut and Unal proposed their model as radiation reaction acting on
a spinning particle and they considered $v_{\alpha}v^{\alpha}\neq c^{2}$
is a natural requirement to express its Zitterbewegung. We can employ
the same calculation in the second page of {\bf Ref.}\cite{Barut-Unal}
to derive Eqs.(\ref{eq: EOM-RR}-\ref{eq: a-dot}) since $\mathrm{Re}\{\mathcal{V}_{\alpha}(x)\}\cdot\mathrm{Re}\{\mathcal{V}^{\alpha}(x)\}\neq0$.}. 

Consider that average trajectory. By restricting
\begin{equation}
\varOmega_{\tau}^{\mathrm{ave}}\coloneqq\left.\varOmega_{(\tau,\omega)}\right|_{\hat{x}(\tau,\omega)=\langle\hat{x}^{\nu}\rangle_{\tau}},
\end{equation}
$\mathscr{P}(\varOmega_{\tau}^{\mathrm{ave}})$ is the probability
which a quanta stays at $\langle\hat{x}\rangle_{\tau}\coloneqq\mathbb{E}\llbracket\hat{x}(\tau,\bullet)\rrbracket$.
Then, the lowest order of Ehrenfest's theorem (Eq.(\ref{eq: Ehrenfest-expansion}))
is below:
\begin{align}
m_{0}\frac{d^{2}\langle\hat{x}^{\mu}\rangle_{\tau}}{d\tau^{2}} & -e\left[F_{\mathrm{ex}}^{\mu\nu}(\langle\hat{x}\rangle_{\tau})+\mathfrak{F}^{\mu\nu}(\langle\hat{x}\rangle_{\tau})\right]\frac{d\langle\hat{x}_{\nu}\rangle_{\tau}}{d\tau}+O(\langle\otimes^{2}\delta\hat{x}\rangle_{\tau})\label{eq: EOM-RR-ave}
\end{align}
\begin{align}
\mathfrak{F}^{\mu\nu}(\langle\hat{x}\rangle_{\tau}) & =-\frac{m_{0}\tau_{0}\mathscr{P}(\varOmega_{\tau}^{\mathrm{ave}})}{ec^{2}}\left[\frac{d^{3}\langle\hat{x}^{\mu}\rangle_{\tau}}{d\tau^{3}}\cdot\frac{d\langle\hat{x}^{\nu}\rangle_{\tau}}{d\tau}-\frac{d^{3}\langle\hat{x}^{\nu}\rangle_{\tau}}{d\tau^{3}}\cdot\frac{d\langle\hat{x}^{\mu}\rangle_{\tau}}{d\tau}\right]\label{eq: field-RR-ave}
\end{align}
A trajectory of $\langle\hat{x}^{\mu}\rangle_{\tau}$ is drawn by
Eqs.(\ref{eq: EOM-RR-ave}-\ref{eq: field-RR-ave}). The non-relativistic
limit of Eqs.(\ref{eq: EOM-RR-ave}-\ref{eq: field-RR-ave}) is found
in {\bf Ref.}\cite{Ozaki-Sasabe}. Where, the following simple relation
is obtained: 
\begin{equation}
\mathfrak{F}(\langle\hat{x}\rangle_{\tau})=\mathscr{P}(\varOmega_{\tau}^{\mathrm{ave}})\times F_{\mathrm{LAD}}(\langle\hat{x}\rangle_{\tau})\label{eq: field-relation1}
\end{equation}

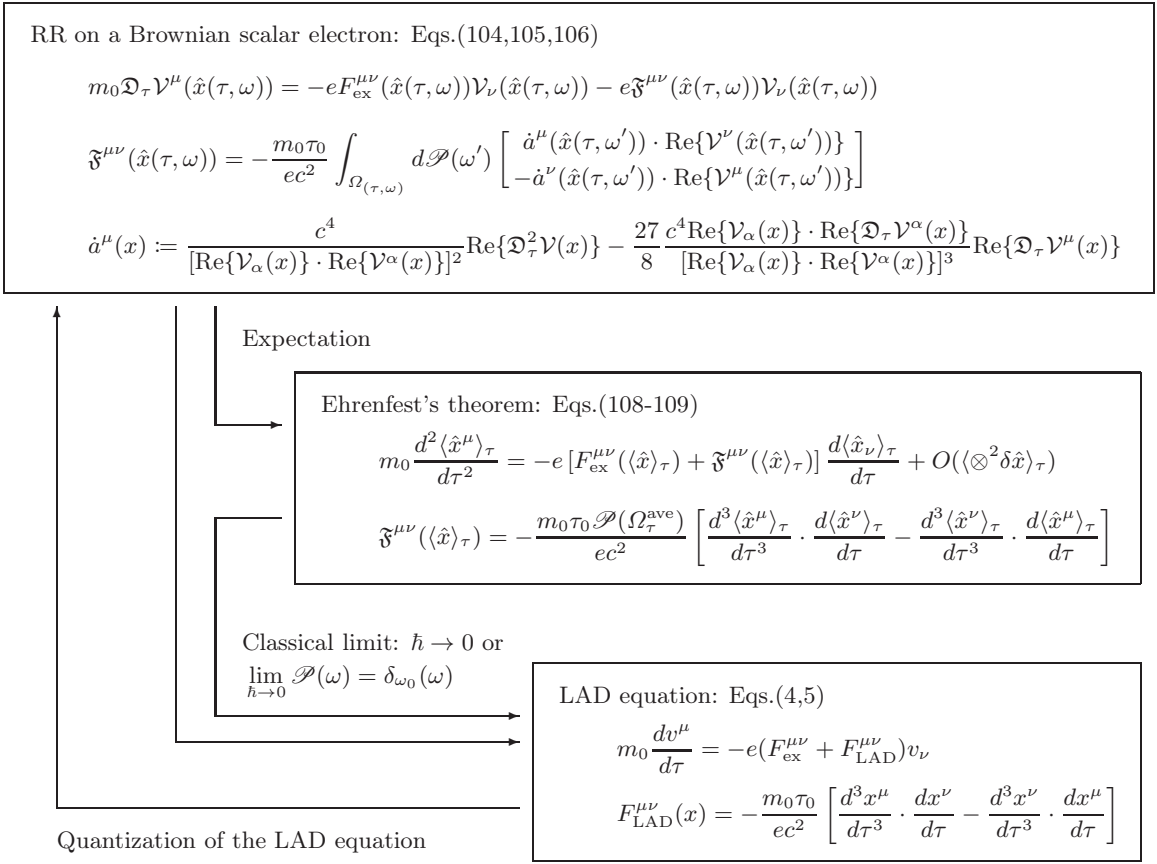
\begin{figure*}[t]
\noindent \begin{centering}
\begin{picture}(500,330)

\put(0,325){\line(1,0){435}}
\put(0,215){\line(1,0){435}}
\put(0,325){\line(0,-1){110}}
\put(435,325){\line(0,-1){110}}

\put(10,310){\fontsize{9pt}{0cm}\selectfont RR on a Brownian scalar electron: Eqs.(\ref{eq: EOM-RR},\ref{eq: field-RR},\ref{eq: a-dot})}

\put(30,290){$ \fontsize{9pt}{0cm}\selectfont \begin{gathered} m_{0}\mathfrak{D_{\tau}}\mathcal{V}^{\mu}(\hat{x}(\tau,\omega))= -eF_{\mathrm{ex}}^{\mu\nu}(\hat{x}(\tau,\omega))\mathcal{V}_{\nu}(\hat{x}(\tau,\omega))\nonumber  -e\mathfrak{F}^{\mu\nu}(\hat{x}(\tau,\omega))\mathcal{V}_{\nu}(\hat{x}(\tau,\omega))  \end{gathered} $}

\put(30,262){$ \fontsize{9pt}{0cm}\selectfont \begin{gathered} \mathfrak{F}^{\mu\nu}(\hat{x}(\tau,\omega)) = -\frac{m_{0}\tau_{0}}{ec^{2}}\int_{\varOmega_{(\tau,\omega)}}d\mathscr{P}(\omega')  \left[ \begin{gathered} \dot{a}^{\mu}(\hat{x}(\tau,\omega'))\cdot\mathrm{Re}\{\mathcal{V}^{\nu}(\hat{x}(\tau,\omega'))\}\\ -\dot{a}^{\nu}(\hat{x}(\tau,\omega'))\cdot\mathrm{Re}\{\mathcal{V}^{\mu}(\hat{x}(\tau,\omega'))\}  \end{gathered} \right] \end{gathered} $}

\put(30,230){$ \fontsize{9pt}{0cm}\selectfont \begin{gathered} \dot{a}^{\mu}(x)\coloneqq  \frac{c^{4}}{[\mathrm{Re}\{\mathcal{V}_{\alpha}(x)\}\cdot\mathrm{Re}\{\mathcal{V}^{\alpha}(x)\}]^{2}}\mathrm{Re}\{\mathfrak{D}_{\tau}^{2}\mathcal{V}(x)\} -\frac{27}{8}\frac{c^{4}\mathrm{Re}\{\mathcal{V}_{\alpha}(x)\}\cdot\mathrm{Re}\{\mathfrak{D}_{\tau}\mathcal{V}^{\alpha}(x)\}}{[\mathrm{Re}\{\mathcal{V}_{\alpha}(x)\}\cdot\mathrm{Re}\{\mathcal{V}^{\alpha}(x)\}]^{3}}\mathrm{Re}\{\mathfrak{D}_{\tau}\mathcal{V}^{\mu}(x)\}  \end{gathered} $}

\put(110,185){\line(1,0){320}}
\put(110,105){\line(1,0){320}}
\put(110,185){\line(0,-1){80}}
\put(430,185){\line(0,-1){80}}

\put(120,170){\fontsize{9pt}{0cm}\selectfont Ehrenfest's theorem: Eqs.(\ref{eq: EOM-RR-ave}-\ref{eq: field-RR-ave})}

\put(140,150){$ \fontsize{9pt}{0cm}\selectfont \begin{gathered} m_{0}\frac{d^{2}\langle\hat{x}^{\mu}\rangle_{\tau}}{d\tau^{2}}=-e\left[F_{\mathrm{ex}}^{\mu\nu}(\langle\hat{x}\rangle_{\tau})+\mathfrak{F}^{\mu\nu}(\langle\hat{x}\rangle_{\tau})\right]\frac{d\langle\hat{x}_{\nu}\rangle_{\tau}}{d\tau} +O(\langle\otimes^{2}\delta\hat{x}\rangle_{\tau}) \end{gathered} $}

\put(140,120){$ \fontsize{9pt}{0cm}\selectfont \begin{gathered} \mathfrak{F}^{\mu\nu}(\langle\hat{x}\rangle_{\tau})=-\frac{m_{0}\tau_{0}\mathscr{P}(\varOmega_{\tau}^{\mathrm{ave}})}{ec^{2}}  \left[\frac{d^{3}\langle\hat{x}^{\mu}\rangle_{\tau}}{d\tau^{3}}\cdot\frac{d\langle\hat{x}^{\nu}\rangle_{\tau}}{d\tau}-\frac{d^{3}\langle\hat{x}^{\nu}\rangle_{\tau}}{d\tau^{3}}\cdot\frac{d\langle\hat{x}^{\mu}\rangle_{\tau}}{d\tau}\right] \end{gathered} $}

\put(200,75){\line(1,0){230}}
\put(200,0){\line(1,0){230}}
\put(200,75){\line(0,-1){75}}
\put(430,75){\line(0,-1){75}}

\put(210,60){\fontsize{9pt}{0cm}\selectfont LAD equation: Eqs.(\ref{eq: LAD eq},\ref{eq: LAD field})}

\put(230,40){$ \fontsize{9pt}{0cm}\selectfont \begin{gathered} m_{0}\frac{dv^{\mu}}{d\tau}  =-e(F_{\mathrm{ex}}^{\mu\nu}+F_{\mathrm{LAD}}^{\mu\nu})v_{\nu} \end{gathered} $}

\put(230,15){$ \fontsize{9pt}{0cm}\selectfont \begin{gathered} F_{\mathrm{LAD}}^{\mu\nu}(x)  =  -\frac{m_{0}\tau_{0}}{ec^{2}}\left[\frac{d^{3}x^{\mu}}{d\tau^{3}}\cdot\frac{dx^{\nu}}{d\tau}-\frac{d^{3}x^{\nu}}{d\tau^{3}}\cdot\frac{dx^{\mu}}{d\tau}\right] \end{gathered} $}

\put(90,195){\fontsize{9pt}{0cm}\selectfont Expectation}
\put(80,210){\line(0,-1){45}}
\put(80,165){\vector(1,0){25}}

\put(80,130){\line(1,0){25}}
\put(80,130){\line(0,-1){75}}
\put(80,55){\vector(1,0){115}}

\put(90,80){\fontsize{9pt}{0cm}\selectfont Classical limit: $ \hbar \rightarrow 0 $ or}
\put(90,65){$ \fontsize{9pt}{0cm}\selectfont \begin{gathered} \lim_{\hbar\rightarrow0}\mathscr{P}(\omega)=\delta_{\omega_{0}}(\omega) \end{gathered} $}

\put(65,210){\line(0,-1){165}}
\put(65,45){\vector(1,0){130}}

\put(20,5){\fontsize{9pt}{0cm}\selectfont Quantization of the LAD equation}
\put(20,20){\vector(0,1){190}}
\put(20,20){\line(1,0){175}}

\end{picture}
\par\end{centering}
\caption{\label{summary-RR}The relation of the RR models. The present model
for a Brownian scalar electron is the set of Eqs.(\ref{eq: EOM-RR},\ref{eq: field-RR},\ref{eq: a-dot}).
The set of Eqs.(\ref{eq: EOM-RR-ave}-\ref{eq: field-RR-ave}) is
imposed by Ehrenfest's theorem w.r.t.\ Eqs.(\ref{eq: EOM-RR},\ref{eq: field-RR},\ref{eq: a-dot}).
The classical limit of Eqs.(\ref{eq: EOM-RR},\ref{eq: field-RR},\ref{eq: a-dot})
or Eqs.(\ref{eq: EOM-RR-ave}-\ref{eq: field-RR-ave}) provides the
LAD equation (\ref{eq: LAD eq},\ref{eq: LAD field}). Conversely,
the direction from the LAD equation to the set of Eqs.(\ref{eq: EOM-RR},\ref{eq: field-RR},\ref{eq: a-dot})
represents its quantization.}
\end{figure*}

\subsection{Classical limit\label{classical limit}}

What will happen with Eq.(\ref{eq: Kinematics_St}) and Eqs.(\ref{eq: EOM-RR}-\ref{eq: a-dot})
in   $\hbar\rightarrow0$? The randomness of Eq.(\ref{eq: Kinematics_St})
is neglected since $\lim_{\hbar\rightarrow0}d_{\pm}\hat{x}(\tau,\omega)=\lim_{\hbar\rightarrow0}\mathcal{V}_{\pm}^{\mu}(\hat{x}(\tau,\omega))d\tau$
with $\lim_{\hbar\rightarrow0}[\mathcal{V}_{+}^{\mu}(\hat{x}(\tau,\omega))-\mathcal{V}_{-}^{\mu}(\hat{x}(\tau,\omega))]=0$,
then, its trajectory becomes a smooth and differentiable function.
Hence, the all sample paths are identified as $\lim_{\hbar\rightarrow0}\langle\hat{x}\rangle_{\tau}$.
Let us express the classical limit in this model by employing the
following equivalency relation
\begin{equation}
\omega\sim\omega'\Leftrightarrow\forall\tau,\lim_{\hbar\rightarrow0}\left[\hat{x}(\tau,\omega)-\hat{x}(\tau,\omega')\right]=0
\end{equation}
and its equivalency class
\begin{equation}
[\omega]=\left\{ \left.\omega'\in\varOmega\right|\omega\sim\omega'\right\} .
\end{equation}
Then there is $\omega_{0}$ the representative of $[\omega]$, $\{\hat{x}(\tau,\omega_{0})\}_{\tau\in\mathbb{R}}$
is the smooth trajectory given in the classical limit. Namely, we
understand it as all of $\omega'\in[\omega]$ converge to $\omega_{0}$
in $\hbar\rightarrow0$. Thus, the probability becomes the Dirac measure
$\delta_{\omega_{0}}$:
\begin{equation}
\lim_{\hbar\rightarrow0}\mathscr{P}(\omega)=\delta_{\omega_{0}}(\omega)
\end{equation}
Where, $\int_{\varOmega}f(\omega)d\delta_{\omega_{0}}(\omega)=f(\omega_{0})$.
Alternatively, we can express the same thing by Eq.(\ref{eq: Fokker-Planck})
with $\hbar\rightarrow0$, 
\begin{align}
\partial_{\tau}p(x,\tau)+\partial_{\mu}[\mathcal{V}_{\pm}^{\mu}(x)p(x,\tau)] & =0
\end{align}
the equation of continuity with $\mathcal{V}_{+}(x)=\mathcal{V}_{-}(x)$.
In this case, an initial profile $\delta^{4}(x-x_{\mathrm{initial}})$
propagates by keeping its profile of the delta distribution. This
is represented by $\delta_{\omega_{0}}(\omega)$ with labels of sample
paths. Of course, $\langle\hat{x}\rangle_{\tau}$ should be included
in the class of $[\omega]$. Therefore by Eq.(\ref{eq: field-relation1}), 

\begin{align}
\lim_{\hbar\rightarrow0}\mathfrak{F}(\hat{x}(\tau,\omega)) & =\lim_{\hbar\rightarrow0}\mathfrak{F}(\langle\hat{x}\rangle_{\tau})\nonumber \\
 & =F_{\mathrm{LAD}}(x(\tau)).\label{eq: field-relation2}
\end{align}
Thus, the LAD equation (\ref{eq: LAD eq}-\ref{eq: LAD field}) is
derived from Eqs.(\ref{eq: EOM-RR}-\ref{eq: a-dot}) by its classical
limit. {\bf Figure \ref{summary-RR}} shows the relation between each
models which we discussed.

\subsection{Landau-Lifshitz's approximation}

The instability (run-away) of Eqs.(\ref{eq: EOM-RR}-\ref{eq: a-dot})
is expected like one of the LAD equation (\ref{eq: LAD eq}-\ref{eq: LAD field})
\cite{Dirac(1938)} since it includes the high-order derivative of
$\mathrm{Re}\{\mathfrak{D}_{\tau}^{2}\mathcal{V}(x)\}$. The Landau-Lifshitz
({\bf LL}) approximation, the perturbation w.r.t.\ $\tau_{0}$ is
normally applied to Eq.(\ref{eq: LAD eq}-\ref{eq: LAD field}) for
avoiding this complexity \cite{Landau-Lifshitz}. The version for
Eqs.(\ref{eq: EOM-RR}-\ref{eq: a-dot}) is realized by the following:
\begin{align}
\dot{a}^{\mu}(x)= & \frac{c^{4}\times\mathrm{Re}\{\ddot{\mathcal{V}}_{\mathrm{approx.}}^{\mu}(x)\}}{[\mathrm{Re}\{\mathcal{V}_{\alpha}(x)\}\cdot\mathrm{Re}\{\mathcal{V}^{\alpha}(x)\}]^{2}}+O(\tau_{0})\label{eq: a-dot-LL}
\end{align}
\begin{align}
\mathrm{Re}\{\ddot{\mathcal{V}}_{\mathrm{approx.}}^{\mu}\} & \coloneqq-\frac{e}{m_{0}}\mathrm{Re}\{\mathcal{V}_{\alpha}\}\cdot\mathrm{Re}\{\mathcal{V}^{\beta}\}\partial_{\beta}F_{\mathrm{ex}}^{\mu\alpha}+\frac{e}{m_{0}}\mathrm{Im}\{\mathcal{V}_{\alpha}\}\cdot\mathrm{Im}\{\mathcal{V}^{\beta}\}\partial_{\beta}F_{\mathrm{ex}}^{\mu\alpha}\nonumber \\
 & \quad\quad+\frac{e^{2}}{m_{0}^{2}}g_{\alpha\beta}F_{\mathrm{ex}}^{\mu\alpha}F_{\mathrm{ex}}^{\beta\gamma}\mathrm{Re}\{\mathcal{V}_{\gamma}\}\label{eq: a-dot-LL2}
\end{align}
Where, $\mathrm{Re}\{\mathcal{V}_{\alpha}(x)\}\cdot\mathrm{Re}\{\mathfrak{D}_{\tau}\mathcal{V}^{\alpha}(x)\}=O(\tau_{0})$
for $\partial_{\mu}F_{\mathrm{ex}}^{\mu\nu}=0$. Ehrenfest's theorem
of Eqs.(\ref{eq: EOM-RR}-\ref{eq: field-RR}) with Eqs.(\ref{eq: a-dot-LL}-\ref{eq: a-dot-LL2})
becomes
\begin{align}
m_{0}\frac{d^{2}\langle\hat{x}^{\mu}\rangle_{\tau}}{d\tau^{2}} & =-eF_{\mathrm{ex}}^{\mu\nu}(\langle\hat{x}\rangle_{\tau})\frac{d\langle\hat{x}_{\nu}\rangle_{\tau}}{d\tau}-e\mathscr{P}(\varOmega_{\tau}^{\mathrm{ave}})F_{\mathrm{LL}}^{\mu\nu}(\langle\hat{x}\rangle_{\tau})\frac{d\langle\hat{x}_{\nu}\rangle_{\tau}}{d\tau}+O(\tau_{0}^{2},\langle\otimes^{2}\delta\hat{x}\rangle_{\tau})\label{eq: LL_eq_ave}
\end{align}
since $\mathrm{Im}\{\mathcal{V}(\langle\hat{x}\rangle_{\tau})\}=O(\langle\otimes^{2}\delta\hat{x}\rangle_{\tau})$,
and 
\begin{align}
F_{\mathrm{LL}}^{\mu\nu}(x(\tau)) & \coloneqq\tau_{0}\partial_{\alpha}F_{\mathrm{ex}}^{\mu\nu}(x(\tau))\cdot\frac{dx^{\alpha}}{d\tau}(\tau)\nonumber \\
 & \quad\quad-\frac{e\tau_{0}}{m_{0}c^{2}}\left(\delta_{\theta}^{\mu}\delta_{\epsilon}^{\nu}-\delta_{\theta}^{\nu}\delta_{\epsilon}^{\mu}\right)g_{\alpha\beta}F_{\mathrm{ex}}^{\theta\alpha}(x(\tau))F_{\mathrm{ex}}^{\beta\gamma}(x(\tau))\frac{dx_{\gamma}}{d\tau}(\tau)\frac{dx^{\epsilon}}{d\tau}(\tau).
\end{align}
When $\mathscr{P}(\varOmega_{\tau}^{\mathrm{ave}})=1$, Eq.(\ref{eq: LL_eq_ave})
becomes the LL equation $m_{0}dv^{\mu}/dt=-e(F_{\mathrm{ex}}^{\mu\nu}+F_{\mathrm{LL}}^{\mu\nu})v_{\nu}$
perfectly \cite{Landau-Lifshitz}. Of course it means the classical
limit of Eq.(\ref{eq: LL_eq_ave}). 

\subsection{Radiation formula}

By Eqs.(\ref{eq: EOM-RR-ave}-\ref{eq: field-RR-ave}) for $\{\langle\hat{x}\rangle_{\tau}\}_{\tau\in\mathbb{R}}$,
the radiation formula in the present model (the issue-$\langle\mathrm{E}\rangle$)
is found:
\begin{align}
m_{0}\frac{d^{2}\langle\hat{x}^{\mu}\rangle_{\tau}}{d\tau^{2}} & \approx-eF_{\mathrm{ex}}^{\mu\nu}(\langle\hat{x}\rangle_{\tau})\frac{d\langle\hat{x}_{\nu}\rangle_{\tau}}{d\tau}-\frac{dW_{\mathrm{stochastic}}}{dt}\frac{d\langle\hat{x}^{\mu}\rangle_{\tau}}{d\tau}
\end{align}
\begin{equation}
\frac{dW_{\mathrm{stochastic}}}{dt}\coloneqq-m_{0}\tau_{0}\mathscr{P}(\varOmega_{\tau}^{\mathrm{ave}})\frac{d^{2}\langle\hat{x}_{\mu}\rangle_{\tau}}{d\tau^{2}}\cdot\frac{d^{2}\langle\hat{x}^{\mu}\rangle_{\tau}}{d\tau^{2}}\label{eq: st-radiation-formula}
\end{equation}
It should be compared with Eq.(\ref{eq: Rad-formula-sQED}) and Larmor's
formula
\begin{align}
\frac{dW_{\mathrm{classical}}}{dt} & =-m_{0}\tau_{0}\frac{dv_{\mu}}{d\tau}\cdot\frac{dv^{\mu}}{d\tau}.
\end{align}
When an external field is a plane wave, this has to converge to one
by {\bf Ref.} \cite{A.Sokolov(1986)}, 
\begin{align}
\mathscr{P}(\varOmega_{\tau}^{\mathrm{ave}}) & =q_{\mathrm{sQED}}(\chi)\nonumber \\
 & =\frac{9\sqrt{3}}{8\pi}\int_{0}^{\chi^{-1}}dr\,r\int_{\frac{r}{1-\chi r}}^{\infty}dr'\,K_{5/3}(r').
\end{align}
For $\mathscr{P}(\varOmega_{\tau}^{\mathrm{ave}})=q_{\mathrm{sQED}}(\chi)+\delta\mathscr{P}(\varOmega_{\tau}^{\mathrm{ave}})$
in the case of non-plane wave fields, the radiation spectrum becomes
\begin{equation}
\frac{d^{2}W_{\mathrm{stochastic}}}{dtd(\hbar\omega)}=\left[\frac{dq_{\mathrm{sQED}}(\chi)}{d(\hbar\omega)}+\frac{d\delta\mathscr{P}(\varOmega_{\tau}^{\mathrm{ave}})}{d(\hbar\omega)}\right]\frac{dW_{\mathrm{classical}}}{dt},
\end{equation}
thus, the observation of $d\delta\mathscr{P}(\varOmega_{\tau}^{\mathrm{ave}})/d(\hbar\omega)$
provides us an unknown correction in non-linear QED. By solving Eq.(\ref{eq: Fokker-Planck}),
$\delta\mathscr{P}(\varOmega_{\tau}^{\mathrm{ave}})$ is found even
in the case of general field profiles like laser focusing and superposition.

\section{Conclusion}

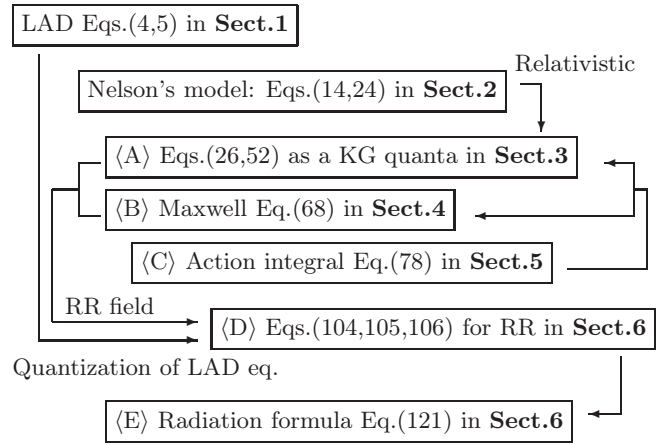
\begin{figure}
\begin{centering}
\begin{picture}(250,180)

\put(25,135){\fbox{\fontsize{9pt}{0cm}\selectfont Nelson's model: Eqs.(\ref{eq: st-dif},\ref{eq: Nelson-schrodinger-2}) in {\bf Sect.\ref{Nelson-1D}}}}

\put(200,140){\vector(0,-1){18}}
\put(200,140){\line(-1,0){8}}
\put(190,145){\fontsize{9pt}{0cm}\selectfont Relativistic}

\put(35,110){\fbox{\fontsize{9pt}{0cm}\selectfont $\langle\mathrm{A}\rangle$ Eqs.(\ref{eq: Kinematics_St},\ref{eq: EOM_St}) as a KG quanta in {\bf Sect.\ref{kinematics_dynamics4D}}}}

\put(20,53){\fontsize{9pt}{0cm}\selectfont RR field}

\put(35,90){\fbox{\fontsize{9pt}{0cm}\selectfont $\langle\mathrm{B}\rangle$ Maxwell Eq.(\ref{eq: Maxwell_St}) in {\bf Sect.\ref{Maxwell}}}}

\put(25,90){\line(0,1){20}}
\put(25,110){\line(1,0){8}}
\put(25,90){\line(1,0){8}}
\put(15,100){\line(1,0){10}}
\put(15,50){\line(0,1){50}}
\put(15,50){\vector(1,0){55}}

\put(45,70){\fbox{\fontsize{9pt}{0cm}\selectfont $\langle\mathrm{C}\rangle$ Action integral Eq.(\ref{eq: action integral}) in {\bf Sect.\ref{Action}}}}

\put(235,110){\line(0,-1){20}}
\put(235,110){\vector(-1,0){11}}
\put(235,90){\vector(-1,0){60}}
\put(240,100){\line(-1,0){5}}
\put(240,70){\line(0,1){30}}
\put(240,70){\line(-1,0){30}}

\put(0,160){\fbox{\fontsize{9pt}{0cm}\selectfont LAD Eqs.(\ref{eq: LAD eq},\ref{eq: LAD field}) in {\bf Sect.\ref{Intro}}}}
\put(10,43){\line(0,1){108}}
\put(10,43){\vector(1,0){60}}

\put(75,45){\fbox{\fontsize{9pt}{0cm}\selectfont $\langle\mathrm{D}\rangle$ Eqs.(\ref{eq: EOM-RR},\ref{eq: field-RR},\ref{eq: a-dot}) for RR in {\bf Sect.\ref{RR}}}}
\put(0,30){\fontsize{9pt}{0cm}\selectfont Quantization of LAD eq.}

\put(35,10){\fbox{\fontsize{9pt}{0cm}\selectfont $\langle\mathrm{E}\rangle$ Radiation formula Eq.(\ref{eq: st-radiation-formula}) in {\bf Sect.\ref{RR}}}}
\put(230,15){\line(0,1){22}}
\put(230,15){\vector(-1,0){12}}

\end{picture}
\par\end{centering}
\caption{\label{summary}Key equations in this article. Our purpose was the
quantization of Eqs.(\ref{eq: LAD eq},\ref{eq: LAD field}) the LAD
equation {[}{\bf Sect.\ref{Intro}}{]}. For obtaining Eqs.(\ref{eq: EOM-RR},\ref{eq: field-RR},\ref{eq: a-dot})
of its quantized form, we proposed $\langle\mathrm{A}\rangle$ Eqs.(\ref{eq: Kinematics_St},\ref{eq: EOM_St})
the relativistic version of Nelson's stochastic quantization for a
scalar electron {[}{\bf Sect.\ref{kinematics_dynamics4D}}{]} and
$\langle\mathrm{B}\rangle$ Eq.(\ref{eq: Maxwell_St}) the Maxwell
equation {[}{\bf Sect.\ref{Maxwell}}{]}. We also gave $\langle\mathrm{C}\rangle$
the action integral for Eq.(\ref{eq: EOM_St}) and Eqs.(\ref{eq: Maxwell_St})
by {\bf Sect.\ref{Action}}. Then, $\langle\mathrm{D}\rangle$ RR
in relativistic quantum dynamics was obtained by solving the Maxwell
equation in {\bf Sect.\ref{RR}}. The quantumness of radiation was
found in $\langle\mathrm{E}\rangle$ a radiation formula by Eq.(\ref{eq: st-radiation-formula}).}
\end{figure}
The topics discussed in this paper are illustrated by {\bf Fig.\ref{summary}}.
We derived the set of Eqs.(\ref{eq: EOM-RR}-\ref{eq: a-dot}), which
is the quantized equation of the LAD equation (\ref{eq: LAD eq}-\ref{eq: LAD field})
via the construction of Issue-$\langle\mathrm{A}\rangle$ Eq.(\ref{eq: Kinematics_St})
of the relativistic kinematics with Eq.(\ref{eq: EOM_St}) of the
dynamics for a Brownian scalar electron proposed in {\bf Sect.\ref{kinematics_dynamics4D}},
and Issue-$\langle\mathrm{B}\rangle$ the Maxwell equation (\ref{eq: Maxwell_St})
in {\bf Sect.\ref{Maxwell}}. Issue-$\langle\mathrm{C}\rangle$ of
the action integral was introduced by Eq.(\ref{eq: action integral})
in {\bf Sect.\ref{Action}}. The consistent system w.r.t.\ Issues-$\langle\mathrm{A\mathchar`-C}\rangle$
was first proposed by the present article. Hence, the description
of RR by solving Eq.(\ref{eq: Maxwell_St}) is that first example
of stochastic quantization. Again, Issue-$\langle\mathrm{D}\rangle$
the stochastic quantization of the LAD equation was given by Eqs.(\ref{eq: EOM-RR}-\ref{eq: a-dot})
in {\bf Sect.\ref{RR}}. We also confirmed its classical limit becomes
the LAD equation. The LL approximation was introduced by Eqs.(\ref{eq: EOM-RR},\ref{eq: field-RR},\ref{eq: a-dot-LL},\ref{eq: a-dot-LL2}).
The readers can understand the fact that we did not employ any restriction
of the external laser fields except the Lorenz gauge in this article.
We obtained Issue-$\langle\mathrm{E}\rangle$ the radiation formula
of Eq.(\ref{eq: st-radiation-formula}) in {\bf Sect.\ref{RR}} by
Ehrenfest's theorem of Eqs.(\ref{eq: EOM-RR-ave},\ref{eq: field-RR-ave}).
By the comparison between Eq.(\ref{eq: Rad-formula-sQED}) and Eq.(\ref{eq: st-radiation-formula}),
we found $q(\chi)$ is included in the existence probability $\mathscr{P}(\varOmega_{\tau}^{\mathrm{ave}})$
which a Brownian quanta stays at its average position $\langle\hat{x}\rangle_{\tau}$.
Hence, the observation of quantumness $q(\chi)$ indicates the existence
probability of a radiating charged quanta. The calculation of Eq.(\ref{eq: Fokker-Planck})
is the requirement to derive $\mathscr{P}(\varOmega_{\tau}^{\mathrm{ave}})$.
Now, a trajectory of a radiating scalar electron can be drawn by
a stochastic process. The precise analysis of $\delta f$ the field
singularity as the Coulomb field in Eq.(\ref{eq: Maxwell_St}) should
be performed. The existence of $d\delta\mathscr{P}(\varOmega_{\tau}^{\mathrm{ave}})/d(\hbar\omega)$
suggests a new correction beyond the Furry picture which may be found
in high-intensity laser experiments. 

\section*{Acknowledgements\addcontentsline{toc}{section}{Acknowledgements}}

KS acknowledges Prof. Kazuo A. Tanaka for useful discussion and the
support from the Extreme Light Infrastructure Nuclear Physics (ELI-NP)
Phase II, a project co-financed by the Romanian Government and the
European Union through the European Regional Development Fund - the
Competitiveness Operational Programme (1/07.07.2016, COP, ID 1334).

\section*{\addcontentsline{toc}{section}{References}}

\newpage{}

\part*{Supplemental material\addcontentsline{toc}{part}{Supplemental material}}

\appendix

\section{Mathematical supports\label{APP-A}}

Since we have not discussed the detail of stochastic processes, let
us see the precise construction from the 1D WP to the 4D kinematics
in this appendix. Thus, our purpose in this Appendix \ref{APP-A}
is to give Eq.(\ref{eq: Kinematics_St}), Eq.(\ref{eq: Ito formula})
and Eq.(\ref{eq: C-Ito formula}) mathematically.

\subsection{Mathematical spaces}

At first, we define the Minkowski spacetime and a probability space
as measure spaces. Where, $\mathscr{B}(I)$ denotes a Borel $\sigma$-algebra
of a topological space $I$. 
\begin{defn}[Minkowski spacetime]
\begin{leftbar}Let $\mathbb{A}^{4}(\mathbb{V_{\mathrm{M}}^{\mathrm{4}}},g)$
be a 4D metric affine space w.r.t.\ a 4D standard vector space $\mathbb{V_{\mathrm{M}}^{\mathrm{4}}}$
and its metric $g$ on $\mathbb{V_{\mathrm{M}}^{\mathrm{4}}}$. By
defining a measure space 
\[
(\mathbb{A}^{4}(\mathbb{V_{\mathrm{M}}^{\mathrm{4}}},g),\mathscr{B}(\mathbb{A}^{4}(\mathbb{V_{\mathrm{M}}^{\mathrm{4}}},g)),\mu),
\]
we regard this as the Minkowski spacetime when $g=(+,-,-,-)$. The
measure $\mu:\mathbb{A}^{4}(\mathbb{V_{\mathrm{M}}^{\mathrm{4}}},g)\rightarrow\mathbb{R}$
is defined as $d\mu(x)=dx^{4}$.\end{leftbar}
\end{defn}

A metric affine space is an abstract mathematical space without its
origin and its coordinate. By the coordinate mapping $\varphi(x)\coloneqq(x^{0},x^{1},x^{2},x^{3})$
for all $x\in\mathbb{V_{\mathrm{M}}^{\mathrm{4}}}$ and $x\in\mathbb{A}^{4}(\mathbb{V_{\mathrm{M}}^{\mathrm{4}}},g)$
with its origin, the readers can consider that $\mathbb{A}^{4}(\mathbb{V_{\mathrm{M}}^{\mathrm{4}}},g)$
and $\mathbb{V_{\mathrm{M}}^{\mathrm{4}}}$ is identified as $\mathbb{R}^{4}$
with the metric $g$. 
\begin{defn}[Probability space]
\begin{leftbar}For a certain abstract non-empty set $\varOmega$,
$\mathcal{F}$ a $\sigma$-algebra of $\varOmega$, and $\mathscr{P}$
a probability measure on the measurable space $(\varOmega,\mathcal{F})$
such that $\mathscr{P}(\varOmega)=1$,
\[
(\varOmega,\mathcal{F},\mathscr{P})
\]
is a probability space. Especially, we use this for 4D stochastic
processes, $\left(\mathit{\Omega}^{1\mathchar`-\mathrm{dim}},\mathcal{F}^{1\mathchar`-\mathrm{dim}},\mathscr{P}^{1\mathchar`-\mathrm{dim}}\right)$
for 1D stochastic processes.\end{leftbar}
\end{defn}

Then, a 4D continuous stochastic process 
\begin{align*}
\hat{x}(\circ,\bullet) & \coloneqq\{\hat{x}(\tau,\omega)\}_{(\tau,\omega)\in\mathbb{R}\times\varOmega}\\
 & =\{\hat{x}(\tau,\omega)\in\mathbb{A}^{4}(\mathbb{V_{\mathrm{M}}^{\mathrm{4}}},g)|\tau\in\mathbb{R},\omega\in\varOmega\}
\end{align*}
is regarded as a $\mathscr{B}(\mathbb{R})\times\mathcal{F}/\mathscr{B}(\mathbb{A}^{4}(\mathbb{V_{\mathrm{M}}^{\mathrm{4}}},g))$-measurable
mapping. Where for two measurable spaces $(X,\mathcal{X})$ and $(Y,\mathcal{Y})$,
an $\mathcal{X}/\mathcal{Y}$-measurable mapping $f$ is a mapping
$f:X\rightarrow Y$ such that $f^{-1}(A)\coloneqq\{x\in X|f(x)\in A\}\subset\mathcal{X}$
for all $A\in\mathcal{Y}$.

\subsubsection{Increasing family }

Let us consider an usual 1D Wiener process
\[
w_{+}(\circ,\bullet)\coloneqq\{w_{+}(t,\omega)\in\mathbb{R}|t\in[0,\infty),\omega\in\mathit{\Omega}^{1\mathchar`-\mathrm{dim}}\}
\]
on $\left(\mathit{\Omega}^{1\mathchar`-\mathrm{dim}},\mathcal{F}^{1\mathchar`-\mathrm{dim}},\mathscr{P}^{1\mathchar`-\mathrm{dim}}\right)$:
\begin{defn}[Wiener process $w_{+}(\circ,\bullet)$]
\begin{leftbar}When a 1D stochastic process $w_{+}(\circ,\bullet)$
satisfies below, it is a 1D {\bf Wiener process} ({\bf WP}) or a
1D {\bf Brownian motion}. 

(1) $w_{+}(0,\omega)=0$ a.s., 

(2) $t\mapsto w_{+}(t,\omega)$ is continuous,

(3) For all times $0=t_{0}<t_{1}<\cdots<t_{t_{n}}$ ($n\in\mathbb{Z}$),
the increments $\{w_{+}(t_{i},\omega)-w_{+}(t_{i-1},\omega)\}_{i=1}^{n}$
are independent and each of them follows the normal distribution $\{N(0,t_{i}-t_{i-1})\}_{i=1}^{n}$.\end{leftbar}
\end{defn}

Instead of the above rule (1), we can choose $w_{+}(0,\omega)=x$
a.s., too. For $\mathbb{E}\llbracket f(w_{+}(t,\bullet))\rrbracket\coloneqq\int_{\omega\in\mathit{\Omega}^{1\mathchar`-\mathrm{dim}}}\allowbreak f(w_{+}(t,\omega))\allowbreak d\mathscr{P}^{1\mathchar`-\mathrm{dim}}(\omega)$,
the following basic result is found:
\begin{lem}
\begin{leftbar}For $\left(\mathit{\Omega}^{1\mathchar`-\mathrm{dim}},\mathcal{F}^{1\mathchar`-\mathrm{dim}},\mathscr{P}^{1\mathchar`-\mathrm{dim}}\right)$,
a 1D WP $w_{+}(\circ,\bullet)$ satisfies below for all $t$:
\begin{equation}
\mathbb{E}\llbracket w_{+}(t+\delta t,\bullet)-w_{+}(t,\bullet)\rrbracket=0
\end{equation}
\begin{equation}
\underset{\delta t\rightarrow0+}{\lim}\mathbb{E}\left\llbracket \frac{[w_{+}(t+\delta t,\bullet)-w_{+}(t,\bullet)]^{2}}{\delta t}\right\rrbracket =1
\end{equation}
\end{leftbar}
\end{lem}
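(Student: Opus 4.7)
The plan is to reduce both identities to property (3) of the definition just given. For fixed $t\ge 0$ and $\delta t>0$, I would apply that property with the three-point partition $0\le t< t+\delta t$ (collapsing to a two-point partition if $t=0$); this yields directly that the increment $\Delta\coloneqq w_{+}(t+\delta t,\bullet)-w_{+}(t,\bullet)$ is $N(0,\delta t)$-distributed under $\mathscr{P}^{1\mathchar`-\mathrm{dim}}$. Once this is in hand, both assertions become one-line Gaussian moment computations.

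For the first identity, the expectation of any $N(0,\sigma^{2})$-distributed random variable vanishes by the odd symmetry of the Gaussian density about the origin, so $\mathbb{E}\llbracket\Delta\rrbracket=0$ directly. For the second identity, the second moment of a centred Gaussian equals its variance, hence $\mathbb{E}\llbracket\Delta^{2}\rrbracket=\delta t$ exactly for every $\delta t>0$. Dividing by $\delta t$ yields $1$ identically in $\delta t$, so the limit as $\delta t\to 0^{+}$ is trivially $1$.

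The main (indeed essentially only) point to watch is that no limit interchange, no appeal to path-continuity (rule (2)), and no truncation argument is needed: both quantities are computed exactly from the distributional rule (3), and the limit in the second identity is purely formal, reflecting that the constant map $\delta t\mapsto 1$ converges to $1$. The sole small verification is that rule (3) applies at the endpoint $t=0$, where it reduces (via $n=1$) to $w_{+}(\delta t,\bullet)-w_{+}(0,\bullet)=w_{+}(\delta t,\bullet)$ being $N(0,\delta t)$-distributed, which is immediate from rule (1).
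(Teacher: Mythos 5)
Your proposal is correct: the paper states this lemma without proof, treating it as an immediate consequence of the definition of a Wiener process, and your reduction to property (3) — the increment $w_{+}(t+\delta t,\bullet)-w_{+}(t,\bullet)$ is $N(0,\delta t)$-distributed, so its mean is $0$ and its second moment is exactly $\delta t$ — is precisely the computation the paper leaves implicit. Your observations that the limit is trivial (the ratio is identically $1$) and that the case $t=0$ reduces to rule (1) are both accurate and complete the argument.
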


For the discussion of stochastic processes, the definition of an increasing
family (a filtration) is important.
\begin{defn}[Increasing family $\{\mathcal{P}_{t}\}$]
\begin{leftbar}For a probability space $\left(\mathit{\Omega}^{1\mathchar`-\mathrm{dim}},\mathcal{F}^{1\mathchar`-\mathrm{dim}},\mathscr{P}^{1\mathchar`-\mathrm{dim}}\right)$,
$\{\mathcal{P}_{t}\}_{t\in\mathbb{R}}$ is an increasing family of
sub-$\sigma$-algebras on $\varOmega$ such that $-\infty<s\leq t\Longrightarrow\mathcal{P}_{s}\subset\mathcal{P}_{t}\subset\mathcal{F}^{1\mathchar`-\mathrm{dim}}$.
\end{leftbar}
\end{defn}

$\mathcal{P}_{s}\subset\mathcal{P}_{t\geq s}$ denotes an increment
of branches of sample paths. This is the characteristics of a forward
(normal) diffusion process. 
\begin{defn}[$\{\mathcal{P}_{t}\}$-adapted]
\begin{leftbar}When a stochastic process is $\mathcal{P}_{t}/\mathscr{B}(X)$-measurable
for each $t$ ($X=\mathbb{R}$ in the present case), we call it a
$\{\mathcal{P}_{t}\}$-adapted.\end{leftbar}
\end{defn}

\begin{defn}[$\{\mathcal{P}_{t}\}$-WP]
\begin{leftbar}\label{P-Wiener}When \textbf{$w_{+}(\circ,\bullet)$}
satisfies the following, it is a $\{\mathcal{P}_{t}\}$-WP.

(1) $w_{+}(\circ,\bullet)$ is $\{\mathcal{P}_{t}\}$-adapted.

(2) $w_{+}(t,\omega)-w_{+}(s,\omega)$ and $\mathcal{P}_{s}$ are
independent for $0\leq\forall s\le t$.

\end{leftbar}
\end{defn}

For $(\varOmega,\mathcal{F},\mathscr{P})$ and $T\subset\mathbb{R}$,
consider $\mathbb{L}_{T}^{p}(X)$ a family of $\mathscr{B}(T)\times\mathcal{F}/\mathscr{B}(X)$-measurable
mappings ($X$ is an $N$-dimensional topological space) such that
\begin{align*}
\mathbb{L}_{T}^{p}(X) & \coloneqq\left\{ \hat{f}(\circ,\bullet):T\times\varOmega\rightarrow X\left|\begin{gathered}t\rightarrow\hat{f}(t,\omega)\mathrm{\,is\,continuous},\,\sum_{i=1}^{N}\int_{T}|\varphi^{i}\circ\hat{f}(t,\omega)|^{p}dt<\infty\,\mathrm{a.s.}\end{gathered}
\right.\right\} 
\end{align*}
with $\{\varphi^{i}\}_{i=1}^{N}$ the coordinate mapping $\varphi^{i}:X\rightarrow\mathbb{R}$.
Then, its ``adapted'' class is 
\begin{align*}
\mathcal{L}_{\mathrm{loc}}^{p}(\{\mathcal{P}_{t}\};X) & \coloneqq\left\{ \hat{f}(\circ,\bullet)\in\mathbb{L}_{[t_{1},t_{2}]}^{p}(X)\left|\begin{gathered}\forall t_{1}\leq\forall t_{2}\in\mathbb{R},\,\hat{f}(\circ,\bullet)\mathrm{\,is\,}\{\mathcal{P}_{t}\}\mathchar`-\mathrm{adapted.}\end{gathered}
\right.\right\} 
\end{align*}
\begin{defn}[$\{\mathcal{P}_{t}\}$-prog.]
\begin{leftbar}If a stochastic process $\hat{f}(\circ,\bullet)$
is $\mathscr{B}([s,t])\times\mathcal{P}_{t}/\mathscr{B}(X)$-measurable
for each $t\in\mathbb{R}$ and $s\leq t$, $\hat{f}(\circ,\bullet)$
is called $\{\mathcal{P}_{t}\}$-progressively measurable, $\{\mathcal{P}_{t}\}$-progressive
or {\bf $\boldsymbol{\{\mathcal{P}_{t}\}}$-prog.} \end{leftbar}
\end{defn}

\begin{thm}
\begin{leftbar}A stochastic process $\hat{f}(\circ,\bullet)$ is
$\{\mathcal{P}_{t}\}$-prog.\ when $\hat{f}(\circ,\bullet)$ is continuous
and $\{\mathcal{P}_{t}\}$-adapted. Its converse is also satisfied.
\end{leftbar}
\end{thm}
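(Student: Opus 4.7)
The plan is to prove the two implications separately, since the only sensible reading of the converse is that $\{\mathcal{P}_{t}\}$-prog.\ implies $\{\mathcal{P}_{t}\}$-adapted (progressivity clearly does not force continuity, but it does yield adaptedness for free).

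For the forward direction, I would fix $t \in \mathbb{R}$ and $s \le t$, and show that the restriction $\hat{f}|_{[s,t] \times \varOmega}$ is $\mathscr{B}([s,t]) \times \mathcal{P}_{t}/\mathscr{B}(X)$-measurable by a standard dyadic approximation. Define step processes
\begin{equation}
\hat{f}_{n}(r,\omega) \coloneqq \hat{f}\!\left(s + \tfrac{k(t-s)}{2^{n}},\,\omega\right)
\end{equation}
for $r \in [s + (k-1)(t-s)/2^{n},\, s + k(t-s)/2^{n})$ with $k=1,\ldots,2^{n}$, and $\hat{f}_{n}(t,\omega) \coloneqq \hat{f}(t,\omega)$. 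Each slice $\omega \mapsto \hat{f}(s + k(t-s)/2^{n}, \omega)$ is $\mathcal{P}_{s+k(t-s)/2^{n}}/\mathscr{B}(X)$-measurable by adaptedness, and since the filtration is increasing with $\mathcal{P}_{s+k(t-s)/2^{n}} \subset \mathcal{P}_{t}$, each slice is also $\mathcal{P}_{t}/\mathscr{B}(X)$-measurable. Combined with the fact that the dyadic subintervals lie in $\mathscr{B}([s,t])$, each $\hat{f}_{n}$ is a $\mathscr{B}([s,t]) \times \mathcal{P}_{t}/\mathscr{B}(X)$-measurable step process. Continuity of $r \mapsto \hat{f}(r,\omega)$ delivers pointwise convergence $\hat{f}_{n} \to \hat{f}$ on $[s,t] \times \varOmega$, so the restriction $\hat{f}|_{[s,t]\times\varOmega}$ is measurable as a pointwise limit of measurable maps into the metrizable space $X$.

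For the converse, suppose $\hat{f}$ is $\{\mathcal{P}_{t}\}$-prog. Fix $t \in \mathbb{R}$ and pick any $s \le t$. Progressivity gives that $\hat{f}|_{[s,t] \times \varOmega}$ is $\mathscr{B}([s,t]) \times \mathcal{P}_{t}/\mathscr{B}(X)$-measurable, and the slice $\omega \mapsto \hat{f}(t,\omega)$ arises by precomposing with the inclusion $\omega \mapsto (t,\omega)$, which is trivially $\mathcal{P}_{t}/(\mathscr{B}([s,t]) \times \mathcal{P}_{t})$-measurable. Composition yields that $\hat{f}(t,\bullet)$ is $\mathcal{P}_{t}/\mathscr{B}(X)$-measurable, i.e., $\{\mathcal{P}_{t}\}$-adapted.

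The main technical subtlety lies in the forward direction: one must justify that the step processes $\hat{f}_{n}$ live in the coarser $\sigma$-algebra $\mathscr{B}([s,t]) \times \mathcal{P}_{t}$ rather than the ``natural'' finer one generated from the intermediate $\mathcal{P}_{s + k(t-s)/2^{n}}$'s, which is precisely where the monotonicity $\mathcal{P}_{u} \subset \mathcal{P}_{t}$ for $u \le t$ enters. Once that observation is in place, passage to the pointwise limit is routine, and the converse is essentially a one-line section argument.
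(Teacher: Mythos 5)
Your proposal is correct and follows essentially the same route as the paper's proof: the forward direction approximates $\hat{f}$ on $[s,t]$ by step processes built from a partition (the paper uses a general partition with $n\rightarrow\infty$, you use dyadic points), invokes $\mathcal{P}_{u}\subset\mathcal{P}_{t}$ for $u\leq t$ to place every slice in $\mathcal{P}_{t}$, and passes to the pointwise limit via continuity; the converse is the same one-line section argument. No substantive difference in method.
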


\begin{proof}
Consider the domain of $[s,t]$ for each $s\leq t$. If $\hat{f}(\circ,\bullet):R\times\varOmega\rightarrow\mathbb{R}$
is continuous and $\{\mathcal{P}_{t}\}$-adapted, it is described
by a simple function such as
\begin{align*}
\hat{f}(\tau,\omega) & =\xi_{0}(\omega)\times\boldsymbol{1}_{\{t_{0}\}}(\tau)+\sum_{j=1}^{n}\xi_{j}(\omega)\times\boldsymbol{1}_{(t_{j-1},t_{j}]}(\tau)
\end{align*}
for $s=t_{0}\leq t_{1}\leq\cdots\leq t_{n}=t$ and each of $\{\mathcal{P}_{t_{m=0,1,\cdots,n}}\}$-adapted
$\xi_{m}(\bullet)$,
\begin{align*}
\{(\tau,\omega)\in[s,t]\times\varOmega|f(\tau,\omega)\leq\forall a\}\\
=\{t_{0}\}\times\{\omega\in\varOmega|\xi_{0}(\omega)\leq a\} & \cup\bigcup_{j=1}^{n}(t_{j-1},t_{j}]\times\{\omega\in\varOmega|\xi_{j}(\omega)\leq a\}
\end{align*}
with $n\rightarrow\infty$. Where, 
\[
\boldsymbol{1}_{A}(\tau)=\begin{cases}
1 & \tau\in A\\
0 & \tau\notin A
\end{cases}
\]
with $\forall A\subset\mathbb{R}$. Thus, $\hat{f}(\circ,\bullet)$
is $\mathscr{B}([s,t])\times\mathcal{P}_{t}/\mathscr{B}(X)$-measurable,
namely, it is a $\{\mathcal{P}_{t}\}$-prog. When $f(\circ,\bullet)$
is $\{\mathcal{P}_{t}\}$-prog., it is $\{\mathcal{P}_{t}\}$-adapted
since $f(t,\bullet)$ is $\mathcal{P}_{t}/\mathscr{B}(X)$-measurable.
\end{proof}
\begin{defn}[$\{\mathcal{P}_{t}\}$-martingale]
\begin{leftbar}A stochastic process $m(\circ,\bullet)$ is a $\{\mathcal{P}_{t}\}$-martingale
when $m(\circ,\bullet)$ satisfies below:

(1) $m(t,\bullet)$ is integrable, i.e., $\mathbb{E}\llbracket m(t,\bullet)\rrbracket<\infty$,

(2) $m(\circ,\bullet)$ is $\{\mathcal{P}_{t}\}$-adapted,

(3) For $-\infty<\forall s\leq t$, $\mathbb{E}\llbracket m(t,\bullet)|\mathcal{P}_{s}\rrbracket=\hat{m}(s,\bullet)$
a.s.\end{leftbar}
\end{defn}

Therefore, a $\{\mathcal{P}_{t}\}$-WP $w_{+}(\circ,\bullet)$ is
a $\{\mathcal{P}_{t}\}$-martingale. Then, $\int_{s}^{t}f(t',\omega)dw_{+}(t',\omega)$
is defined by means of an It\^{o} integral of $f\in\mathcal{L}_{\mathrm{loc}}^{2}(\{\mathcal{P}_{t}\};\mathbb{R})$
\cite{Ito(1944)}. Since a $\{\mathcal{P}_{t}\}$-prog.\ is expressed
by a simple function 
\begin{align*}
\hat{f}(t,\omega) & =\xi_{0}(\omega)\times\boldsymbol{1}_{\{t_{0}\}}(t)+\sum_{j=1}^{n}\xi_{j}(\omega)\times\boldsymbol{1}_{(t_{j-1},t_{j}]}(t)
\end{align*}
for $s=t_{0}\leq t_{1}\leq\cdots\leq t_{n}=T$, let us define
\begin{align*}
S_{n}^{+}(s,t,\omega) & \coloneqq\sum_{j=0}^{n-1}\xi_{j}(\omega)\times\left[w_{+}(t\wedge t_{j+1},\omega)-w_{+}(t\wedge t_{j},\omega)\right]
\end{align*}
with $s\land t\coloneqq\max\{s,t\}$. If there is $S^{+}(s,t,\omega)$
such that
\begin{align*}
\lim_{n\rightarrow\infty}\mathbb{E}\left\llbracket [S_{n}^{+}(s,t,\bullet)-S^{+}(s,t,\bullet)]^{2}\right\rrbracket  & =0,
\end{align*}
this $S(s,t,\omega)$ is regarded as $\int_{s}^{t}f(t',\omega)dw_{+}(t',\omega)$.
Therefore, $\int_{s}^{t}f(t',\omega)dw_{+}(t',\omega)$ is a $\{\mathcal{P}_{t}\}$-martingale.
Then, the following is found easily.
\begin{lem}
\begin{leftbar}For  $f(\circ,\bullet),g(\circ,\bullet)\in\mathcal{L}_{\mathrm{loc}}^{2}(\{\mathcal{P}_{t};\mathbb{R}\})$,
the following relation is imposed \cite{Ito(1944)}:
\begin{align}
\mathbb{E}\left\llbracket \int_{0}^{\tau}f(t',\bullet)dw_{+}(t',\bullet)\cdot\int_{0}^{\tau}g(t'',\bullet)dw_{+}(t'',\bullet)\right\rrbracket  & =\mathbb{E}\left\llbracket \int_{0}^{\tau}f(t',\bullet)\cdot g(t',\bullet)dt'\right\rrbracket 
\end{align}
 \end{leftbar}

Finally, the following well-known theorem of the so-called It\^{o}
formula is found:
\end{lem}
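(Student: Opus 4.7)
The plan is to prove the It\^{o} isometry by the standard two-step strategy: first establish the identity for simple $\{\mathcal{P}_{t}\}$-progressive integrands, then extend by an $L^{2}$-density argument to the full space $\mathcal{L}_{\mathrm{loc}}^{2}(\{\mathcal{P}_{t}\};\mathbb{R})$, using exactly the construction of $\int_{s}^{t}f\,dw_{+}$ already set up just before the lemma statement.

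First I would align $f$ and $g$ on a common partition. When $f,g$ are simple, by refining jump times I may write $f(t,\omega)=\sum_{j=0}^{n-1}\xi_{j}(\omega)\mathbf{1}_{(t_{j},t_{j+1}]}(t)$ and $g(t,\omega)=\sum_{j=0}^{n-1}\eta_{j}(\omega)\mathbf{1}_{(t_{j},t_{j+1}]}(t)$ with $0=t_{0}<\cdots<t_{n}=\tau$ and $\xi_{j},\eta_{j}$ both $\mathcal{P}_{t_{j}}$-measurable. Setting $\Delta w_{j}\coloneqq w_{+}(t_{j+1},\omega)-w_{+}(t_{j},\omega)$, the integrals reduce to finite sums $\sum_{j}\xi_{j}\Delta w_{j}$ and $\sum_{k}\eta_{k}\Delta w_{k}$. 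Expanding the product and taking expectation, the off-diagonal terms with $j<k$ vanish by the tower property: conditioning on $\mathcal{P}_{t_{k}}$ leaves $\xi_{j}\eta_{k}\Delta w_{j}$ measurable while $\Delta w_{k}$ is independent of $\mathcal{P}_{t_{k}}$ with $\mathbb{E}\llbracket\Delta w_{k}\rrbracket=0$ by Definition \ref{P-Wiener}. The diagonal terms give $\mathbb{E}\llbracket\xi_{j}\eta_{j}(\Delta w_{j})^{2}\rrbracket=\mathbb{E}\llbracket\xi_{j}\eta_{j}\rrbracket(t_{j+1}-t_{j})$, again by independence of $(\Delta w_{j})^{2}$ from $\mathcal{P}_{t_{j}}$ together with $\mathbb{E}\llbracket(\Delta w_{j})^{2}\rrbracket=t_{j+1}-t_{j}$; summing reproduces $\mathbb{E}\llbracket\int_{0}^{\tau}f\cdot g\,dt'\rrbracket$ exactly.

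For the extension, for any $f\in\mathcal{L}_{\mathrm{loc}}^{2}(\{\mathcal{P}_{t}\};\mathbb{R})$ I would choose simple $\{\mathcal{P}_{t}\}$-prog.\ approximants $\{f_{n}\}$ with $\mathbb{E}\llbracket\int_{0}^{\tau}(f_{n}-f)^{2}dt'\rrbracket\to0$, and similarly $\{g_{n}\}$ for $g$; this is the same class of approximants used just above to define $S^{+}(s,t,\omega)$ as an $L^{2}(\mathscr{P}^{1\mathchar`-\mathrm{dim}})$-limit. Applying the already-proven identity to $f_{n}-f_{m}$ shows $\{\int_{0}^{\tau}f_{n}\,dw_{+}\}$ is Cauchy in $L^{2}(\mathscr{P}^{1\mathchar`-\mathrm{dim}})$ with limit $\int_{0}^{\tau}f\,dw_{+}$, and likewise for $g$. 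The bilinear form $(\phi,\psi)\mapsto\mathbb{E}\llbracket\phi\psi\rrbracket$ is continuous on $L^{2}$ by Cauchy--Schwarz, so passing to the limit on both sides gives the identity for $f,g$.

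The main obstacle is producing the simple approximants $\{f_{n}\}$ while preserving progressive measurability: naive discretization in $t$ can break adaptedness if evaluation times look into the future. The standard remedy is to first truncate $f$ to bounded processes, convolve in time by a smooth kernel supported on $(-\varepsilon,0]$ so that adaptedness is preserved, and only then discretize on a left-endpoint grid; each step must be checked to stay in $\mathcal{L}_{\mathrm{loc}}^{2}(\{\mathcal{P}_{t}\};\mathbb{R})$ and to converge in $L^{2}(dt\otimes d\mathscr{P}^{1\mathchar`-\mathrm{dim}})$. Everything else reduces to orthogonality of Wiener increments over disjoint intervals, which is already built into Definition \ref{P-Wiener}.
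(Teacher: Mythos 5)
Your argument is correct, and it is the standard two-step proof of the (polarized) It\^{o} isometry: verify the identity on simple $\{\mathcal{P}_{t}\}$-progressive integrands using orthogonality and independence of Wiener increments, then pass to the limit in $L^{2}$. The paper itself supplies no proof of this lemma at all \textemdash{} it is stated with only a citation to It\^{o} (1944) and the remark that it ``is found easily,'' the integral $\int_{s}^{t}f\,dw_{+}$ having just been defined as the $L^{2}(\mathscr{P}^{1\mathchar`-\mathrm{dim}})$-limit of the sums $S_{n}^{+}$ \textemdash{} so there is no authorial route to compare against; yours is the proof the citation is standing in for, and your handling of the only delicate point (constructing adapted simple approximants without looking into the future) is the correct standard remedy. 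One caveat worth flagging, which is really an imprecision of the paper rather than of your proof: as defined in the appendix, $\mathcal{L}_{\mathrm{loc}}^{2}(\{\mathcal{P}_{t}\};\mathbb{R})$ only requires $\int|f|^{2}dt<\infty$ almost surely, whereas both your density argument and the finiteness of the right-hand side $\mathbb{E}\llbracket\int_{0}^{\tau}f\cdot g\,dt'\rrbracket$ require the stronger condition $\mathbb{E}\llbracket\int_{0}^{\tau}|f|^{2}dt'\rrbracket<\infty$ (i.e., $f\in L^{2}(dt\otimes d\mathscr{P}^{1\mathchar`-\mathrm{dim}})$); for merely locally square-integrable integrands the integral is only a local martingale and the isometry can fail. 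Your proof is valid on the subclass where that expectation is finite, which is evidently what the paper intends.
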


\begin{thm}[$\{\mathcal{P}_{t}\}$-It\^{o} formula]
\begin{leftbar}For $a(\circ,\bullet)\in\mathcal{L}_{\mathrm{loc}}^{1}(\{\mathcal{P}_{t}\};\mathbb{R})$
and $b(\circ,\bullet)\in\mathcal{L}_{\mathrm{loc}}^{2}(\{\mathcal{P}_{t}\};\mathbb{R})$,
consider a 1D stochastic process $\hat{X}(\circ,\omega)$ of a $\{\mathcal{P}_{t}\}$-prog.\ given
by
\begin{align}
\hat{X}(t,\omega) & =\hat{X}(0,\omega)+\int_{0}^{t}a(t',\omega)dt'+\int_{0}^{t}b(t',\omega)dw_{+}(t',\omega)
\end{align}
w.r.t.\ its initial condition $\hat{X}(0,\omega)$. Then for a function
$f\in C^{2}(\mathbb{R}^{N})$, the It\^{o} formula
\begin{align}
f(\hat{X}(t,\omega)) & =f(\hat{X}(0,\omega))+\int_{0}^{t}f'(\hat{X}(t',\omega))a(t',\omega)dt'\nonumber \\
 & \quad\quad+\int_{0}^{t}f'(\hat{X}(t',\omega))b(t',\omega)dw_{+}(t',\omega)+\frac{1}{2}\int_{0}^{t}f''(\hat{X}(t',\omega))[b(t',\omega)]^{2}dt'\label{eq: Ito-formula-1}
\end{align}
 is almost surely satisfied \cite{Ito(1944)}. By introducing
\begin{align*}
\int_{0}^{t}d_{+}\hat{X}(t',\omega) & \coloneqq\hat{X}(t,\omega)-\hat{X}(0,\omega),\\
\int_{0}^{t}d_{+}f(\hat{X}(t',\omega)) & \coloneqq f(\hat{X}(t,\omega))-f(\hat{X}(0,\omega)),
\end{align*}
Eq.(\ref{eq: Ito-formula-1}) is also expressed as
\begin{align}
d_{+}f(\hat{X}(t,\omega)) & =f'(\hat{X}(t',\omega))d_{+}\hat{X}(t,\omega)+\frac{1}{2}f''(\hat{X}(t',\omega))[d_{+}\hat{X}(t,\omega)]^{2}\,\,\,\,\mathrm{a.s.}\label{eq: Ito-formula-2}
\end{align}
\end{leftbar}
\end{thm}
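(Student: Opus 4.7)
The plan is to establish Eq.(\ref{eq: Ito-formula-1}) by a standard localization-plus-partition argument, mirroring the classical construction of It\^{o} stochastic calculus. First, I would reduce to a bounded setting via stopping times $\tau_n \coloneqq \inf\{t\ge 0 : |\hat{X}(t,\omega)|\ge n \text{ or } \int_0^t (|a|+b^2)\,ds \ge n\}$; since $\hat{X}(\circ,\bullet)$ is continuous and $\{\mathcal{P}_t\}$-adapted, each $\tau_n$ is a $\{\mathcal{P}_t\}$-stopping time, so on $[0,t\wedge\tau_n]$ the coefficients $a$, $b$ and the values $f,f',f''$ on the trajectory range are uniformly bounded. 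The general statement then follows by sending $n\to\infty$ and using path-continuity of both sides.

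Next, I would fix a partition $0=t_0<t_1<\cdots<t_m=t$ of mesh $|\Delta|$ and telescope
\[
f(\hat{X}(t,\omega))-f(\hat{X}(0,\omega)) = \sum_{k=0}^{m-1}\bigl[f(\hat{X}(t_{k+1},\omega))-f(\hat{X}(t_k,\omega))\bigr].
\]
Applying Taylor's theorem to second order with $\Delta_k\hat{X} \coloneqq \hat{X}(t_{k+1},\omega)-\hat{X}(t_k,\omega)$ yields
\[
f(\hat{X}(t_{k+1}))-f(\hat{X}(t_k)) = f'(\hat{X}(t_k))\,\Delta_k\hat{X} + \tfrac{1}{2} f''(\hat{X}(t_k))\,(\Delta_k\hat{X})^2 + R_k,
\]
with $|R_k|$ controlled by the modulus of continuity of $f''$ on the localized range of $\hat{X}$. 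Substituting $\Delta_k\hat{X} = \int_{t_k}^{t_{k+1}} a\,ds + \int_{t_k}^{t_{k+1}} b\,dw_+$, the first-order term splits cleanly into a Riemann-type sum that converges to $\int_0^t f'(\hat{X})\,a\,ds$ and an It\^{o}-type sum that converges to $\int_0^t f'(\hat{X})\,b\,dw_+$ in $L^2(\mathscr{P}^{1\mathchar`-\mathrm{dim}})$, using that $f'(\hat{X}(\cdot))$ is $\{\mathcal{P}_t\}$-prog.\ (it is continuous and $\{\mathcal{P}_t\}$-adapted) and invoking the $L^2$-isometry from the lemma preceding the theorem.

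The heart of the argument, and the main obstacle, is showing that
\[
\sum_{k=0}^{m-1} f''(\hat{X}(t_k))\,(\Delta_k\hat{X})^2 \xrightarrow{|\Delta|\to 0} \int_0^t f''(\hat{X}(s))\,[b(s,\omega)]^2\,ds
\]
in $L^2(\mathscr{P}^{1\mathchar`-\mathrm{dim}})$, since this is where stochastic calculus genuinely differs from ordinary calculus. Expanding $(\Delta_k\hat{X})^2$ produces three pieces: the pure-drift square $(\int a\,ds)^2 = O(|\Delta|^2)$ per slab, which sums to $O(|\Delta|)$; the cross term $2\int a\,ds\cdot\int b\,dw_+$, of order $|\Delta|^{3/2}$ per slab in $L^2$ by Cauchy--Schwarz and the isometry, whose slab-wise orthogonality (from $\{\mathcal{P}_t\}$-adaptedness) yields a total contribution of order $|\Delta|$; and the diffusion square, handled through the conditional identity
\[
\mathbb{E}\bigl\llbracket(\textstyle\int_{t_k}^{t_{k+1}} b\,dw_+)^2\,\big|\,\mathcal{P}_{t_k}\bigr\rrbracket = \mathbb{E}\bigl\llbracket\textstyle\int_{t_k}^{t_{k+1}} b^2\,ds\,\big|\,\mathcal{P}_{t_k}\bigr\rrbracket\,\,\mathrm{a.s.}
\]
Multiplying by the $\mathcal{P}_{t_k}$-measurable factor $f''(\hat{X}(t_k))$, summing, and using mean-square convergence of Riemann sums together with path-continuity of $b$ delivers the claimed limit. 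The remainder $\sum_k R_k$ is then crushed in $L^1$ by uniform continuity of $f''$ on the localized trajectory range, combined with the a priori bound $\mathbb{E}\llbracket\sum_k(\Delta_k\hat{X})^2\rrbracket\le C$ inherited from slab-wise application of the isometry. The density of simple $\{\mathcal{P}_t\}$-prog.\ integrands in $\mathcal{L}_{\mathrm{loc}}^2(\{\mathcal{P}_t\};\mathbb{R})$ lets me pass from the step-function case (where $S_n^+$ decomposes explicitly) to general $b$ by continuity of the It\^{o} integral.

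Finally, the differential form Eq.(\ref{eq: Ito-formula-2}) is obtained by rewriting the established identity with the shorthand $\int_0^t d_+\hat{X}(s,\omega) = \hat{X}(t,\omega)-\hat{X}(0,\omega)$ and $\int_0^t d_+ f(\hat{X}(s,\omega)) = f(\hat{X}(t,\omega))-f(\hat{X}(0,\omega))$ declared in the statement, noting that the quadratic-variation computation above is precisely what the symbolic rule $(d_+\hat{X})^2 = b^2\,dt$ encodes once an expectation is taken.
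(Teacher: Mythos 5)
The paper offers no proof of this theorem: it states it as the ``well-known'' It\^{o} formula and simply cites It\^{o} (1944), so there is no in-text argument to compare against. Your proposal supplies the standard proof --- localization by stopping times, telescoping over a partition, second-order Taylor expansion, and identification of the quadratic-variation term --- and this is exactly the classical route the citation points to; the outline is correct. The only step that deserves more than the one sentence you give it is the $L^{2}$ limit of the diffusion-square sum: the conditional identity $\mathbb{E}\llbracket(\int_{t_{k}}^{t_{k+1}}b\,dw_{+})^{2}\,|\,\mathcal{P}_{t_{k}}\rrbracket=\mathbb{E}\llbracket\int_{t_{k}}^{t_{k+1}}b^{2}\,ds\,|\,\mathcal{P}_{t_{k}}\rrbracket$ only shows the error terms $\xi_{k}\coloneqq f''(\hat{X}(t_{k}))\bigl[(\int_{t_{k}}^{t_{k+1}}b\,dw_{+})^{2}-\int_{t_{k}}^{t_{k+1}}b^{2}\,ds\bigr]$ are conditionally centered; to conclude you must also observe that the $\xi_{k}$ are therefore mutually orthogonal in $L^{2}$ and that each has variance $O(|\Delta|^{2})$ (a fourth-moment bound on the localized, bounded integrand), so the total variance is $O(|\Delta|)\to0$. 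With that one estimate made explicit, the argument is complete and consistent with the paper's definitions (in particular, elements of $\mathcal{L}_{\mathrm{loc}}^{2}(\{\mathcal{P}_{t}\};\mathbb{R})$ are path-continuous here, which your Riemann-sum step uses).
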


\subsection{Decreasing family}

We also introduced a backward diffusion. This is called as a decreasing
family.
\begin{defn}[Decreasing family $\{\mathcal{F}_{t}\}$]
\begin{leftbar}For a probability space $\left(\mathit{\Omega}^{1\mathchar`-\mathrm{dim}},\mathcal{F}^{1\mathchar`-\mathrm{dim}},\mathscr{P}^{1\mathchar`-\mathrm{dim}}\right)$,
$\{\mathcal{F}_{t}\}_{t\in\mathbb{R}}$ is a decreasing family of
sub-$\sigma$-algebras on $\varOmega$ such that $t\leq s<\infty\Longrightarrow\mathcal{F}^{1\mathchar`-\mathrm{dim}}\supset\mathcal{F}_{t}\supset\mathcal{F}_{s}$.
\end{leftbar}
\end{defn}

This is regarded as an inverse process of a $\{\mathcal{P}_{t}\}$-prog.
Let us suggest its WP.
\begin{defn}[$\{\mathcal{F}_{t}\}$-WP]
\begin{leftbar}Let us define a monotonically decreasing function
$f_{\downarrow}:\mathbb{R}\rightarrow\mathbb{R}$ such that $df_{\downarrow}/dt=-1$.
$w_{-}(\circ,\bullet)$ is an $\{\mathcal{F}_{t}\}$-WP when $\{w_{-}(f_{\downarrow}(t),\omega)\}_{t\in\mathbb{R}}$
is a $\{\mathcal{P}_{t}\}$-WP.\end{leftbar}
\end{defn}

\
\begin{defn}[$\{\mathcal{F}_{t}\}$-martingale]
\begin{leftbar}A stochastic process $m(\circ,\bullet)$ is a $\{\mathcal{F}_{t}\}$-martingale
when $m(\circ,\bullet)$ satisfies below:

(1) $m(t,\bullet)$ is integrable, i.e., $\mathbb{E}\llbracket m(t,\bullet)\rrbracket<\infty$,

(2) $m(\circ,\bullet)$ is $\{\mathcal{F}_{t}\}$-adapted,

(3) For $t\leq\forall s<\infty$, $\mathbb{E}\llbracket m(t,\bullet)|\mathcal{F}_{s}\rrbracket=\hat{m}(s,\bullet)$
a.s.\end{leftbar}
\end{defn}

In order to the above definition, an $\{\mathcal{F}_{t}\}$-WP is
an $\{\mathcal{F}_{t}\}$-martingale. That is confirmed by the following
lemma.
\begin{lem}
\begin{leftbar}$w_{-}(\circ,\bullet)$ of an $\{\mathcal{F}_{t}\}$-WP
satisfies below:
\begin{equation}
\mathbb{E}\llbracket w_{-}(t,\bullet)-w_{-}(t-\delta t,\bullet)\rrbracket=0
\end{equation}
\begin{equation}
\underset{\delta t\rightarrow0+}{\lim}\mathbb{E}\left\llbracket \frac{[w_{-}(t,\bullet)-w_{-}(t-\delta t,\bullet)]^{2}}{\delta t}\right\rrbracket =1
\end{equation}
\end{leftbar}
\end{lem}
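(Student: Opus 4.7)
The plan is to reduce both identities to the forward Wiener-process lemma already proved earlier in the appendix, by unfolding the definition of an $\{\mathcal{F}_{t}\}$-WP as the time-reversal of a $\{\mathcal{P}_{t}\}$-WP. I would set $\tilde{w}(s,\omega) \coloneqq w_{-}(f_{\downarrow}(s),\omega)$; by definition of an $\{\mathcal{F}_{t}\}$-WP this is a genuine $\{\mathcal{P}_{s}\}$-WP on the same probability space $\bigl(\varOmega^{1\mathchar`-\mathrm{dim}},\mathcal{F}^{1\mathchar`-\mathrm{dim}},\mathscr{P}^{1\mathchar`-\mathrm{dim}}\bigr)$. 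Since $df_{\downarrow}/dt=-1$ we have $f_{\downarrow}(t)=C-t$ for some constant $C$, and in particular $f_{\downarrow}^{-1}=f_{\downarrow}$.

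The next step is a direct substitution. Letting $s \coloneqq f_{\downarrow}^{-1}(t)$, note that $f_{\downarrow}^{-1}(t-\delta t)=s+\delta t$, so
\begin{equation*}
w_{-}(t,\omega)-w_{-}(t-\delta t,\omega)=\tilde{w}(s,\omega)-\tilde{w}(s+\delta t,\omega)=-\bigl[\tilde{w}(s+\delta t,\omega)-\tilde{w}(s,\omega)\bigr].
\end{equation*}
Taking expectations and invoking the first identity of the forward $\{\mathcal{P}_{t}\}$-WP lemma gives $\mathbb{E}\llbracket \tilde{w}(s+\delta t,\bullet)-\tilde{w}(s,\bullet)\rrbracket=0$, which immediately yields the first claim. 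For the second, squaring removes the sign,
\begin{equation*}
\bigl[w_{-}(t,\omega)-w_{-}(t-\delta t,\omega)\bigr]^{2}=\bigl[\tilde{w}(s+\delta t,\omega)-\tilde{w}(s,\omega)\bigr]^{2},
\end{equation*}
so dividing by $\delta t>0$, taking the expectation, and applying the second identity of the forward lemma gives the limit $1$.

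The only subtlety — and therefore the main (minor) obstacle — is checking that taking the expectation under $\mathscr{P}^{1\mathchar`-\mathrm{dim}}$ commutes with the time-reparametrization $f_{\downarrow}$. This is automatic because $f_{\downarrow}$ acts on the deterministic time argument alone and leaves the sample space and its probability measure untouched; hence the forward lemma transfers verbatim. No additional stochastic calculus is required, and the orientation of the filtration $\{\mathcal{F}_{t}\}$ does not enter the computation, since both identities concern only expectations of increments of $w_{-}$ itself.
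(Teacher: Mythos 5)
Your proof is correct: the paper states this lemma without any written proof (it is presented as being ``confirmed'' immediately after the definition of an $\{\mathcal{F}_{t}\}$-WP), and your time-reversal argument via $\tilde{w}(s,\omega)\coloneqq w_{-}(f_{\downarrow}(s),\omega)$ is precisely the justification that definition is designed to make available, reducing both identities to the forward $\{\mathcal{P}_{t}\}$-WP lemma by the substitution $t=f_{\downarrow}(s)$, $t-\delta t=f_{\downarrow}(s+\delta t)$. The only point you inherit from the paper rather than resolve is the mismatch between the forward WP being defined on $[0,\infty)$ and $f_{\downarrow}$ mapping $\mathbb{R}\rightarrow\mathbb{R}$, but that looseness is the paper's, not yours.
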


Then, the following function family is introduced:
\begin{align*}
\mathcal{L}_{\mathrm{loc}}^{p}(\{\mathcal{F}_{t}\};X) & \coloneqq\left\{ \hat{f}(\circ,\bullet)\in\mathbb{L}_{[t_{1},t_{2}]}^{p}(X)\left|\begin{gathered}\forall t_{1}\leq\forall t_{2}\in\mathbb{R},\,\hat{f}(\circ,\bullet)\mathrm{\,is\,}\{\mathcal{F}_{t}\}\mathchar`-\mathrm{adapted.}\end{gathered}
\right.\right\} 
\end{align*}
\begin{defn}[$\{\mathcal{F}_{t}\}$-prog\textcolor{black}{.}]
\begin{leftbar}If a stochastic process $\hat{f}(\circ,\bullet)$
is $\mathscr{B}([s,t])\times\mathcal{F}_{t}/\mathscr{B}(X)$-measurable
for all $s\leq t\in\mathbb{R}$, let us call that $\hat{f}(\circ,\bullet)$
is $\{\mathcal{F}_{t}\}$-prog.\end{leftbar}
\end{defn}

For an $\{\mathcal{F}_{t}\}$-prog.\ $\hat{f}(\circ,\bullet)$, $s=t_{0}\leq t_{1}\leq\cdots\leq t_{n}=T$
and each of $\{\mathcal{F}_{t_{m=0,1,\cdots,n}}\}$-adapted $\xi_{m}(\bullet)$,
\begin{align*}
\hat{f}(\tau,\omega) & =\xi_{0}(\omega)\times\boldsymbol{1}_{\{t_{0}\}}(\tau)+\sum_{j=1}^{n}\xi_{j}(\omega)\times\boldsymbol{1}_{(t_{j-1},t_{j}]}(\tau).
\end{align*}
Let us introduce the summation below,
\begin{align*}
S_{n}^{-}(s,t,\omega) & \coloneqq\sum_{j=0}^{n-1}\xi_{j+1}(\omega)\times\left[w_{-}(t\wedge t_{j+1},\omega)-w_{-}(t\wedge t_{j},\omega)\right].
\end{align*}
When there is $S^{-}(s,t,\omega)$ such that
\begin{align*}
\lim_{n\rightarrow\infty}\mathbb{E}\left\llbracket [S_{n}^{-}(s,t,\bullet)-S^{-}(s,t,\bullet)]^{2}\right\rrbracket  & =0,
\end{align*}
this $S^{-}(s,t,\omega)$ is expressed by $\int_{s}^{t}f(t',\omega)dw_{-}(t',\omega)$. 
\begin{thm}[$\{\mathcal{F}_{t}\}$-It\^{o} formula]
\begin{leftbar}For $A(\circ,\bullet)\in\mathcal{L}_{\mathrm{loc}}^{1}(\{\mathcal{F}_{t}\};\mathbb{R})$
and $B(\circ,\bullet)\in\mathcal{L}_{\mathrm{loc}}^{2}(\{\mathcal{F}_{t}\};\mathbb{R})$,
let $\hat{X}(\circ,\omega)$ of an $\{\mathcal{F}_{t}\}$-prog.\ be
given by
\begin{align}
d_{-}\hat{X}(t,\omega) & =A(t,\omega)dt+B(t,\omega)dw_{-}(t,\omega)
\end{align}
with
\begin{align}
\int_{a}^{b}d_{-}\hat{X}(t,\omega) & \coloneqq\hat{X}(b,\omega)-\hat{X}(a,\omega).
\end{align}
Then by $\int_{a}^{b}d_{-}f(\hat{X}(t',\omega))\coloneqq f(\hat{X}(b,\omega))-f(\hat{X}(a,\omega))$
for $f\in C^{2}(\mathbb{R}^{N})$, its It\^{o} formula becomes
\begin{align}
d_{-}f(\hat{X}(t,\omega)) & =f'(\hat{X}(t,\omega))d_{-}\hat{X}(t,\omega)-\frac{1}{2}f''(\hat{X}(t,\omega))[d_{-}\hat{X}(t,\omega)]^{2}\,\,\,\,\mathrm{a.s.}\label{eq: Ito-formula-3}
\end{align}
 \end{leftbar}
\end{thm}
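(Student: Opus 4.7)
The plan is to reduce the backward statement to the forward It\^o formula (Eq.~(\ref{eq: Ito-formula-2})) already established above, exploiting the time reversal built into the very definition of an $\{\mathcal{F}_t\}$-WP. Concretely, I would use the map $f_\downarrow$ with $df_\downarrow/dt=-1$ to define the reversed process $\tilde{X}(s,\omega) := \hat{X}(f_\downarrow(s),\omega)$ and the reversed noise $\tilde{w}_+(s,\omega) := w_-(f_\downarrow(s),\omega)$. By the definition already recorded in the excerpt, $\tilde{w}_+$ is then a $\{\mathcal{P}_s\}$-WP for the natural increasing family $\mathcal{P}_s := \mathcal{F}_{f_\downarrow(s)}$, and the $\{\mathcal{F}_t\}$-progressive measurability of $\hat{X}$ translates into $\{\mathcal{P}_s\}$-progressive measurability of $\tilde{X}$, with analogous transport for the coefficients $A \in \mathcal{L}^1_{\mathrm{loc}}(\{\mathcal{F}_t\};\mathbb{R})$ and $B \in \mathcal{L}^2_{\mathrm{loc}}(\{\mathcal{F}_t\};\mathbb{R})$.

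Next I would derive the forward SDE satisfied by $\tilde{X}$. Unpacking the backward differential $d_-\hat{X}(t,\omega) = \hat{X}(t,\omega) - \hat{X}(t-dt,\omega)$ and setting $t=f_\downarrow(s)$ so that $f_\downarrow(s+ds) = t-ds$, the hypothesis $d_-\hat{X} = A\,dt + B\,dw_-$ rearranges to
\[
d_+\tilde{X}(s,\omega) = -A(f_\downarrow(s),\omega)\,ds + B(f_\downarrow(s),\omega)\,d_+\tilde{w}_+(s,\omega),
\]
a standard forward It\^o process. Applying the forward It\^o formula (\ref{eq: Ito-formula-2}) to $f(\tilde{X}(s,\omega))$ then yields
\[
d_+ f(\tilde{X}(s,\omega)) = f'(\tilde{X}(s,\omega))\, d_+\tilde{X}(s,\omega) + \tfrac{1}{2} f''(\tilde{X}(s,\omega))\,[d_+\tilde{X}(s,\omega)]^2 \quad \text{a.s.}
\]

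Finally I would undo the reversal using the relations $d_+ f(\tilde{X}(s)) = -d_- f(\hat{X}(t))$ and $d_+\tilde{X}(s) = -d_-\hat{X}(t)$, together with the observation that squaring annihilates the sign, so that $[d_+\tilde{X}(s)]^2 = [d_-\hat{X}(t)]^2$. Substituting, the first-order minus signs on the two sides cancel, while the second-order term retains exactly one leftover minus sign contributed by the LHS, producing precisely the characteristic sign flip that distinguishes~(\ref{eq: Ito-formula-3}) from~(\ref{eq: Ito-formula-2}).

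The main obstacle is the bookkeeping in the reduction step: one must verify carefully that $f_\downarrow$ really does transport the $\{\mathcal{F}_t\}$-progressive class into the $\{\mathcal{P}_s\}$-progressive class, and, crucially, that the backward It\^o integral $\int B\,dw_-$ corresponds under the reversal to the ordinary forward It\^o integral $\int(B\circ f_\downarrow)\,d\tilde{w}_+$ with the correct sign on $\tilde{w}_+(s+ds)-\tilde{w}_+(s) = -(w_-(f_\downarrow(s))-w_-(f_\downarrow(s)-ds))$. Once this is pinned down, all genuinely analytic content --- the mean-square limit of simple-function approximations and the quadratic-variation identity $[dw]^2=dt$ --- is inherited from the already-proved forward theorem and requires no repetition.
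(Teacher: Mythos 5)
Your proposal is correct and follows essentially the same route as the paper: the paper also reduces the backward formula to the forward one (Eq.~(\ref{eq: Ito-formula-2})) by introducing the monotonically decreasing map $f_{\downarrow}$ with $df_{\downarrow}/dt=-1$, passing to the associated $\{\mathcal{P}_{t'}\}$-progressive process $\hat{X}'$, establishing $d_{+}\hat{X}'(t',\omega')=-d_{-}\hat{X}(t,\omega)$ and $d_{+}f(\hat{X}'(t',\omega'))=-d_{-}f(\hat{X}(t,\omega))$, and letting the quadratic term absorb the squared sign while the leftover minus on the left-hand side produces the sign flip in Eq.~(\ref{eq: Ito-formula-3}). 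Your additional remarks on transporting the SDE coefficients and the stochastic integral under the reversal are a reasonable elaboration of bookkeeping the paper leaves implicit, but the underlying argument is the same.
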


Consider the derivation of (\ref{eq: Ito-formula-3}) by using (\ref{eq: Ito-formula-2}).
The decreasing family $\{\mathcal{F}_{t}\}_{t\in\mathbb{R}}$ relates
to an increasing family $\{\mathcal{P}_{t'}\}_{t'\in\mathbb{R}}$
by a monotonically decreasing function $f_{\downarrow}:\mathbb{R}\rightarrow\mathbb{R}$
such as $df_{\downarrow}/dt=-1$, $\{\mathcal{P}_{f_{\downarrow}(t)}\}_{t\in\mathbb{R}}$
becomes an decreasing family since $t\le s\Rightarrow\mathcal{P}_{f_{\downarrow}(t)}\supset\mathcal{P}_{f_{\downarrow}(s)}$.
Thus, there is $f_{\downarrow}$ such that $\{\mathcal{P}_{f_{\downarrow}(t')}\}_{t'\in\mathbb{R}}\equiv\{\mathcal{F}_{t}\}_{t\in\mathbb{R}}$.
For $\hat{X}(\circ,\bullet)$ of an $\{\mathcal{F}_{t}\}$-prog.,
there is $\hat{X}'(\circ,\bullet)$ of a $\{\mathcal{P}_{t'}\}$-prog.\ satisfying
$\hat{X}'(t',\omega')=\hat{X}(t,\omega)$ with $t'\coloneqq f_{\downarrow}(t)=t$
at a fixed $t$. For $\forall a\geq0$, let us set $f_{\downarrow}(t-a)=t+a$.
Since $\int_{f_{\downarrow}(t+a)}^{f_{\downarrow}(t-a)}d_{+}\hat{X}'(\mathring{t},\omega')=\hat{X}'(f_{\downarrow}(t-a),\omega')-\hat{X}'(f_{\downarrow}(t+a),\omega')=\hat{X}(t-a,\omega)-\hat{X}(t+a,\omega)=-\int_{t-a}^{t+a}d_{-}\hat{X}(\mathring{t},\omega)$,
$d_{+}\hat{X}'(t',\omega')=-d_{-}\hat{X}(t,\omega)$ is derived. $d_{+}f(\hat{X}'(t',\omega'))=-d_{-}f(\hat{X}(t,\omega))$
is also imposed for $\forall f\in C^{2}(\mathbb{R}^{N})$. Let us
apply those relations to (\ref{eq: Ito-formula-2}), namely,
\begin{align*}
d_{+}f(\hat{X}'(t',\omega')) & =f'(\hat{X}'(t',\omega'))d_{+}\hat{X}'(t',\omega')+\frac{1}{2}f''(\hat{X}'(t',\omega'))[d_{+}\hat{X'}(t',\omega')]^{2}\,\,\,\,\mathrm{a.s.}
\end{align*}
 Finally, Eq.(\ref{eq: Ito-formula-3}) is found by the replacement
from $\hat{X}'(\circ,\bullet)$ to $\hat{X}(\circ,\bullet)$. Let
$\omega'=\omega$ when $\{\hat{X}'(t',\omega')\}_{t'\in\mathbb{R}}=\{\hat{X}(t,\omega)\}_{t\in\mathbb{R}}$.
In this case, $\hat{X}(\circ,\bullet)$ is $\{\mathcal{P}_{t}\}$
and $\{\mathcal{F}_{t}\}$-prog.

\subsection{Forward-backward composition}

Let us introduce the simplest example of a composition of stochastic
processes for the relativistic kinematics. Consider the set of 1D
stochastic processes, $\hat{X}(\circ,\bullet)$ and $\hat{Y}(\circ,\bullet)$
of a $\{\mathcal{P}_{t}\}$-prog.\ and an $\{\mathcal{F}_{t}\}$-prog.,
respectively:

\begin{equation}
\begin{cases}
\begin{gathered}d_{+}\hat{X}(t,\omega)=a_{+}(\hat{X}(t,\omega))dt+\theta\times dw_{+}(t,\omega)\end{gathered}
\\
\begin{gathered}d_{-}\hat{Y}(t,\omega')=a_{-}(\hat{Y}(t,\omega'))dt+\theta\times dw_{-}(t,\omega')\end{gathered}
\end{cases}\label{eq: set of SDE eqs 1}
\end{equation}
By a 2D vector $\hat{\gamma}(t,\omega)\coloneqq(\hat{X}(t,\omega),\hat{Y}(t,\omega))$,
Eq.(\ref{eq: set of SDE eqs 1}) becomes
\begin{align}
\left[\begin{array}{c}
\hat{X}(b,\omega)-\hat{X}(a,\omega)\\
\hat{Y}(b,\omega)-\hat{Y}(a,\omega)
\end{array}\right] & =\int_{a}^{b}\left[\begin{array}{c}
a_{+}(\hat{X}(t,\omega),\hat{Y}(t,\omega))\\
a_{-}(\hat{X}(t,\omega),\hat{Y}(t,\omega))
\end{array}\right]dt+\theta\times\int_{a}^{b}\left[\begin{array}{c}
dw_{+}(t,\omega)\\
dw_{-}(t,\omega)
\end{array}\right]\label{eq: set of SDE eqs 2}
\end{align}
with $a_{+},a_{-}:\mathbb{R}^{2}\rightarrow\mathbb{R}$ . This is
forward-diffused in $\hat{X}$-direction and backward-diffused in
$\hat{Y}$-direction. Consider $(\mathcal{P}_{t},\mathcal{F}_{t})$
the set of the sub-$\sigma$-algebras for each $t$. This is regarded
as a new sub-$\sigma$-algebra of a 2D stochastic process denoted
by $\mathcal{M}_{t}\coloneqq\mathcal{P}_{t}\otimes\mathcal{F}_{t}$
and its family $\{\mathcal{M}_{t}\}_{t\in\mathbb{R}}$. 
\begin{defn}[Forward-backward composition]
\begin{leftbar} Let $\{\mathcal{M}_{t}\}_{t\in\mathbb{R}}$ be a
family of a sub-$\sigma$-algebras for a 2D stochastic process $\hat{\gamma}(\circ,\omega)\coloneqq\{(\hat{X}(t,\omega),\hat{Y}(t,\omega))|t\in\mathbb{R}\}$
on $\left(\mathit{\Omega},\mathcal{F},\mathscr{P}\right)$. Where,
an $\mathcal{M}_{t}/\mathscr{B}(\mathbb{R}^{2})$-measurable $\hat{\gamma}(t,\bullet)$
is $\mathcal{P}_{t}\otimes\mathcal{F}_{t}/\mathscr{B}(\mathbb{R}^{2})$-measurable
when $\hat{X}(t,\bullet)$ is $\mathcal{P}_{t}/\mathscr{B}(\mathbb{R})$-measurable
and $\hat{Y}(t,\bullet)$ is $\mathcal{F}_{t}/\mathscr{B}(\mathbb{R})$-measurable.
Hence, $\mathcal{M}_{t}\coloneqq\mathcal{P}_{t}\otimes\mathcal{F}_{t}$.
When $\{\hat{\gamma}(t,\bullet)\}_{t\in\mathbb{R}}$ is $\mathcal{P}_{t}\otimes\mathcal{F}_{t}/\mathscr{B}(\mathbb{R}^{2})$-measurable
for all $t$, $\hat{\gamma}(\circ,\bullet)$ is $\{\mathcal{P}_{t}\otimes\mathcal{F}_{t}\}$-adapted.
If $\hat{\gamma}(\circ,\bullet)$ is $\mathscr{B}([s,t])\times\mathcal{P}_{t}\otimes\mathcal{F}_{t}/\mathscr{B}(\mathbb{R}^{2})$-measurable
for all $s\leq t\in\mathbb{R}$, it is $\{\mathcal{P}_{t}\otimes\mathcal{F}_{t}\}$-prog.\end{leftbar}
\end{defn}

For $a(\hat{\gamma}(\circ,\bullet))\coloneqq(a_{+}(\hat{\gamma}(\circ,\bullet),a_{-}(\hat{\gamma}(\circ,\bullet)))$
of a $\{\mathcal{P}_{t}\otimes\mathcal{F}_{t}\}$-prog., let us confirm
the following It\^{o} formula. 
\begin{multline}
f(\hat{\gamma}(b,\omega))-f(\hat{\gamma}(a,\omega))\\
=\int_{a}^{b}\frac{\partial f}{\partial x}(\hat{\gamma}(t,\omega))d_{+}\hat{X}(\hat{\gamma}(t,\omega))+\int_{a}^{b}\frac{\partial f}{\partial y}(\hat{\gamma}(t,\omega))d_{-}\hat{Y}(\hat{\gamma}(t,\omega))\\
+\frac{\theta^{2}}{2}\times\int_{a}^{b}\frac{\partial^{2}f}{\partial x\partial x}(\hat{\gamma}(t,\omega))dt-\frac{\theta^{2}}{2}\times\int_{a}^{b}\frac{\partial^{2}f}{\partial y\partial y}(\hat{\gamma}(t,\omega))dt\,\,\,\,\mathrm{a.s.}
\end{multline}
The appearance of $\int_{a}^{b}(\partial_{x}^{2}-\partial_{y}^{2})f(\hat{\gamma}(t,\omega))dt$
is useful to make the relativistic kinematics. The readers can understand
the above formula by the following:
\begin{multline*}
f(\hat{X}(t+\delta t,\omega),\hat{Y}(t+\delta t,\omega))-f(\hat{X}(t,\omega),\hat{Y}(t,\omega))\\
=\left[f(\hat{X}(t+\delta t,\omega),\hat{Y}(t+\delta t,\omega))-f(\hat{X}(t,\omega),\hat{Y}(t+\delta t,\omega))\right]_{\hat{Y}(t+\delta t,\omega)\mathrm{\,is\,fixed}}\\
+\left[f(\hat{X}(t,\omega),\hat{Y}(t+\delta t,\omega)-f(\hat{X}(t,\omega),\hat{Y}(t,\omega))\right]_{\hat{X}(\tau,\omega)\mathrm{\,is\,fixed}}
\end{multline*}

\subsection{Basic construction of 4D processes}

Consider the following types of 4D processes on $(\mathbb{A}^{4}(\mathbb{V_{\mathrm{M}}^{\mathrm{4}}},g),\mathscr{B}(\mathbb{A}^{4}(\mathbb{V_{\mathrm{M}}^{\mathrm{4}}},g)),\mu)$
of the Minkowski spacetime. 
\[
\underset{\{\mathscr{P}_{\tau}\}\mathchar`-\mathrm{prog.}}{\underbrace{\varphi\circ\hat{x}(\tau,\omega)}}\coloneqq(\underset{\{\mathcal{F}_{\tau}\}\mathchar`-\mathrm{prog.}}{\underbrace{\hat{x}^{0}(\tau,\omega)},}\underset{\{\mathcal{P}_{\tau}\}\mathchar`-\mathrm{prog.}}{\underbrace{\hat{x}^{1}(\tau,\omega),\hat{x}^{2}(\tau,\omega),\hat{x}^{3}(\tau,\omega)}})
\]

\[
\underset{\{\mathscr{F}_{\tau}\}\mathchar`-\mathrm{prog.}}{\underbrace{\varphi\circ\hat{x}(\tau,\omega)}}\coloneqq(\underset{\{\mathcal{P}_{\tau}\}\mathchar`-\mathrm{prog.}}{\underbrace{\hat{x}^{0}(\tau,\omega)},}\underset{\{\mathcal{F}_{\tau}\}\mathchar`-\mathrm{prog.}}{\underbrace{\hat{x}^{1}(\tau,\omega),\hat{x}^{2}(\tau,\omega),\hat{x}^{3}(\tau,\omega)}})
\]
\begin{defn}[$\{\mathscr{P}_{\tau}\}$ and $\{\mathscr{F}_{\tau}\}$]
\begin{leftbar}For a probability space $\left(\mathit{\Omega}^{1\mathchar`-\mathrm{dim}},\mathcal{F}^{1\mathchar`-\mathrm{dim}},\mathscr{P}^{1\mathchar`-\mathrm{dim}}\right)$,
let $\{\mathcal{P}_{\tau}\}_{\tau\in\mathbb{R}}$ and $\{\mathcal{F}_{\tau}\}_{\tau\in\mathbb{R}}$
be increasing and decreasing families of 1D continuous processes,
respectively. Then for $\varOmega\coloneqq\times^{4}\mathit{\Omega}^{1\mathchar`-\mathrm{dim}}$,
$\mathcal{F}\coloneqq\otimes^{4}\mathcal{F}^{1\mathchar`-\mathrm{dim}}$
and $\mathscr{P}\coloneqq\times^{4}\mathscr{P}^{1\mathchar`-\mathrm{dim}}$,
there are 4D stochastic processes of $\{\mathscr{P}_{\tau}\}_{\tau\in\mathbb{R}}$
and $\{\mathscr{F}_{\tau}\}_{\tau\in\mathbb{R}}$ on $(\varOmega,\mathcal{F},\mathscr{P})$;
$\{\mathscr{P}_{\tau}\}$ is a family of $\mathscr{P}_{\tau\in\mathbb{R}}\coloneqq\mathcal{F}_{\tau}\otimes\mathcal{P}_{\tau}\otimes\mathcal{P}_{\tau}\otimes\mathcal{P}_{\tau}$
and $\{\mathscr{F}_{\tau}\}$ of $\mathscr{F}_{\tau\in\mathbb{R}}\coloneqq\mathcal{P}_{\tau}\otimes\mathcal{F}_{\tau}\otimes\mathcal{F}_{\tau}\otimes\mathcal{F}_{\tau}$.\end{leftbar}
\end{defn}

The WPs are updated on $(\varOmega,\mathcal{F},\mathscr{P})$:
\begin{defn}[$\{\mathscr{P}_{\tau}\}$-WPs and $\{\mathscr{F}_{\tau}\}$-WPs]
\begin{leftbar}Let $W_{+}(\circ,\bullet)$ and $W_{-}(\circ,\bullet)$
be $\{\mathscr{P}_{\tau}\}$ and $\{\mathscr{F}_{\tau}\}$-WPs as
the 4D stochastic processes:
\[
\underset{\{\mathscr{P}_{\tau}\}\mathchar`-\mathrm{WP}}{\underbrace{\varphi\circ W_{-}(\tau,\omega)}}\coloneqq(\underset{\{\mathcal{F}_{\tau}\}\mathchar`-\mathrm{WP}}{\underbrace{w_{-}(\tau,\omega)},}\underset{\{\mathcal{P}_{\tau}\}\mathchar`-\mathrm{WPs}}{\underbrace{w_{+}^{1}(\tau,\omega),w_{+}^{2}(\tau,\omega),w_{+}^{3}(\tau,\omega)}})
\]

\[
\underset{\{\mathscr{F}_{\tau}\}\mathchar`-\mathrm{WP}}{\underbrace{\varphi\circ W_{-}(\tau,\omega)}}\coloneqq(\underset{\{\mathcal{P}_{\tau}\}\mathchar`-\mathrm{WP}}{\underbrace{w_{+}(\tau,\omega)},}\underset{\{\mathcal{F}_{\tau}\}\mathchar`-\mathrm{WPs}}{\underbrace{w_{-}^{1}(\tau,\omega),w_{-}^{2}(\tau,\omega),w_{-}^{3}(\tau,\omega)}}).
\]
\end{leftbar}
\end{defn}

\begin{lem}
\begin{leftbar}$W_{+}(\circ,\bullet)$ and $W_{-}(\circ,\bullet)$
satisfy below with $W_{\pm}^{\mu}(\tau,\omega)\coloneqq\varphi\circ W_{\pm}(\tau,\omega)$
for each $\mu,\nu=0,1,2,3$:
\begin{align}
\mathbb{E}\left\llbracket \int_{\tau}^{\tau+\delta\tau}dW_{\pm}^{\mu}(\tau',\bullet)\right\rrbracket = & 0
\end{align}
\begin{align}
\mathbb{E}\left\llbracket \int_{\tau}^{\tau+\delta\tau}dW_{\pm}^{\mu}(\tau',\bullet)\times\int_{\tau}^{\tau+\delta\tau}dW_{\pm}^{\nu}(\tau'',\bullet)\right\rrbracket  & =\delta^{\mu\nu}\times\delta\tau
\end{align}
\end{leftbar}
\end{lem}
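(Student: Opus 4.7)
The plan is to reduce the 4D statement to the 1D facts already stated in the preceding lemma, using the product structure $(\varOmega,\mathcal{F},\mathscr{P})=(\times^{4}\mathit{\Omega}^{1\mathchar`-\mathrm{dim}},\otimes^{4}\mathcal{F}^{1\mathchar`-\mathrm{dim}},\times^{4}\mathscr{P}^{1\mathchar`-\mathrm{dim}})$. First I would unpack the definitions: each component $W_{\pm}^{\mu}(\tau,\bullet)$ is, by construction, a 1D Wiener process on its own factor of $\varOmega$, and different components live on \emph{different} factors. Concretely, for $W_{+}$ the $\mu=0$ component is the $\{\mathcal{F}_{\tau}\}$-WP $w_{-}$ and the $\mu=1,2,3$ components are independent $\{\mathcal{P}_{\tau}\}$-WPs $w_{+}^{i}$; analogously for $W_{-}$. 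Interpreting the integrals as increments, $\int_{\tau}^{\tau+\delta\tau}dW_{\pm}^{\mu}(\tau',\omega)=W_{\pm}^{\mu}(\tau+\delta\tau,\omega)-W_{\pm}^{\mu}(\tau,\omega)$.

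For the first identity I would simply invoke the previously stated mean-zero property of 1D $\{\mathcal{P}_{t}\}$- and $\{\mathcal{F}_{t}\}$-WPs; since $\mathscr{P}$ factorises, integration against $\mathscr{P}$ reduces to integration against the single relevant $\mathscr{P}^{1\mathchar`-\mathrm{dim}}$ factor on which $W_{\pm}^{\mu}$ depends, and the other three factors integrate to one. This gives $0$ componentwise.

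For the second identity I would split into $\mu=\nu$ and $\mu\ne\nu$. When $\mu=\nu$ the integrand $[W_{\pm}^{\mu}(\tau+\delta\tau,\bullet)-W_{\pm}^{\mu}(\tau,\bullet)]^{2}$ depends only on one factor, and applying the 1D variance identity gives exactly $\delta\tau$ in the limit $\delta\tau\to 0+$ (and $=\delta\tau$ for an honest Wiener increment). When $\mu\ne\nu$, the two factors of the product depend on \emph{disjoint} coordinates of $\omega=(\omega^{0},\omega^{1},\omega^{2},\omega^{3})\in\times^{4}\mathit{\Omega}^{1\mathchar`-\mathrm{dim}}$, so by Fubini the expectation factorises as
\begin{equation}
\mathbb{E}\llbracket\Delta W_{\pm}^{\mu}\rrbracket\cdot\mathbb{E}\llbracket\Delta W_{\pm}^{\nu}\rrbracket=0\cdot 0=0,
\end{equation}
where $\Delta W_{\pm}^{\mu}\coloneqq W_{\pm}^{\mu}(\tau+\delta\tau,\bullet)-W_{\pm}^{\mu}(\tau,\bullet)$. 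Combining the two cases yields the claimed Kronecker-$\delta$ structure $\delta^{\mu\nu}\times\delta\tau$.

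The main obstacle I anticipate is purely bookkeeping, namely being careful that the mixed pairing involving the time-component (a $\{\mathcal{F}_{\tau}\}$-WP) and a spatial component (a $\{\mathcal{P}_{\tau}\}$-WP) is handled via the product-space independence rather than by trying to compare two genuinely different filtrations on the same factor; once the product construction of $\varOmega$ is taken seriously, forward/backward character becomes irrelevant for the cross-moments and only matters for the marginal variances, both of which equal $\delta\tau$. No new estimate is required beyond what the preceding 1D lemmas already supply.
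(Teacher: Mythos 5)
Your argument is correct, and it is precisely what the paper's construction is set up to deliver: the paper states this lemma without any explicit proof, immediately after defining $\varOmega=\times^{4}\mathit{\Omega}^{1\mathchar`-\mathrm{dim}}$, $\mathscr{P}=\times^{4}\mathscr{P}^{1\mathchar`-\mathrm{dim}}$ and the componentwise $\{\mathcal{P}_{\tau}\}$/$\{\mathcal{F}_{\tau}\}$-WPs, so the intended justification is exactly your reduction to the 1D mean and variance lemmas together with product-measure independence for the off-diagonal terms. Your remark that the forward/backward character is irrelevant for the cross-moments (each component living on a distinct factor of $\varOmega$) and that both marginal variances equal $\delta\tau$ is the only point of substance, and you handle it correctly.
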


\begin{thm}
\begin{leftbar}The It\^{o} formula of a $C^{2}$-function $f:\mathbb{A}^{4}(\mathbb{V_{\mathrm{M}}^{\mathrm{4}}},g)\rightarrow\mathbb{C}$
w.r.t.\ $W_{\pm}(\circ,\omega)$ is
\begin{align}
f(W_{\pm}(\tau_{b},\omega))-f(W_{\pm}(\tau_{a},\omega)) & =\int_{\tau_{a}}^{\tau_{b}}\partial_{\mu}f(W_{\pm}(\tau,\omega))dW_{\pm}^{\mu}(\tau,\omega)\nonumber \\
 & \quad\quad\mp\frac{\lambda^{2}}{2}\int_{\tau_{a}}^{\tau_{b}}\partial_{\mu}\partial^{\mu}f(W_{\pm}(\tau,\omega))d\tau\,\,\,\,\mathrm{a.s.}
\end{align}
\end{leftbar}
\end{thm}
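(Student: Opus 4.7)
The plan is to reduce the 4D It\^{o} formula to the four one-dimensional It\^{o} formulas already established for $\{\mathcal{P}_\tau\}$-WPs (Eq.~(\ref{eq: Ito-formula-2})) and $\{\mathcal{F}_\tau\}$-WPs (Eq.~(\ref{eq: Ito-formula-3})), and then to combine the four corrections using the Minkowski signature $g=(+,-,-,-)$. Since the underlying probability space is the product $(\varOmega,\mathcal{F},\mathscr{P})=(\times^{4}\mathit{\Omega}^{1\mathchar`-\mathrm{dim}},\otimes^{4}\mathcal{F}^{1\mathchar`-\mathrm{dim}},\times^{4}\mathscr{P}^{1\mathchar`-\mathrm{dim}})$, the four component processes $W_{\pm}^{\mu}$ are mutually independent 1D WPs, and the cross moments $\mathbb{E}\llbracket dW_{\pm}^{\mu}dW_{\pm}^{\nu}\rrbracket=\delta^{\mu\nu}d\tau$ stated in the preceding lemma are exactly what is needed to suppress mixed quadratic variations.

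First, I would iterate the forward-backward composition construction already spelled out for two coordinates in the excerpt. Writing $f(W_{\pm}(\tau_{b},\omega))-f(W_{\pm}(\tau_{a},\omega))$ as a telescoping sum that moves one coordinate at a time (0, then 1, then 2, then 3, each other coordinate held fixed), on each one-coordinate partial increment only a single 1D WP drives the change, so the appropriate 1D It\^{o} formula applies: a $\{\mathcal{P}_\tau\}$-direction contributes $+\frac{\lambda^{2}}{2}(\partial_{\mu})^{2}f\,d\tau$ and a $\{\mathcal{F}_\tau\}$-direction contributes $-\frac{\lambda^{2}}{2}(\partial_{\mu})^{2}f\,d\tau$. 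The linear stochastic terms combine into the single vector It\^{o} integral $\int_{\tau_a}^{\tau_b}\partial_{\mu}f\,dW_{\pm}^{\mu}$.

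Next, I would substitute the direction conventions fixed in the definition of $\{\mathscr{P}_{\tau}\}$ and $\{\mathscr{F}_{\tau}\}$. For $W_{+}$ the time component is $\{\mathcal{F}_\tau\}$-prog.\ and the three spatial components are $\{\mathcal{P}_\tau\}$-prog., so the It\^{o} correction collapses to
\begin{equation*}
-\tfrac{\lambda^{2}}{2}(\partial_{0})^{2}f+\tfrac{\lambda^{2}}{2}\sum_{i=1}^{3}(\partial_{i})^{2}f=-\tfrac{\lambda^{2}}{2}\,\partial_{\mu}\partial^{\mu}f,
\end{equation*}
using $\partial_{\mu}\partial^{\mu}=(\partial_{0})^{2}-\sum_{i}(\partial_{i})^{2}$ under $g=(+,-,-,-)$. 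For $W_{-}$ the time/space roles are exchanged, which flips every sign and produces $+\frac{\lambda^{2}}{2}\partial_{\mu}\partial^{\mu}f$. Both cases are therefore captured by the $\mp$ in the statement.

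The main obstacle will be the bookkeeping of the four-step telescoping decomposition together with a rigorous justification that the off-diagonal terms $\int\partial_{\mu}\partial_{\nu}f\,dW_{\pm}^{\mu}\,dW_{\pm}^{\nu}$ with $\mu\neq\nu$ vanish almost surely. Because the four coordinate WPs are built on the product probability space and therefore independent, and because the forward and backward stochastic integrals are defined as $L^{2}$-limits of the simple-function approximations recalled earlier in this appendix, the mixed products converge to zero in $L^{2}$ by the It\^{o} isometry applied on each coordinate separately. Checking this cleanly requires invoking the forward-backward composition theorem already provided in 2D and extending it inductively to 4D; once that is in place, the almost sure identity in the statement follows by assembling the four one-dimensional It\^{o} increments along the telescoping chain.
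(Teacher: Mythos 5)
Your proposal is correct and follows essentially the same route the paper takes: the paper justifies this theorem only implicitly, via the two-coordinate forward--backward composition and its one-coordinate-at-a-time telescoping decomposition in the immediately preceding subsubsection, which you extend to four coordinates exactly as intended, with the $\{\mathcal{F}_\tau\}$-direction of the time component flipping the sign of its second-order correction so that the d'Alembertian $\partial_\mu\partial^\mu$ emerges under $g=(+,-,-,-)$. Your added care about the vanishing of the off-diagonal quadratic variations via independence on the product space and the It\^{o} isometry is a reasonable filling-in of a step the paper leaves tacit.
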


Where, we emphasize that $\partial_{\mu}\partial^{\mu}f(W_{\pm}(\tau,\omega))$
appears in the It\^{o} formula. 

\subsection{D-prog.\ $\hat{x}(\circ,\bullet)$}

Then the stochastic differential equation
\[
d_{\pm}\hat{x}^{\mu}(\tau,\omega)=\mathcal{V}_{\pm}^{\mu}(\hat{x}(\tau,\omega))d\tau+\lambda\times dW_{\pm}^{\mu}(\tau,\omega)
\]
is defined as a relativistic kinematics of a stochastic scalar electron.
Let us follow the construction by Nelson \cite{Nelson(2001_book)},
however, it has to be on $(\mathbb{A}^{4}(\mathbb{V_{\mathrm{M}}^{\mathrm{4}}},g),\mathscr{B}(\mathbb{A}^{4}(\mathbb{V_{\mathrm{M}}^{\mathrm{4}}},g)),\mu)$.
\begin{defn}[(R0)-process]
\begin{leftbar}For $\left(\mathit{\Omega},\mathcal{F},\mathscr{P}\right)$,
a $\mathscr{B}(\mathbb{R})\times\mathcal{F}/\mathscr{B}(\mathbb{A}^{4}(\mathbb{V_{\mathrm{M}}^{\mathrm{4}}},g))$-measurable
$\hat{x}(\circ,\bullet)$ is a 4D (R0)-process if each of $\{\varphi_{\mathbb{A}^{4}(\mathbb{V_{\mathrm{M}}^{\mathrm{4}}},g)}^{\mu}\circ\hat{x}(\tau,\bullet)\}_{\mu=0,1,2,3}$
belongs to $L^{1}(\varOmega,\mathscr{P})$ for all $\tau$ and the
mapping $\tau\mapsto\hat{x}(\tau,\omega)$ is almost surely continuous.\end{leftbar}
\end{defn}

By employing $\mathbb{L}_{T}^{p}(E)$  as a family of $\mathscr{B}(\mathbb{R})\times\mathcal{F}/\mathscr{B}(E)$-measurable
mappings for a topological space $E$, let us introduce $\mathcal{L}_{\mathrm{loc}}^{p}(\{\mathscr{P}_{\tau}\};E)$
and $\mathcal{L}_{\mathrm{loc}}^{p}(\{\mathscr{F}_{\tau}\};E)$:
\begin{align*}
\mathbb{L}_{T}^{p}(E) & \coloneqq\left\{ \hat{X}(\circ,\bullet):\mathbb{R}\times\varOmega\rightarrow E\left|\begin{gathered}\tau\rightarrow\hat{X}(\tau,\omega)\mathrm{\,is\,continuous},\\
\sum_{A}\int_{T\subset\mathbb{R}}|\varphi_{E}^{A}\circ\hat{X}(\tau,\omega)|^{p}d\tau<\infty\,\mathrm{a.s.}
\end{gathered}
\right.\right\} 
\end{align*}
\begin{align*}
\mathcal{L}_{\mathrm{loc}}^{p}(\{\mathscr{P}_{\tau}\};E) & \coloneqq\left\{ \hat{X}(\circ,\bullet)\in\mathbb{L}_{[\tau_{1},\tau_{2}]}^{p}(E)\left|\begin{gathered}\forall\tau_{1}\leq\forall\tau_{2}\in\mathbb{R},\,\begin{gathered}\hat{X}(\circ,\bullet)\mathrm{\,is\,}\{\mathscr{P}_{\tau}\}\mathchar`-\mathrm{adapted}.\end{gathered}
\end{gathered}
\right.\right\} 
\end{align*}
\begin{align*}
\mathcal{L}_{\mathrm{loc}}^{p}(\{\mathscr{F}_{\tau}\};E) & \coloneqq\left\{ \hat{X}(\circ,\bullet)\in\mathbb{L}_{[\tau_{1},\tau_{2}]}^{p}(E),\left|\begin{gathered}\forall\tau_{1}\leq\forall\tau_{2}\in\mathbb{R},\,\hat{X}(\circ,\bullet)\mathrm{\,is\,}\{\mathscr{F}_{\tau}\}\mathchar`-\mathrm{adapted.}\end{gathered}
\right.\right\} 
\end{align*}
\begin{defn}[$\{\mathcal{\mathscr{P}}_{\tau}\}$-prog.\ and $\{\mathcal{\mathscr{F}}_{\tau}\}$-prog.]
\begin{leftbar}For all $\sigma\leq\tau\in\mathbb{R}$, let $\{\mathcal{\mathscr{P}}_{\tau}\}$-prog.\ on
$E$ be a $\mathscr{B}([\sigma,\tau])\times\mathcal{\mathscr{P}}_{\tau}/\mathscr{B}(E)$-measurable
process and $\{\mathcal{\mathscr{F}}_{\tau}\}$-prog.\ on $E$ be
a $\mathscr{B}([\sigma,\tau])\times\mathcal{\mathscr{F}}_{\tau}/\mathscr{B}(E)$-measurable
process.\end{leftbar}
\end{defn}

For the later discussion, $\mathring{\epsilon}$ is defined as $\mathring{\epsilon}=1$
when a component $\hat{x}^{\mu}(\circ,\bullet)=\varphi_{\mathbb{A}^{4}(\mathbb{V_{\mathrm{M}}^{\mathrm{4}}},g)}^{\mu}\circ\hat{x}(\circ,\bullet)$
as a 1D stochastic process is $\{\mathcal{P}_{\tau}\}$-adapted and
$\mathring{\epsilon}=-1$ if $\hat{x}^{\mu}(\circ,\bullet)$ is $\{\mathcal{F}_{\tau}\}$-adapted.
Namely for $\{V^{\mu}(\circ,\omega)\}_{\mu=0,1,2,3}$, let us regard
the above as follows:
\[
\mathbb{E}\llbracket V((\tau,\omega))|\mathcal{\mathscr{P}}_{\tau}\rrbracket\coloneqq\left\{ \begin{array}{c}
\begin{gathered}\mathbb{E}\llbracket V^{0}((\circ,\omega))|\mathcal{F}_{\tau}\rrbracket(\omega)\end{gathered}
\\
\begin{gathered}\mathbb{E}\llbracket V^{i=1,2,3}((\circ,\omega))|\mathcal{P}_{\tau}\rrbracket(\omega)\end{gathered}
\end{array}\right.
\]
\[
\mathbb{E}\llbracket V((\tau,\omega))|\mathcal{\mathscr{F}}_{\tau}\rrbracket\coloneqq\left\{ \begin{array}{c}
\begin{gathered}\mathbb{E}\llbracket V^{0}((\circ,\omega))|\mathcal{P}_{\tau}\rrbracket(\omega)\end{gathered}
\\
\begin{gathered}\mathbb{E}\llbracket V^{i=1,2,3}((\circ,\omega))|\mathcal{F}_{\tau}\rrbracket(\omega)\end{gathered}
\end{array}\right.
\]
The mean derivatives (see Eq.(\ref{eq: mean-deriv})) are mathematically
introduced by those ideas:
\begin{defn}[(R1)-process]
\begin{leftbar}If $\hat{x}(\circ,\bullet)$ is an (R0)-process and
a following $\mathcal{V}_{+}(\hat{x}(\circ,\bullet))\in\mathcal{L}_{\mathrm{loc}}^{1}(\{\mathscr{P}_{\tau}\};\mathbb{V}_{\mathrm{M}}^{\mathrm{4}})$
exists, $\hat{x}(\circ,\bullet)$ is an (R1)-process.
\begin{alignat}{1}
\mathcal{V}_{+}(\hat{x}(\tau,\omega)) & \coloneqq\underset{\delta t\rightarrow0+}{\lim}\mathbb{E}\left\llbracket \left.\frac{\hat{x}(\tau+\mathring{\epsilon}\times\delta\tau,\bullet)-\hat{x}(\tau,\bullet)}{\mathring{\epsilon}\delta\tau}\right|\mathcal{\mathscr{P}}_{\tau}\right\rrbracket (\omega)
\end{alignat}
\end{leftbar}
\end{defn}

The each components of $\mathcal{V}_{+}(\hat{x}(\tau,\omega))$ are
interpreted as below:
\begin{align*}
\mathcal{V}_{+}^{0}(\hat{x}(\tau,\omega)) & =\underset{\delta t\rightarrow0+}{\lim}\mathbb{E}\left\llbracket \left.\frac{\hat{x}^{0}(\tau,\bullet)-\hat{x}^{0}(\tau-\delta\tau,\bullet)}{\delta\tau}\right|\mathcal{F}_{\tau}\right\rrbracket (\omega)
\end{align*}
\begin{align*}
\mathcal{V}_{+}^{i=1,2,3}(\hat{x}(\tau,\omega)) & =\underset{\delta t\rightarrow0+}{\lim}\mathbb{E}\left\llbracket \left.\frac{\hat{x}^{i}(\tau+\delta\tau,\bullet)-\hat{x}^{i}(\tau,\bullet)}{\delta\tau}\right|\mathcal{P}_{\tau}\right\rrbracket (\omega)
\end{align*}
\begin{defn}[(S1)-process]
\begin{leftbar}If $\hat{x}(\circ,\bullet)$ is an (R1)-process and
a following $\mathcal{V}_{-}(\hat{x}(\circ,\bullet))\in\mathcal{L}_{\mathrm{loc}}^{1}(\{\mathscr{F}_{\tau}\};\mathbb{V}_{\mathrm{M}}^{\mathrm{4}})$
exists, $\hat{x}(\circ,\bullet)$ is named an (S1)-process. 
\begin{align}
\mathcal{V}_{-}(\hat{x}(\tau,\omega)) & \coloneqq\underset{\delta t\rightarrow0+}{\lim}\mathbb{E}\left\llbracket \left.\frac{\hat{x}(\tau+\mathring{\epsilon}\times\delta\tau,\bullet)-\hat{x}(\tau,\bullet)}{\mathring{\epsilon}\delta\tau}\right|\mathscr{F_{\tau}}\right\rrbracket (\omega)
\end{align}
\end{leftbar}
\end{defn}

The definition of an (S1)-process declare that this is $\{\mathscr{P}_{\tau}\}$-prog.\ and
$\{\mathscr{F}_{\tau}\}$-prog. Therefore, an (S1)-process provides
us the form of the stochastic integral on $\tau_{a}\leq\tau_{b}$:
\begin{align}
\hat{x}^{\mu}(\tau_{b},\omega)-\hat{x}^{\mu}(\tau_{a},\omega) & =\int_{\tau_{a}}^{\tau_{b}}d\tau'\mathcal{V}_{+}^{\mu}(\hat{x}(\tau',\omega))+\int_{\tau_{a}}^{\tau_{b}}dy_{+}^{\mu}(\tau',\omega)\label{eq: S1-a}\\
 & =\int_{\tau_{a}}^{\tau_{b}}d\tau'\mathcal{V}_{-}^{\mu}(\hat{x}(\tau',\omega))+\int_{\tau_{a}}^{\tau_{b}}dy_{-}^{\mu}(\tau',\omega)\label{eq: S1-b}
\end{align}
Where, $y_{+}(\circ,\bullet)$ and $y_{-}(\circ,\bullet)$ of martingales
satisfy below:
\begin{defn}[(R2)-process]
\begin{leftbar}When $\hat{x}(\circ,\bullet)$ is an (R1)-process
and let $y_{+}(\circ,\bullet)$ be its $\{\mathscr{P}_{\tau}\}$-martingale
part such that  $y_{+}(\tau+\mathring{\epsilon}\times\delta\tau,\bullet)-y_{+}(\tau,\bullet)\in\mathcal{L}_{\mathrm{loc}}^{2}(\{\mathscr{P}_{\tau}\};\mathbb{V}_{\mathrm{M}}^{\mathrm{4}})$,
then, $\hat{x}(\circ,\bullet)$ is named an (R2)-process if 
\begin{equation}
\mathbb{E}\left\llbracket \left.y_{+}(\tau+\mathring{\epsilon}\times\delta\tau,\bullet)-y_{+}(\tau,\bullet)\right|\mathcal{\mathscr{P}}_{\tau}\right\rrbracket (\omega)=0
\end{equation}
and a following $\sigma_{+}^{2}(\tau,\bullet)\in\mathcal{L}_{\mathrm{loc}}^{1}(\{\mathscr{P}_{\tau}\};\mathbb{V}_{\mathrm{M}}^{\mathrm{4}}\otimes\mathbb{V}_{\mathrm{M}}^{\mathrm{4}})$
exists such as $\tau\mapsto\sigma_{+}^{2}(\tau,\omega)$ is continuous:
\begin{equation}
\sigma_{+}^{2}(\tau,\omega)\coloneqq\underset{\delta t\rightarrow0+}{\lim}\mathbb{E}\left\llbracket \left.\frac{[y_{+}(\tau+\mathring{\epsilon}\times\delta\tau,\bullet)-y_{+}(\tau,\bullet)]\otimes[y_{+}(\tau+\mathring{\epsilon}\times\delta\tau,\bullet)-y_{+}(\tau,\bullet)]}{\delta\tau}\right|\mathcal{\mathscr{P}}_{\tau}\right\rrbracket (\omega)
\end{equation}
\end{leftbar}
\end{defn}

\begin{defn}[(S2)-process]
\begin{leftbar}When $\hat{x}(\circ,\bullet)$ is an (R2) and (S1)-process
and let $y_{-}(\circ,\bullet)$ be its $\{\mathscr{F}_{\tau}\}$-martingale
part $y_{-}(\tau+\mathring{\epsilon}\times\delta\tau,\bullet)-y_{-}(\tau,\bullet)\in\mathcal{L}_{\mathrm{loc}}^{2}(\{\mathscr{F}_{\tau}\};\mathbb{V}_{\mathrm{M}}^{\mathrm{4}})$,
then, $\hat{x}(\circ,\bullet)$ is called an (S2)-process if 
\begin{equation}
\mathbb{E}\left\llbracket \left.y_{-}(\tau+\mathring{\epsilon}\times\delta\tau,\bullet)-y_{-}(\tau,\bullet)\right|\mathcal{\mathscr{F}}_{\tau}\right\rrbracket (\omega)=0\,,
\end{equation}
and a following $\sigma_{-}^{2}(\tau,\bullet)\in\mathcal{L}_{\mathrm{loc}}^{1}(\{\mathscr{F}_{\tau}\};\mathbb{V}_{\mathrm{M}}^{\mathrm{4}}\otimes\mathbb{V}_{\mathrm{M}}^{\mathrm{4}})$
exists such that $\tau\mapsto\sigma_{-}^{2}(\tau,\omega)$ is continuous:
\begin{equation}
\sigma_{-}^{2}(\tau,\omega)\coloneqq\underset{\delta t\rightarrow0+}{\lim}\mathbb{E}\left\llbracket \left.\frac{[y_{-}(\tau+\mathring{\epsilon}\times\delta\tau,\bullet)-y_{-}(\tau,\bullet)]\otimes[y_{-}(\tau+\mathring{\epsilon}\times\delta\tau,\bullet)-y_{-}(\tau,\bullet)]}{\delta\tau}\right|\mathcal{\mathscr{F}}_{\tau}\right\rrbracket (\omega)
\end{equation}
\end{leftbar}
\end{defn}

\begin{defn}[(R3)-process]
\begin{leftbar}If $\hat{x}(\circ,\bullet)$ is an (R2)-process and
$\det\sigma_{+}^{2}(\tau,\omega)>0$ is almost surely satisfied for
all $\tau\in\mathbb{R}$, then, $\hat{x}(\circ,\bullet)$ is named
an (R3)-process.\end{leftbar}
\end{defn}

\begin{defn}[(S3)-process]
\begin{leftbar}If $\hat{x}(\circ,\bullet)$ is an (R3) and (S2)-process,
and $\det\sigma_{-}^{2}(\tau,\omega)>0$ is almost surely satisfied
for all $\tau\in\mathbb{R}$, then, $\hat{x}(\circ,\bullet)$ is called
an(S3)-process.\end{leftbar}
\end{defn}

Where, $y_{\pm}(\tau,\omega)\coloneqq\lambda\times W_{\pm}(\tau,\omega)$
for $\lambda>0$ satisfies the above definition of the (S3) process,
i.e.,
\begin{align}
\left.\det\sigma_{\pm}(\tau,\omega)\right|_{y_{\pm}(\tau,\omega)=\lambda\times W_{\pm}(\tau,\omega)} & =4\times\lambda^{2}>0.
\end{align}

\begin{defn}[D-prog.\  $\hat{x}(\circ,\bullet)$]
\begin{leftbar}\label{D-progressive}An (S3)-process $\hat{x}(\circ,\bullet)$
on $\left(\mathit{\Omega},\mathcal{F},\mathscr{P}\right)$ is named
``D-prog.'' if $y_{\pm}(\circ,\bullet)\coloneqq\lambda\times W_{\pm}(\circ,\bullet)$
w.r.t.\ $\lambda>0$. $\hat{x}(\circ,\bullet)$ of a D-prog.\  is
given by the following stochastic differential equation:
\begin{equation}
\boxed{\ensuremath{d\hat{x}^{\mu}(\tau,\omega)=\mathcal{V}_{\pm}^{\mu}(\hat{x}(\tau,\omega))d\tau+\lambda\times dW_{\pm}^{\mu}(\tau,\omega)}}\label{eq:D-prog}
\end{equation}
\end{leftbar} 
\end{defn}

For $\tau_{a}\leq\tau_{b}$, $\hat{x}(\circ,\bullet)$ of D-prog.\ by
Eq.(\ref{eq:D-prog}) is regarded as the symbolic expression w.r.t.\ the
following integral: 
\begin{align}
\hat{x}^{\mu}(\tau_{b},\omega)-\hat{x}^{\mu}(\tau_{a},\omega) & =\int_{\tau_{a}}^{\tau_{b}}d\tau'\mathcal{V}_{+}^{\mu}(\hat{x}(\tau',\omega))+\int_{\tau_{a}}^{\tau_{b}}dW_{+}^{\mu}(\tau',\omega)\label{eq: Ito-path1}\\
 & =\int_{\tau_{a}}^{\tau_{b}}d\tau'\mathcal{V}_{-}^{\mu}(\hat{x}(\tau',\omega))+\int_{\tau_{a}}^{\tau_{b}}dW_{-}^{\mu}(\tau',\omega)\label{eq: Ito-path2}
\end{align}
Let $d_{\pm}\hat{x}^{\mu}(\tau,\omega)$ be defined by $\int_{\tau}^{\tau+\epsilon\mathring{\epsilon}\times\delta\tau}d_{\epsilon}\hat{x}^{\mu}(\tau',\omega)\coloneqq\hat{x}^{\mu}(\tau+\epsilon\mathring{\epsilon}\times\delta\tau,\omega)-\hat{x}^{\mu}(\tau,\omega)$
with $\epsilon=\pm$ for $d_{\pm}$. Since $\hat{x}(\tau,\bullet)$
is $\mathcal{\mathscr{P}}_{\tau}/\mathscr{B}(\mathbb{A}^{4}(\mathbb{V_{\mathrm{M}}^{\mathrm{4}}},g))$
and $\mathcal{\mathscr{F}}_{\tau}/\mathscr{B}(\mathbb{A}^{4}(\mathbb{V_{\mathrm{M}}^{\mathrm{4}}},g))$-measurable
for all $\tau$, the following is imposed:
\begin{thm}
\begin{leftbar}\label{D-progressive_2}A D-progressive $\hat{x}(\circ,\bullet)$
is continuous and $\{\mathcal{\mathscr{P}}_{\tau}\cap\mathcal{\mathscr{F}}_{\tau}\}$-adapted.\end{leftbar} 
\end{thm}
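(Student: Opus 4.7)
The plan is to unpack the tower of definitions (R0)--(R1)--(R2)--(R3) and (S1)--(S2)--(S3) that culminates in {\bf Definition \ref{D-progressive}} and to observe that a D-prog.\ $\hat{x}(\circ,\bullet)$ is an (S3)-process, hence simultaneously carries a $\{\mathscr{P}_{\tau}\}$-structure inherited through (S3)$\Rightarrow$(R3)$\Rightarrow$(R2)$\Rightarrow$(R1)$\Rightarrow$(R0) and a $\{\mathscr{F}_{\tau}\}$-structure inherited through (S3)$\Rightarrow$(S2)$\Rightarrow$(S1). The theorem should then fall out by combining these two inheritances.

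First I would dispose of continuity: every (R1)-process is by definition an (R0)-process, and the (R0) axiom directly asserts almost-sure continuity of $\tau\mapsto\hat{x}(\tau,\omega)$, so nothing further is required. Second, for adaptedness, I would extract $\{\mathscr{P}_{\tau}\}$-measurability of $\hat{x}(\tau,\bullet)$ itself from the (S1)-representation (\ref{eq: Ito-path1}): the forward drift $\mathcal{V}_{+}$ lies in $\mathcal{L}_{\mathrm{loc}}^{1}(\{\mathscr{P}_{\tau}\};\mathbb{V}_{\mathrm{M}}^{\mathrm{4}})$, the It\^{o} integral $\int_{\tau_{a}}^{\tau}dW_{+}^{\mu}(\tau',\omega)$ is $\mathscr{P}_{\tau}$-measurable because $W_{+}$ is a $\{\mathscr{P}_{\tau}\}$-WP (cf.\ {\bf Definition \ref{P-Wiener}} applied componentwise), and the initial datum is adapted, so the right-hand side of (\ref{eq: Ito-path1}) is $\mathscr{P}_{\tau}$-measurable. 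The symmetric argument using (\ref{eq: Ito-path2}) together with the $\{\mathscr{F}_{\tau}\}$-WP $W_{-}$ yields $\mathscr{F}_{\tau}$-measurability. Consequently, for every Borel set $A\in\mathscr{B}(\mathbb{A}^{4}(\mathbb{V}_{\mathrm{M}}^{4},g))$ the preimage $\hat{x}(\tau,\bullet)^{-1}(A)$ lies in both $\mathscr{P}_{\tau}$ and $\mathscr{F}_{\tau}$, hence in $\mathscr{P}_{\tau}\cap\mathscr{F}_{\tau}$, which is precisely $\{\mathscr{P}_{\tau}\cap\mathscr{F}_{\tau}\}$-adaptedness.

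The main obstacle, in my view, is that $\{\mathscr{P}_{\tau}\}$ and $\{\mathscr{F}_{\tau}\}$ swap the roles of the temporal and spatial components --- $\mathscr{P}_{\tau}=\mathcal{F}_{\tau}\otimes\mathcal{P}_{\tau}\otimes\mathcal{P}_{\tau}\otimes\mathcal{P}_{\tau}$ while $\mathscr{F}_{\tau}=\mathcal{P}_{\tau}\otimes\mathcal{F}_{\tau}\otimes\mathcal{F}_{\tau}\otimes\mathcal{F}_{\tau}$ --- so none of the individual 1D Wiener processes making up $W_{\pm}$ is adapted to both $\mathcal{P}_{\tau}$ and $\mathcal{F}_{\tau}$, and the naive assertion ``$\hat{x}$ is adapted to both $\{\mathscr{P}_{\tau}\}$ and $\{\mathscr{F}_{\tau}\}$'' needs real justification. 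I would handle this componentwise, leaning on the forward-backward composition machinery of the appendix: each $\hat{x}^{\mu}(\tau,\bullet)$ is checked to be measurable with respect to the correct 1D sub-$\sigma$-algebra on its axis (the $\mathring{\epsilon}$-sign selecting which of (\ref{eq: Ito-path1}) or (\ref{eq: Ito-path2}) to apply to that component), and the tensor-product structure of $\mathscr{P}_{\tau}$ and $\mathscr{F}_{\tau}$ then lifts the four componentwise measurabilities to the joint measurability of $\hat{x}(\tau,\bullet)$ claimed in the theorem.
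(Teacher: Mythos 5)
Your proposal is correct and follows essentially the same route as the paper, which offers only the one-line justification that $\hat{x}(\tau,\bullet)$ is both $\mathscr{P}_{\tau}/\mathscr{B}(\mathbb{A}^{4}(\mathbb{V}_{\mathrm{M}}^{4},g))$- and $\mathscr{F}_{\tau}/\mathscr{B}(\mathbb{A}^{4}(\mathbb{V}_{\mathrm{M}}^{4},g))$-measurable for all $\tau$ (continuity being already part of the (R0) axiom). Your componentwise unpacking via the two integral representations (\ref{eq: Ito-path1})--(\ref{eq: Ito-path2}) and the $\mathring{\epsilon}$-convention simply supplies the detail the paper leaves implicit in its declaration that an (S1)-process is both $\{\mathscr{P}_{\tau}\}$-prog.\ and $\{\mathscr{F}_{\tau}\}$-prog.
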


Finally, the construction of our Brownian and relativistic kinematics
is mathematically completed by defining $\lambda$. The following
conjecture is demonstrated by Eqs.(\ref{eq: EOM_St}) of Issue-$\langle\mathrm{A}\rangle$
in the main body.
\begin{conjecture}
\begin{leftbar}\label{conj_kinematics}A D-prog.\ $\hat{x}(\circ,\bullet)$
is a trajectory of a scalar electron satisfying a Klein-Gordon equation
when $\lambda=\sqrt{\hbar/m_{0}}$.\end{leftbar} 
\end{conjecture}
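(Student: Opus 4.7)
The plan is to show that under the identification $\mathcal{V}^\mu = (1/m_0)(i\hbar\partial^\mu\ln\phi + eA^\mu)$ of Eq.(\ref{eq: Comp-V}), the requirement that Eq.(\ref{eq:D-prog}) generates a trajectory consistent with the Nottale-type dynamics Eq.(\ref{eq: EOM_St}) forces \emph{both} the diffusion parameter to equal $\sqrt{\hbar/m_0}$ \emph{and} the wave function $\phi$ to obey the KG equation (\ref{eq: KG eq}). The essential trick is that $\lambda$ appears only through the drift operator $\hat{\mathcal{V}}^\mu = \mathcal{V}^\mu + i\lambda^2/2 \times \partial^\mu$, which itself stems from the $\pm \lambda^2/2\times(-g^{\mu\nu})\partial_\mu\partial_\nu$ in the 4D It\^o formula (\ref{eq: Ito formula}); on the KG side the analogous role is played by $\hbar$, so matching the two fixes the ratio $\lambda^2 : \hbar$.

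First I would substitute Eq.(\ref{eq: Comp-V}) into Eq.(\ref{eq: EOM_St}). Equality of mixed partials of $\ln\phi$ immediately gives the identity $m_0(\partial_\alpha \mathcal{V}^\mu - \partial^\mu\mathcal{V}_\alpha) = eF_\alpha{}^\mu$, so that
\begin{equation}
\mathcal{V}^\alpha\partial_\alpha\mathcal{V}^\mu = \tfrac{1}{2}\partial^\mu(\mathcal{V}_\alpha\mathcal{V}^\alpha) - \tfrac{e}{m_0}\mathcal{V}_\nu F^{\mu\nu}.
\end{equation}
The Lorentz term on the right cancels the $-e\mathcal{V}_\nu F^{\mu\nu}$ in the dynamics. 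Differentiating the same identity once more and using the Lorenz gauge (so that $\partial_\nu F^{\mu\nu} = -\partial^\alpha\partial_\alpha A^\mu$) yields $\partial^\alpha\partial_\alpha \mathcal{V}^\mu = \partial^\mu\partial_\alpha\mathcal{V}^\alpha - (e/m_0)\partial_\nu F^{\mu\nu}$, which makes the $A^\mu$-pieces from $\hat{\mathcal{V}}_\nu F^{\mu\nu}$ and from $\mathfrak{D}_\tau \mathcal{V}^\mu$ cancel as well. What remains is
\begin{equation}
\tfrac{1}{2}\partial^\mu\!\left[m_0\,\mathcal{V}_\alpha\mathcal{V}^\alpha + i\lambda^2 m_0 \,\partial_\alpha\mathcal{V}^\alpha\right] = 0.
\end{equation}

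Next I would compare this with the direct computation of $(i\hbar\mathfrak{D}_\nu)(i\hbar\mathfrak{D}^\nu)\phi/\phi$. Using $(i\hbar\partial^\nu + eA^\nu)\phi = m_0 \phi\, \mathcal{V}^\nu$ and then applying $(i\hbar\partial_\nu + eA_\nu)$ once more gives $(i\hbar\mathfrak{D})^2\phi/\phi = m_0^2\mathcal{V}_\nu\mathcal{V}^\nu + i\hbar m_0 \partial_\nu\mathcal{V}^\nu$. Therefore the bracket above is $(1/m_0)(i\hbar\mathfrak{D})^2\phi/\phi$ precisely when $\lambda^2 m_0 = \hbar$, i.e.\ $\lambda = \sqrt{\hbar/m_0}$. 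Under this choice the dynamics collapses to $\partial^\mu\!\left[(i\hbar\mathfrak{D}_\nu)(i\hbar\mathfrak{D}^\nu)\phi/(m_0^2\phi)\right] = 0$, whose scalar integration constant is fixed to $c^2$ by the Lorentz invariance Eq.(\ref{eq: Inv.}); this is exactly the KG equation (\ref{eq: KG eq}). Consistency with the FP equation (\ref{eq: Fokker-Planck}) is automatic, because the same $\lambda^2/2$ coefficient appears there.

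The main obstacle I expect is the circularity in the argument: Eq.(\ref{eq: Inv.}) itself was derived \emph{assuming} $\phi$ solves KG (via Eq.(\ref{eq: VV-cal}) and the boundary decay of $p$), so one must carefully establish that the two uses of $\lambda^2 = \hbar/m_0$ — one inside $\hat{\mathcal{V}}^\mu$ in the dynamics, the other inside the It\^o correction producing $(\hbar^2/2m_0^2)\partial_\mu\partial^\mu(\phi^*\phi)/(\phi^*\phi)$ in Eq.(\ref{eq: VV-cal}) — are mutually compatible and do not overdetermine the system. Concretely one would check that on the KG shell the integration constant of the scalar gradient equation is forced, under the normalization $\phi^*\phi = \int d\tau\, p$ of Eq.(\ref{eq: normalization}) and the boundary condition $\partial^\mu p|_{\partial\mathbb{R}^4} = 0$, to take the single value $c^2$, thereby closing the loop without additional assumptions.
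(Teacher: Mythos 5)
Your proposal is correct and follows essentially the same route as the paper's own demonstration: substituting Eq.(\ref{eq: Comp-V}) into the dynamics Eq.(\ref{eq: EOM_St}) to obtain $\tfrac{1}{2}\partial^{\mu}\bigl[(i\hbar\mathfrak{D}_{\nu})(i\hbar\mathfrak{D}^{\nu})\phi/(m_{0}^{2}\phi)\bigr]=0$ and fixing the integration constant to $c^{2}$ via Eq.(\ref{eq: Inv.}), which is exactly the computation in Sect.\ref{kinematics_dynamics4D} that the paper cites as demonstrating the conjecture. You additionally make explicit where $\lambda^{2}=\hbar/m_{0}$ is forced (the matching of the $i\lambda^{2}m_{0}\,\partial_{\alpha}\mathcal{V}^{\alpha}$ term against $i\hbar m_{0}\,\partial_{\alpha}\mathcal{V}^{\alpha}$) and correctly flag the circularity between Eq.(\ref{eq: Inv.}) and the KG assumption, which the paper leaves implicit; note only that the Lorenz-gauge step is unnecessary, since the $\partial_{\nu}F^{\mu\nu}$ contributions cancel identically between $\mathfrak{D}_{\tau}\mathcal{V}^{\mu}$ and $\hat{\mathcal{V}}_{\nu}F^{\mu\nu}$.
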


$\hat{x}(\circ,\bullet)$ of a D-prog.\ imposes the following It\^{o}
formula.
\begin{thm}[It\^{o} formula]
\begin{leftbar}\label{Ito formula}Consider a $C^{2}$-function
$f:\mathbb{A}^{4}(\mathbb{V_{\mathrm{M}}^{\mathrm{4}}},g)\rightarrow\mathbb{C}$,
the following It\^{o} formula w.r.t. a D-prog.\ $\hat{x}(\circ,\bullet)$
is found;
\begin{align}
d_{\pm}f(\hat{x}(\tau,\omega)) & =\partial_{\mu}f(\hat{x}(\tau,\omega))d_{\pm}\hat{x}^{\mu}(\tau,\omega)\mp\frac{\lambda^{2}}{2}\partial_{\mu}\partial^{\mu}f(\hat{x}(\tau,\omega))d\tau\,\,\,\,\mathrm{a.s.}
\end{align}
 This is given by the following stochastic integral:
\begin{align}
f(\hat{x}(\tau_{b},\omega)) & -f(\hat{x}(\tau_{a},\omega))\nonumber \\
 & =\int_{\tau_{a}}^{\tau_{b}}d_{\pm}f(\hat{x}(\tau,\omega))\\
 & =\int_{\tau_{a}}^{\tau_{b}}\partial_{\mu}f(\hat{x}(\tau,\omega))d_{\pm}\hat{x}^{\mu}(\tau,\omega)\mp\frac{\lambda^{2}}{2}\int_{\tau_{a}}^{\tau_{b}}\partial_{\mu}\partial^{\mu}f(\hat{x}(\tau,\omega))d\tau\,\,\,\,\mathrm{a.s.}
\end{align}
\end{leftbar}
\end{thm}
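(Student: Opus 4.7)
The plan is to lift the one-dimensional It\^{o} formulas established earlier in this appendix (the $\{\mathcal{P}_t\}$-version and the $\{\mathcal{F}_t\}$-version) to the four-dimensional setting by way of the forward--backward composition construction, and then identify the sum of the second-derivative corrections with the d'Alembertian $\partial_\mu\partial^\mu$ read off from the Minkowski metric $g=(+,-,-,-)$. I treat the ``$+$'' (i.e.\ $\{\mathscr{P}_\tau\}$-prog.) case first; the ``$-$'' case is symmetric.

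By Definition \ref{D-progressive}, for a $\{\mathscr{P}_\tau\}$-prog.\ $\hat{x}(\circ,\bullet)$ the component $\hat{x}^0(\circ,\bullet)$ is $\{\mathcal{F}_\tau\}$-adapted while $\hat{x}^{1,2,3}(\circ,\bullet)$ are $\{\mathcal{P}_\tau\}$-adapted, and $dW_+^\mu$ has the same split. I would apply the freeze-and-increment decomposition exhibited for the 2D forward--backward composition earlier in this appendix, ordered as (increment $\hat{x}^0$ backward, then increment $\hat{x}^{1,2,3}$ forward), and invoke the 1D It\^{o} formula one coordinate at a time. The diagonal quadratic contributions give $-\tfrac{\lambda^2}{2}\partial_0\partial_0 f\,d\tau$ from the single backward direction and $+\tfrac{\lambda^2}{2}\partial_i\partial_i f\,d\tau$ from each of the three forward directions; the mixed terms $\partial_\mu\partial_\nu f\cdot dW_+^\mu dW_+^\nu$ with $\mu\neq\nu$ vanish thanks to the independence relation $\mathbb{E}\llbracket dW_+^\mu\cdot dW_+^\nu\rrbracket=\delta^{\mu\nu}d\tau$ recorded in the lemma preceding the theorem. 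Since $g=(+,-,-,-)$ gives $\partial_\mu\partial^\mu=\partial_0\partial_0-\sum_{i=1}^3\partial_i\partial_i$, the collected correction equals $-\tfrac{\lambda^2}{2}\partial_\mu\partial^\mu f\,d\tau$, which is exactly the ``$+$'' version of the claim. The ``$-$'' case is obtained by swapping the role of forward and backward in every coordinate, flipping the overall sign of the d'Alembertian correction. Finally, the integral form follows by the standard approximation of the integrands by $\{\mathscr{P}_\tau\}$- (resp.\ $\{\mathscr{F}_\tau\}$-) prog.\ simple functions, exactly as in the 1D versions of the theorem proved earlier.

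The main obstacle will be the measurability bookkeeping in the coordinate-wise application of the one-dimensional It\^{o} formula: when, say, $\hat{x}^0$ is $\{\mathcal{F}_\tau\}$-prog.\ but the integrand $\partial_0 f(\hat{x}(\tau,\omega))$ depends on $\hat{x}^{1,2,3}$ which are $\{\mathcal{P}_\tau\}$-prog., one has to verify that the resulting integrand lies in $\mathcal{L}_{\mathrm{loc}}^{2}(\{\mathscr{F}_\tau\};\mathbb{R})$ before the backward It\^{o} integral against $dW_-^0$ is even defined (and symmetrically for the other coordinates). This is precisely what the ``freeze-and-increment'' trick from the 2D forward--backward composition resolves: at the step in which coordinate $\mu$ is incremented, the other coordinates are held fixed at their values at the appropriate endpoint, so the integrand's measurability matches the direction of diffusion of the coordinate being incremented. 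Extending this scheme from two to four coordinates, together with Theorem \ref{D-progressive_2} ensuring joint continuity and $\{\mathscr{P}_\tau\cap\mathscr{F}_\tau\}$-adaptedness of $\hat{x}(\circ,\bullet)$, is what makes the bookkeeping go through; the remainder is routine 1D stochastic analysis.
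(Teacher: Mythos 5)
Your proposal is correct and follows essentially the same route the paper takes: the paper's own justification of this theorem is precisely the chain consisting of the 1D $\{\mathcal{P}_{t}\}$- and $\{\mathcal{F}_{t}\}$-It\^{o} formulas, the forward--backward composition (whose correction term is $\tfrac{\theta^{2}}{2}(\partial_{x}^{2}-\partial_{y}^{2})f$), and the 4D Wiener-process It\^{o} formula, which you reassemble with the correct sign bookkeeping so that the one backward coordinate $\hat{x}^{0}$ and three forward coordinates $\hat{x}^{1,2,3}$ produce $\mp\tfrac{\lambda^{2}}{2}\partial_{\mu}\partial^{\mu}f$ under $g=(+,-,-,-)$. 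Your additional remarks on the measurability of the coordinate-wise integrands and the passage to the integral form via simple-function approximation are consistent with the constructions in this appendix.
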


\begin{defn}[Mean derivatives $\mathfrak{D}_{\tau}^{\pm}$]
\begin{leftbar}$\mathfrak{D}_{\tau}^{+}$ and $\mathfrak{D}_{\tau}^{-}$
the operators of the mean derivatives are defined by the following:
\begin{align}
\mathfrak{D}_{\tau}^{\pm}f(\hat{x}(\tau,\omega)) & \coloneqq\left[\mathcal{V}_{\pm}^{\mu}(\hat{x}(\tau,\omega))\mp\frac{\lambda^{2}}{2}\partial^{\mu}\right]\partial_{\mu}f(\hat{x}(\tau,\omega))
\end{align}
By using this, 
\begin{align}
\int_{\tau_{a}}^{\tau_{b}}d_{\pm}f(\hat{x}(\tau,\omega)) & =\int_{\tau_{a}}^{\tau_{b}}\mathfrak{D}_{\tau}^{\pm}f(\hat{x}(\tau,\omega))d\tau+\lambda\times\int_{\tau_{a}}^{\tau_{b}}\partial_{\mu}f(\hat{x}(\tau,\omega))dW_{\pm}^{\mu}(\tau,\omega).
\end{align}
\end{leftbar}
\end{defn}

\subsection{Complexification of the evolution of a D-prog.}

In quantum dynamics, complex-valued $\mathcal{V}(\hat{x}(\tau,\omega))$
given by Eq.(\ref{eq: Comp-V}) is required to describe a wave function.
Since $f(\hat{x}(\tau_{b},\omega))-f(\hat{x}(\tau_{a},\omega))=\int_{\tau_{a}}^{\tau_{b}}d_{\pm}f(\hat{x}(\tau,\omega))$,
\begin{align}
f(\hat{x}(\tau_{b},\omega))-f(\hat{x}(\tau_{a},\omega)) & =\frac{1}{2}\int_{\tau_{a}}^{\tau_{b}}[d_{+}f(\hat{x}(\tau,\omega))+d_{-}f(\hat{x}(\tau,\omega))]\\
 & \quad\quad-\frac{i}{2}\int_{\tau_{a}}^{\tau_{b}}[d_{+}f(\hat{x}(\tau,\omega))-d_{-}f(\hat{x}(\tau,\omega))].
\end{align}
is expected. Let us argue this idea for $\mathcal{V}(\hat{x}(\tau,\omega))$.
\begin{defn}[Complex differential $\hat{d}$]
\begin{leftbar}\label{C-Ito formula}For a D-prog.\ $\hat{x}(\circ,\bullet)$
and a $C^{2}$-function $f:\mathbb{A}^{4}(\mathbb{V_{\mathrm{M}}^{\mathrm{4}}},g)\rightarrow\mathbb{R}$,
let $\hat{d}$ be a differential mapping as follows:
\begin{align}
\hat{d}f(\hat{x}(\tau,\omega)) & =\partial_{\mu}f(\hat{x}(\tau,\omega))\hat{d}\hat{x}^{\mu}(\tau,\omega)+\frac{i\lambda^{2}}{2}\partial^{\mu}\partial_{\mu}f(\hat{x}(\tau,\omega))d\tau
\end{align}
When $\int_{\tau_{a}}^{\tau_{b}}\hat{d}f(\hat{x}(\tau,\omega))$ satisfies
\begin{align}
\mathrm{Re}\left\{ \int_{\tau_{a}}^{\tau_{b}}\hat{d}f(\hat{x}(\tau,\omega))\right\} \mp\mathrm{Im}\left\{ \int_{\tau_{a}}^{\tau_{b}}\hat{d}f(\hat{x}(\tau,\omega))\right\}  & =\int_{\tau_{a}}^{\tau_{b}}d_{\pm}f(\hat{x}(\tau,\omega)),
\end{align}
$\hat{d}$ indicates $\hat{d}=(d_{+}+d_{-})/2-i(d_{+}-d_{-})/2$ symbolically.
Therefore, 
\begin{align}
f(\hat{x}(\tau_{b},\omega)) & =f(\hat{x}(\tau_{a},\omega))+\int_{\tau_{a}}^{\tau_{b}}\hat{d}f(\hat{x}(\tau,\omega))\,\,\,\,\mathrm{a.s.}
\end{align}
\end{leftbar}
\end{defn}

\
\begin{defn}[Mean derivative $\mathfrak{D}_{\tau}$]
\begin{leftbar} The operator $\mathfrak{D}_{\tau}$ is defined by
the following:
\begin{align}
\mathfrak{D}_{\tau}f(\hat{x}(\tau,\omega)) & \coloneqq\left[\mathcal{V}^{\mu}(\hat{x}(\tau,\omega))+\frac{i\lambda^{2}}{2}\partial^{\mu}\right]\partial_{\mu}f(\hat{x}(\tau,\omega))
\end{align}
By employing this operator
\begin{align}
\int_{\tau_{a}}^{\tau_{b}}\hat{d}f(\hat{x}(\tau,\omega)) & =\int_{\tau_{a}}^{\tau_{b}}\mathfrak{D}_{\tau}f(\hat{x}(\tau,\omega))d\tau+\lambda\times\int_{\tau_{a}}^{\tau_{b}}d_{\pm}f(\hat{x}(\tau,\omega))\hat{d}W^{\mu}(\tau,\omega).
\end{align}
\end{leftbar}
\end{defn}

Hence, $\mathfrak{D}_{\tau}\hat{x}(\tau,\omega)=\mathcal{V}(\hat{x}(\tau,\omega))$.
For $f(\hat{x}(\tau_{b},\omega))=f(\hat{x}(\tau_{a},\omega))+\int_{\tau_{a}}^{\tau_{b}}\hat{d}f(\hat{x}(\tau,\omega))$,
its iteration is 
\begin{multline}
f(\hat{x}(\tau_{b},\omega))=\sum_{n=0}^{\infty}\frac{(\tau_{b}-\tau_{a})^{n}}{n!}\times\mathfrak{D}_{\tau}^{n}f(\hat{x}(\tau_{a},\omega))\\
+\lambda\times\sum_{n=0}^{\infty}\int_{\tau_{a}}^{\tau_{b}}d\tau_{1}\int_{\tau_{a}}^{\tau_{1}}d\tau_{2}\cdots\int_{\tau_{a}}^{\tau_{n-2}}d\tau_{n-1}\int_{\tau_{a}}^{\tau_{n-1}}\hat{d}W^{\mu}(\tau_{n},\omega)\cdot\partial_{\mu}\mathfrak{D}_{\tau}^{n}f(\hat{x}(\tau_{n},\omega)).
\end{multline}
We named it the stochastic-Taylor expansion, however strictly speaking,
the second term in the RHS has to be expanded more at $\tau_{a}$,
too. 

\section{Nelson's partial integral formula\label{APP-B}}

Let us demonstrate Nelson's partial integral formula of Eqs.(\ref{eq: Nelson-partial integral1},\ref{eq: Nelson-partial integral2})
\begin{lem}[Nelson's partial integral formula]
\begin{leftbar}\label{Nelson_partial}For $(\mathbb{A}^{4}(\mathbb{V_{\mathrm{M}}^{\mathrm{4}}},g),\mathscr{B}(\mathbb{A}^{4}(\mathbb{V_{\mathrm{M}}^{\mathrm{4}}},g)),\mu)$,
let $\alpha$ and $\beta$ be the complex-valued and $C^{2}$-local
square integrable functions on $\mathbb{A}^{4}(\mathbb{V_{\mathrm{M}}^{\mathrm{4}}},g)$.
Then, the following partial integral formula is fulfilled:
\begin{multline}
\int_{\tau_{1}}^{\tau_{2}}d\tau\,\mathbb{E}\left\llbracket \mathfrak{D}_{\tau}^{\pm}\alpha_{\mu}(\hat{x}(\tau,\bullet))\cdot\beta^{\mu}(\hat{x}(\tau,\bullet))+\alpha_{\mu}(\hat{x}(\tau,\bullet))\cdot\mathfrak{D}_{\tau}^{\mp}\beta^{\mu}(\hat{x}(\tau,\bullet))\right\rrbracket \\
=\mathbb{E}\left\llbracket \alpha_{\mu}(\hat{x}(\tau_{2},\bullet))\beta^{\mu}(\hat{x}(\tau_{2},\bullet))-\alpha_{\mu}(\hat{x}(\tau_{1},\bullet))\beta^{\mu}(\hat{x}(\tau_{1},\bullet))\right\rrbracket \label{eq: partial int formula original1}
\end{multline}
 Its differential form is
\begin{align}
\frac{d}{d\tau}\mathbb{E}\left\llbracket \alpha_{\mu}(\hat{x}(\tau,\bullet))\beta^{\mu}(\hat{x}(\tau,\bullet))\right\rrbracket  & =\mathbb{E}\left\llbracket \begin{gathered}\mathfrak{D}_{\tau}^{\pm}\alpha_{\mu}(\hat{x}(\tau,\bullet))\cdot\beta^{\mu}(\hat{x}(\tau,\bullet))\\
+\alpha_{\mu}(\hat{x}(\tau,\bullet))\cdot\mathfrak{D}_{\tau}^{\mp}\beta^{\mu}(\hat{x}(\tau,\bullet))
\end{gathered}
\right\rrbracket .\label{eq: partial int formula original2}
\end{align}
\end{leftbar}
\end{lem}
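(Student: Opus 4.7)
The plan is to reduce Eq.(\ref{eq: partial int formula original1}) to its differential form Eq.(\ref{eq: partial int formula original2}), since integrating the latter over $\tau\in[\tau_{1},\tau_{2}]$ immediately recovers the former. To prove the differential form I would work with the density $p(x,\tau)\coloneqq d\mathscr{P}/d^{4}x$: because $\alpha_{\mu},\beta^{\mu}$ are deterministic and the $\tau$-dependence of the expectation enters only through $p$,
\[
\frac{d}{d\tau}\mathbb{E}\llbracket\alpha_{\mu}\beta^{\mu}\rrbracket=\int_{\mathbb{R}^{4}}\alpha_{\mu}(x)\beta^{\mu}(x)\,\partial_{\tau}p(x,\tau)\,d^{4}x.
\]
Substituting the Fokker--Planck equation (\ref{eq: Fokker-Planck}) with the $\mathcal{V}_{+}$ branch for $\partial_{\tau}p$ and integrating by parts twice (boundary terms vanish under $\partial^{\mu}p|_{\partial\mathbb{R}^{4}}=0$, the same hypothesis used in Sect.~\ref{kinematics_dynamics4D}) yields
\[
\frac{d}{d\tau}\mathbb{E}\llbracket\alpha_{\mu}\beta^{\mu}\rrbracket=\int\mathcal{V}_{+}^{\nu}\partial_{\nu}(\alpha_{\mu}\beta^{\mu})\,p\,d^{4}x-\frac{\lambda^{2}}{2}\int\partial_{\nu}\partial^{\nu}(\alpha_{\mu}\beta^{\mu})\,p\,d^{4}x.
\]
Using instead the $\mathcal{V}_{-}$ branch produces the twin identity with $\mathcal{V}_{-}$ replacing $\mathcal{V}_{+}$ and the sign of the Laplacian flipped; these two alternatives will give the two cases $(\pm,\mp)$ of Eq.(\ref{eq: partial int formula original2}).

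The central algebraic step is then to shift the load between the two factors using the osmotic identity (\ref{eq: osmotic pressure1}), in the form $\mathcal{V}_{+}^{\nu}-\mathcal{V}_{-}^{\nu}=-\lambda^{2}\partial^{\nu}\ln p$. I would rewrite $\mathcal{V}_{+}^{\nu}\alpha_{\mu}\partial_{\nu}\beta^{\mu}=\mathcal{V}_{-}^{\nu}\alpha_{\mu}\partial_{\nu}\beta^{\mu}-\lambda^{2}\alpha_{\mu}(\partial_{\nu}\beta^{\mu})(\partial^{\nu}p/p)$ in the above, integrate by parts once more on the last piece, and expand $\partial_{\nu}\partial^{\nu}(\alpha_{\mu}\beta^{\mu})$ by Leibniz. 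The mixed term $\lambda^{2}(\partial^{\nu}\alpha_{\mu})(\partial_{\nu}\beta^{\mu})\,p$ then arises with opposite signs from the two contributions and cancels identically, and the surviving terms reassemble into
\[
\int\bigl[\mathcal{V}_{+}^{\nu}\partial_{\nu}\alpha_{\mu}-\tfrac{\lambda^{2}}{2}\partial^{\nu}\partial_{\nu}\alpha_{\mu}\bigr]\beta^{\mu}p\,d^{4}x+\int\alpha_{\mu}\bigl[\mathcal{V}_{-}^{\nu}\partial_{\nu}\beta^{\mu}+\tfrac{\lambda^{2}}{2}\partial^{\nu}\partial_{\nu}\beta^{\mu}\bigr]p\,d^{4}x,
\]
which is precisely $\mathbb{E}\llbracket\mathfrak{D}_{\tau}^{+}\alpha_{\mu}\cdot\beta^{\mu}+\alpha_{\mu}\cdot\mathfrak{D}_{\tau}^{-}\beta^{\mu}\rrbracket$ by the definitions of the mean derivatives.

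The principal obstacle is bookkeeping rather than analysis: because $\alpha$ and $\beta$ are complex-valued, I must verify that every integration by parts is valid for both real and imaginary parts under the stated $C^{2}$-local square integrability, and that $p$ has enough decay to kill every boundary contribution produced along the way. A more probabilistic alternative would apply Theorem~\ref{Ito formula} to the product $\alpha_{\mu}\beta^{\mu}$ with both the $d_{+}$ and $d_{-}$ decompositions of the D-progressive process and then average out the martingale parts; I prefer the density/Fokker--Planck route here because it sidesteps explicit manipulation of the quadratic covariations $\mathbb{E}\llbracket d_{\pm}\hat{x}^{\mu}\,d_{\pm}\hat{x}^{\nu}\rrbracket$, whose Euclidean $\delta^{\mu\nu}$ rather than Minkowski $g^{\mu\nu}$ structure makes the contraction with $\alpha_{\mu}\beta^{\mu}$ delicate on a process with mixed $\{\mathcal{P}_{\tau}\}/\{\mathcal{F}_{\tau}\}$-progressivity of the components.
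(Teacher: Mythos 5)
Your proposal is correct and takes essentially the same route as the paper: both reduce the statement to an integral against the density $p(x,\tau)$, substitute the Fokker--Planck equation (\ref{eq: Fokker-Planck}), and invoke the osmotic relation $\mathcal{V}_{+}^{\nu}-\mathcal{V}_{-}^{\nu}=-\lambda^{2}\partial^{\nu}\ln p$ together with integrations by parts whose boundary terms vanish. The paper merely organizes the same algebra differently --- it first proves the $\pm$-exchange symmetry of the right-hand side as a total divergence and then applies the summed (continuity-equation) form of Fokker--Planck to produce $\tfrac{1}{2}(\mathfrak{D}_{\tau}^{+}+\mathfrak{D}_{\tau}^{-})$, whereas you work from a single branch and shift the load inline --- but the ingredients and cancellations are identical.
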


\begin{proof}
Confirm the following relation at first.
\begin{multline}
\mathbb{E}\left\llbracket \mathfrak{D}_{\tau}^{+}\alpha_{\mu}(\hat{x}(\tau,\bullet))\cdot\beta^{\mu}(\hat{x}(\tau,\bullet))+\alpha_{\mu}(\hat{x}(\tau,\bullet))\cdot\mathfrak{D}_{\tau}^{-}\beta^{\mu}(\hat{x}(\tau,\bullet))\right\rrbracket \\
=\mathbb{E}\left\llbracket \mathfrak{D}_{\tau}^{-}\alpha_{\mu}(\hat{x}(\tau,\bullet))\cdot\beta^{\mu}(\hat{x}(\tau,\bullet))+\alpha_{\mu}(\hat{x}(\tau,\bullet))\cdot\mathfrak{D}_{\tau}^{+}\beta^{\mu}(\hat{x}(\tau,\bullet))\right\rrbracket \label{eq: D+D-=00003DD-D+}
\end{multline}
since
\begin{align*}
\mathbb{E} & \left\llbracket \mathfrak{D}_{\tau}^{+}\alpha_{\mu}(\hat{x}(\tau,\bullet))\cdot\beta^{\mu}(\hat{x}(\tau,\bullet))+\alpha_{\mu}(\hat{x}(\tau,\bullet))\cdot\mathfrak{D}_{\tau}^{-}\beta^{\mu}(\hat{x}(\tau,\bullet))\right\rrbracket \\
 & \quad-\mathbb{E}\left\llbracket \mathfrak{D}_{\tau}^{-}\alpha_{\mu}(\hat{x}(\tau,\bullet))\cdot\beta^{\mu}(\hat{x}(\tau,\bullet))+\alpha_{\mu}(\hat{x}(\tau,\bullet))\cdot\mathfrak{D}_{\tau}^{+}\beta^{\mu}(\hat{x}(\tau,\bullet))\right\rrbracket \\
 & =-\lambda^{4}\times\int_{\mathbb{A}^{4}(\mathbb{V_{\mathrm{M}}^{\mathrm{4}}},g)}d\mu(x)\partial^{\nu}\left\{ p(x,\tau)\left[\partial_{\nu}\alpha_{\mu}(x)\cdot\beta^{\mu}(x)-\alpha_{\mu}(x)\cdot\partial_{\nu}\beta^{\mu}(x)\right]\right\} \\
 & =0.
\end{align*}
Then by using the Fokker-Planck equation (\ref{eq: Fokker-Planck}),
\begin{align*}
\frac{d}{d\tau}\mathbb{E}\left\llbracket \alpha_{\mu}(\hat{x}(\tau,\bullet))\beta^{\mu}(\hat{x}(\tau,\bullet))\right\rrbracket  & =\frac{1}{2}\times\mathbb{E}\left\llbracket \begin{gathered}(\mathfrak{D}_{\tau}^{+}+\mathfrak{D}_{\tau}^{-})\alpha_{\mu}(\hat{x}(\tau,\bullet))\cdot\beta^{\mu}(\hat{x}(\tau,\bullet))\\
+\alpha_{\mu}(\hat{x}(\tau,\bullet))\cdot(\mathfrak{D}_{\tau}^{+}+\mathfrak{D}_{\tau}^{-})\beta^{\mu}(\hat{x}(\tau,\bullet))
\end{gathered}
\right\rrbracket .
\end{align*}
Where, we employ $\int_{\varOmega}d\mathscr{P}(\omega)=\int_{\mathbb{A}^{4}(\mathbb{V_{\mathrm{M}}^{\mathrm{4}}},g)}p(x,\tau)d\mu(x)$.
By combining it with Eq.(\ref{eq: D+D-=00003DD-D+}), Eq.(\ref{eq: partial int formula original2})
is demonstrated.
\end{proof}
By using the superposition of the above ``$\pm$''-formulas, it
can be switched to Eq.(\ref{eq: Partial int formula1}-\ref{eq: Partial int formula2})
the formula for the complex derivatives of $\mathfrak{D}_{\tau}$
and $\mathfrak{D}_{\tau}^{*}$.
\begin{align}
\frac{d}{d\tau}\mathbb{E}\left\llbracket \alpha_{\mu}(\hat{x}(\tau,\bullet))\beta^{\mu}(\hat{x}(\tau,\bullet))\right\rrbracket  & =\mathbb{E}\left\llbracket \mathfrak{D}_{\tau}\alpha_{\mu}(\hat{x}(\tau,\bullet))\cdot\beta^{\mu}(\hat{x}(\tau,\bullet))+\alpha_{\mu}(\hat{x}(\tau,\bullet))\cdot\mathfrak{D}_{\tau}^{*}\beta^{\mu}(\hat{x}(\tau,\bullet))\right\rrbracket \label{eq: Partial int formula1}\\
 & =\mathbb{E}\left\llbracket \mathfrak{D}_{\tau}^{*}\alpha_{\mu}(\hat{x}(\tau,\bullet))\cdot\beta^{\mu}(\hat{x}(\tau,\bullet))+\alpha_{\mu}(\hat{x}(\tau,\bullet))\cdot\mathfrak{D}_{\tau}\beta^{\mu}(\hat{x}(\tau,\bullet))\right\rrbracket \label{eq: Partial int formula2}
\end{align}


\begin{thebibliography}{10}
\bibitem{RA5}K. Homma, O. Tesileanu, L.D'Alessi, T. Hasebe, A. Ilderton,
T. Moritaka, Y. Nakamiya, K. Seto, and H. Utsunomiya, {\it Combined Laser Gamma Experiments at ELI-NP},
Rom. Rep.Phys. {\bf 68}, Supplement, S233 (2016).

\bibitem{Sarri(2014)}G. Sarri, D. J. Corvan, W. Schumaker, J. M.
Cole, A. Di Piazza, H. Ahmed, C. Harvey, C. H. Keitel, K. Krushelnick,
S. P. D. Mangles, Z. Najmudin, D. Symes, A. G. R. Thomas, M. Yeung,
Z. Zhao, and M. Zepf, {\it Ultrahigh Brilliance Multi-MeV $\gamma$-Ray Beams from Nonlinear Relativistic Thomson Scattering},
Phys. Rev. Lett. {\bf 113}, 224801 (2014).

\bibitem{Cole(2017)}J. M. Cole, K. T. Behm, E. Gerstmayr, T. G. Blackburn,
J. C. Wood, C. D. Baird, M. J. Du, C. Harvey, A. Ilderton, A. S. Joglekar,
K. Krushelnick, S. Kuschel, M. Marklund, P. McKenna, C. D. Murphy,
K. Poder, C. P. Ridgers, G. M. Samarin, G. Sarri, D. R. Symes, A.
G. R. Thomas, J. Warwick, M. Zepf, Z. Najmudin, and S. P. D. Mangles,
{\it Experimental Evidence of Radiation Reaction in the Collision of a High-Intensity Laser Pulse with a Laser-Wakefield Accelerated Electron Beam},
Phys. Rev. X {\bf 8}, 011020 (2018). 

\bibitem{ELI-NP}ELI-NP: https://www.eli-np.ro/

\bibitem{ELI-NP(2017)}D. L. Balabanski, R. Popescu, D. Stutman, K.
A. Tanaka, O. Tesileanu, C. A. Ur, D. Ursescu and N. V. Zamfir, {\it New light in nuclear physics: The extreme light infrastructur},
Europhys. Lett. {\bf 117}, 28001 (2017).

\bibitem{GIST(2017)}J. H. SUNG, H. W. LEE, J. Y. YOO, J. W. YOON,
C. W. LEE, J. M. YANG, Y. J. SON, Y. H. JANG, S. K. LEE, and C. H.
NAM, {\it 4.2PW, 20fs Ti:sapphire laser at 0.1Hz}, Opt. Lett. {\bf 42},
2058 (2017).

\bibitem{Brown-Kibble(1964)}L. L. Brown, and T. W. B. Kibble, {\it Interaction of Intense Laser Beams with Electrons},
Phys Rev. {\bf 133}, A705 (1964).

\bibitem{Nikishov(1964a))}A. I. Nikishov, and V. I. Ritus, {\it QUANTUM PROCESSES IN THE FIELD OF A PLANE ELECTROMAGNETIC WAVE AND IN A CONSTANT FIELD. I},
Zh. Eksp. Teor. Fiz. {\bf 46}, 776 (1963) {[}Sov. Phys. JETP {\bf 19},
529 (1964){]}.

\bibitem{Nikishov(1964b)}A. I. Nikishov, and V. I. Ritus, {\it QUANTUM PROCESSES IN THE FIELD OFA PLANE ELECTROMAGNETIC WAVE AND IN A CONSTANT FIELD},
Zh. Eksp. Teor. Fiz. {\bf 46}, 1768 (1964) {[}Sov. Phys. JETP {\bf 19},
1191 (1964){]}.

\bibitem{Furry(1951)}W. H. Furry, {\it On Bound States and Scattering in Positron Theory},
Phys. Rev. {\bf 85}, 115 (1951).

\bibitem{A.Sokolov(1986)}A. A. Sokolov, and I. M. Ternov, {\it Radiation from Relativistic Electrons},
(American Institute of Physics, translation series, 1986).

\bibitem{I.Sokolov(2010)}I. V. Sokolov, J. A. Nees, V. P. Yanovsky,
N. M. Naumova, and G. A. Mourou, {\it Emission and its back-reaction accompanying electron motion in relativistically strong and QED-strong pulsed laser fields},
Phys. Rev. E {\bf 81}, 036412 (2010).

\bibitem{I.Sokolov(2011a)}I. V. Sokolov, N. M. Naumova, and J. A.
Nees, {\it Numerical Modeling of Radiation-Dominated and QED-Strong Regimes of Laser-Plasma Interaction},
Phys. Plasmas {\bf 18}, 093109 (2011).

\bibitem{Berestetskii-Lifshitz-Pitaevskii}V. B. Berestetskii, E.
M. Lifshitz, and L. P. Pitaevskii, {\it Quantum Electrodynamics}
(Pergamon, Oxford, 1982).

\bibitem{Seto(2015)}K. Seto, {\it Radiation reaction in high-intensity fields},
Prog. Theor. Exp. Phys., {\bf 2015}, 103A01 (2015).

\bibitem{SLAC} C. Bula, K. T. McDonald, E. J. Prebys, C. Bamber,
S. Boege, T. Kotseroglou, A. C. Melissinos, D. D. Meyerhofer, W. Ragg,
D. L. Burke, R. C. Field, G. Horton-Smith, A. C. Odian, J. E. Spencer,
and D. Walz, {\it Observation of Nonlinear Effects in Compton Scattering},
Phys. Rev. Lett. {\bf 76}, 3116 (1996); C. Bamber, S. J. Boege, T.
T. Koffas, T. Kotseroglou, A. C. Melissinos, D. D. Meyerhofer, D.
A. Reis, W. Raggi, C. Bula, K. T. McDonald, E. J. Prebys, D. L. Burke,
R. C. Field, G. Horton-Smith, J. E. Spencer, D. Walz, S. C. Berridge,
W. M. Bugg, K. Shmakov, and A. W. Weidemann, {\it Studies of nonlinear QED in collisions of 46.6 GeV electrons with intense laser pulses},
Phys. Rev. D {\bf 60}, 092004 (1999).

\bibitem{Zakowicz(2005)}S. Zakowicz, {\it Square-Integrable Wave Packets from the Volkov Solutions },
J. Math. Phys. {\bf 46}, 032304 (2005).

\bibitem{Boca-Florescu(2010)}M. Boca, and V. Florescu, {\it The Completeness of Volkov Spinors},
Rom. J. Phys. {\bf 55}, 511 (2010).

\bibitem{Piazza(2014)}A. D. Piazza, {\it Ultrarelativistic Electron States in a General Background Electromagnetic Field},
Phys. Rev. Lett. {\bf 113}, 040402 (2014).

\bibitem{Piazza(2017a)}A. Di Piazza, {\it First-order strong-field QED processes in a tightly focused laser beam},
Phys. Rev. A {\bf 95}, 032121 (2017).

\bibitem{Piazza(2017b)}A. Di Piazza, S. Meuren, M. Tamburini, and
C. H. Keitel, {\it On the validity of the local constant field approximation in nonlinear Compton scattering},
arXiv:1708.08276 (2017).

\bibitem{Dirac(1938)}P. A. M. Dirac, {\it Classical theory of radiating electrons},
Proc. Roy. Soc. A {\bf 167}, 148 (1938).

\bibitem{Nelson(1966a)}E. Nelson, {\it Derivation of the Schr\"{o}dinger Equation from Newtonian Mechanics},
Phys. Rev. {\bf 150}, 1079 (1966).

\bibitem{Nelson(2001_book)}E. Nelson, {\it Dynamical Theory of Brownian Motion}
(Princeton University Press, 2nd Ed., 2001).

\bibitem{Yasue} K. Yasue, {\it Stochastic Calculus of Variations},
J. Func. Ana. {\bf 41}, 327 (1981); K. Yasue, {\it Quantum mechanics and stochastic control theory},
J. Math. Phys. {\bf 22}, 1010 (1981).

\bibitem{Nelson(1985)}E. Nelson, {\it Quantum Fluctuation} (Prinston
Univ. Press, 1985).

\bibitem{Ito(1944)}K. It\^{o}, {\it Stochastic integral}, Proc.
Imp. Acad. Tokyo {\bf 20}, 519 (1944).

\bibitem{Gradiner(2009)}C. Gradiner, {\it Stochastic Methods, A Handbook for the natural and Social Sciences}
(Springer, 4th Ed., 2009).

\bibitem{Nottale(2011)}L. Nottale, {\it Scale Relativity and Fractal Space-time}
(Imperial College Press, 2011). He introduces it in his ``the scale
relativity'' in the different context from this article.

\bibitem{Zastawniak(1990)}T. Zastawniak, {\it A Relativistic Version of Nelson's Stochastic Mechanics},
Europhys. Lett., {\bf 13}, 13 (1990).

\bibitem{Hartemann(2002)}F. V. Hartemann, {\it High-Field Electrodynamics},
(CRC Press LLC, 2002).

\bibitem{Barut-Unal}A. O. Barut, and N. Unal, {\it Generalization of the Lorentz-Dirac equation to include spin},
Phys. Rev. A {\bf 40}, 5404 (1989). 

\bibitem{Ozaki-Sasabe}M. Ozaki, and S. Sasabe, {\it Abraham-Lorentz equation in quantum mechanics},
Phys. Rev. A {\bf 80}, 024102 (2009).

\bibitem{Landau-Lifshitz}L. D. Landau, and E. M. Lifshitz, {\it The Classical Theory of Fields}
(Pergamon, New York, 1994).
\end{thebibliography}
\end{document}